\pdfoutput=1
\documentclass[lettersize,journal,onecolumn,11pt]{IEEEtran}
\usepackage[letterpaper,margin=1in]{geometry}
\usepackage{amsmath,amssymb,amsfonts}
\usepackage{algorithmic}
\usepackage{algorithm}
\usepackage{array}
\usepackage[caption=false,font=footnotesize,labelfont=sf,textfont=sf]{subfig}
\usepackage{textcomp}
\usepackage{stfloats}          %
\usepackage{url}
\usepackage{graphicx}
\usepackage{cite}
\usepackage{booktabs}          %
\usepackage{multirow}
\usepackage{xcolor}
\usepackage{tikz}              %
\usepackage{pifont}            %
\newcommand{\cmark}{\ding{51}}
\newcommand{\xmark}{\ding{55}}
\newcommand{\pmark}{$\circ$}   %

\newtheorem{definition}{Definition}
\newtheorem{proposition}{Proposition}

\usepackage[hidelinks]{hyperref}
\usepackage[capitalise,nameinlink,noabbrev]{cleveref}
\crefname{section}{Section}{Sections}
\Crefname{section}{Section}{Sections}
\crefname{subsection}{Section}{Sections}
\Crefname{subsection}{Section}{Sections}
\crefname{figure}{Fig.}{Figs.}
\Crefname{figure}{Fig.}{Figs.}
\crefname{table}{Table}{Tables}
\Crefname{table}{Table}{Tables}
\crefname{equation}{}{}
\Crefname{equation}{Equation}{Equations}
\creflabelformat{equation}{(#2#1#3)}
\crefname{algorithm}{Algorithm}{Algorithms}
\crefname{definition}{Definition}{Definitions}
\crefname{proposition}{Proposition}{Propositions}
\crefname{lemma}{Lemma}{Lemmas}
\crefname{enumi}{C}{C}         %

\newcommand{\safedse}{SafeDSE}                 %
\newcommand{\desyde}{DeSyDe}
\newcommand{\mzn}{MiniZinc}
\newcommand{\cpsat}{CP-SAT}
\newcommand{\code}[1]{\texttt{#1}}              %
\newcommand{\pat}{\mathit{pat}}
\newcommand{\act}{\mathit{active}}
\newcommand{\pot}{\mathit{pot}}
\newcommand{\tok}{\mathit{tok}}
\newcommand{\Gplus}{G^{+}}
\newcommand{\proc}{\mathit{proc}}              %
\newcommand{\sil}{\mathit{sil}}
\newcommand{\csil}{\mathit{csil}}
\newcommand{\partn}{\mathit{part}}             %
\newcommand{\fcr}{\mathit{fcr}}
\newcommand{\ctype}{\mathit{ctype}}
\newcommand{\silreq}{\mathit{SIL}^{\mathrm{req}}}
\newcommand{\silmax}{\mathit{SIL}^{\max}}

\newcommand{\resBaselineInstances}{9}

\newcommand{\resBaselineNPInfeasible}{0}
\newcommand{\resBaselineNPMaxLoss}{0.0}
\newcommand{\resBaselineNPWorse}{0}

\newcommand{\resBaselinePFCheapInfeasible}{0}
\newcommand{\resBaselinePFCheapMaxLoss}{0.0}
\newcommand{\resBaselinePFCheapWorse}{0}

\newcommand{\resBaselinePFLightInfeasible}{0}
\newcommand{\resBaselinePFLightMaxLoss}{30.8}

\newcommand{\resBaselinePFLightWorse}{2}

\newcommand{\resBaselinePFMaxInfeasible}{0}
\newcommand{\resBaselinePFMaxMaxLoss}{8.8}

\newcommand{\resBaselinePFMaxWorse}{5}

\newcommand{\resBaselinePFMinInfeasible}{0}
\newcommand{\resBaselinePFMinMaxLoss}{0.0}
\newcommand{\resBaselinePFMinWorse}{0}

\newcommand{\resBaselineSLInfeasible}{0}
\newcommand{\resBaselineSLMaxLoss}{0.0}
\newcommand{\resBaselineSLWorse}{0}
\newcommand{\resCertCertified}{136}

\newcommand{\resCertThrAbove}{45}
\newcommand{\resCertThrAboveExact}{45}
\newcommand{\resCertTwoOfTwoDiverse}{365}
\newcommand{\resCertTwoOfTwoRasta}{340}
\newcommand{\resCertTwoOfTwoSobel}{512}
\newcommand{\resCertUnsatRows}{149}
\newcommand{\resCommCPatIdealCost}{3349}

\newcommand{\resCommCPatIdealMu}{235}

\newcommand{\resCommCPatIdealNodes}{26}

\newcommand{\resCommCPatScopeAllCost}{3349}

\newcommand{\resCommCPatScopeAllMu}{279}
\newcommand{\resCommCPatScopeAllMuPct}{19}
\newcommand{\resCommCPatScopeAllNodes}{392}

\newcommand{\resCommCPatSilCoreCost}{3349}

\newcommand{\resCommCPatSilCoreMu}{235}

\newcommand{\resCommCPatSilCoreNodes}{71}

\newcommand{\resCommCPatSilInheritCost}{5246}
\newcommand{\resCommCPatSilInheritCostPct}{57}
\newcommand{\resCommCPatSilInheritMu}{277}

\newcommand{\resCommCPatSilInheritNodes}{71}

\newcommand{\resCommCPatTdmaCost}{3349}

\newcommand{\resCommCPatTdmaMu}{235}

\newcommand{\resCommCPatTdmaNodes}{71}

\newcommand{\resCommPcSobelIdealCost}{1401}

\newcommand{\resCommPcSobelIdealMu}{333}

\newcommand{\resCommPcSobelIdealNodes}{15}

\newcommand{\resCommPcSobelScopeAllCost}{1401}

\newcommand{\resCommPcSobelScopeAllMu}{369}
\newcommand{\resCommPcSobelScopeAllMuPct}{11}
\newcommand{\resCommPcSobelScopeAllNodes}{270}

\newcommand{\resCommPcSobelSilCoreCost}{1401}

\newcommand{\resCommPcSobelSilCoreMu}{333}

\newcommand{\resCommPcSobelSilCoreNodes}{57}

\newcommand{\resCommPcSobelSilInheritCost}{2608}
\newcommand{\resCommPcSobelSilInheritCostPct}{86}
\newcommand{\resCommPcSobelSilInheritMu}{333}

\newcommand{\resCommPcSobelSilInheritNodes}{57}

\newcommand{\resCommPcSobelTdmaCost}{1401}

\newcommand{\resCommPcSobelTdmaMu}{333}

\newcommand{\resCommPcSobelTdmaNodes}{57}

\newcommand{\resCommSobelBound}{400}
\newcommand{\resCommSobelMu}{333}

\newcommand{\resConsolidationOnePromotion}{1913}

\newcommand{\resConsolidationThreePromotion}{0}

\newcommand{\resConsolidationTwoPromotion}{293}

\newcommand{\resDsdComparable}{7}
\newcommand{\resDsdDesydeProven}{2}

\newcommand{\resDsdFourRaJpDesydePeriod}{1514}
\newcommand{\resDsdFourRaJpDesydeProven}{no}
\newcommand{\resDsdFourRaJpDesydeTime}{3600}
\newcommand{\resDsdFourRaJpPubPeriod}{1388}
\newcommand{\resDsdFourRaJpPubTime}{279}
\newcommand{\resDsdFourRaJpSafePeriod}{1388}
\newcommand{\resDsdFourRaJpSafeTime}{18}
\newcommand{\resDsdFourSoRaJpDesydePeriod}{1766}
\newcommand{\resDsdFourSoRaJpDesydeProven}{no}
\newcommand{\resDsdFourSoRaJpDesydeTime}{3600}
\newcommand{\resDsdFourSoRaJpPubPeriod}{1766}
\newcommand{\resDsdFourSoRaJpPubTime}{23705}
\newcommand{\resDsdFourSoRaJpSafePeriod}{1766}
\newcommand{\resDsdFourSoRaJpSafeTime}{67}
\newcommand{\resDsdMaxReps}{5}

\newcommand{\resDsdSafeProven}{6}
\newcommand{\resDsdSafeTimeLimit}{3600}
\newcommand{\resDsdScenarios}{11}
\newcommand{\resDsdSkipped}{4}
\newcommand{\resDsdThreads}{4}
\newcommand{\resDsdThreeRaJpDesydePeriod}{2524}
\newcommand{\resDsdThreeRaJpDesydeProven}{yes}
\newcommand{\resDsdThreeRaJpDesydeTime}{4}
\newcommand{\resDsdThreeRaJpPubPeriod}{2524}
\newcommand{\resDsdThreeRaJpPubTime}{10}
\newcommand{\resDsdThreeRaJpSafePeriod}{2524}
\newcommand{\resDsdThreeRaJpSafeTime}{18}
\newcommand{\resDsdThreeSoRaJpDesydePeriod}{2544}
\newcommand{\resDsdThreeSoRaJpDesydeProven}{no}
\newcommand{\resDsdThreeSoRaJpDesydeTime}{3600}
\newcommand{\resDsdThreeSoRaJpPubPeriod}{2544}
\newcommand{\resDsdThreeSoRaJpPubTime}{3124}
\newcommand{\resDsdThreeSoRaJpSafePeriod}{2544}
\newcommand{\resDsdThreeSoRaJpSafeTime}{3602}
\newcommand{\resDsdThreeSoSuJpDesydePeriod}{2544}
\newcommand{\resDsdThreeSoSuJpDesydeProven}{no}
\newcommand{\resDsdThreeSoSuJpDesydeTime}{3600}
\newcommand{\resDsdThreeSoSuJpPubPeriod}{2544}
\newcommand{\resDsdThreeSoSuJpPubTime}{2662}
\newcommand{\resDsdThreeSoSuJpSafePeriod}{2544}
\newcommand{\resDsdThreeSoSuJpSafeTime}{2829}
\newcommand{\resDsdThreeSoSuRaJpDesydePeriod}{3882}
\newcommand{\resDsdThreeSoSuRaJpDesydeProven}{yes}
\newcommand{\resDsdThreeSoSuRaJpDesydeTime}{2885}
\newcommand{\resDsdThreeSoSuRaJpPubPeriod}{3882}
\newcommand{\resDsdThreeSoSuRaJpPubTime}{1906}
\newcommand{\resDsdThreeSoSuRaJpSafePeriod}{3882}
\newcommand{\resDsdThreeSoSuRaJpSafeTime}{31}
\newcommand{\resDsdThreeSuRaJpDesydePeriod}{2544}
\newcommand{\resDsdThreeSuRaJpDesydeProven}{no}
\newcommand{\resDsdThreeSuRaJpDesydeTime}{3600}
\newcommand{\resDsdThreeSuRaJpPubPeriod}{2544}
\newcommand{\resDsdThreeSuRaJpPubTime}{2562}
\newcommand{\resDsdThreeSuRaJpSafePeriod}{2544}
\newcommand{\resDsdThreeSuRaJpSafeTime}{3312}
\newcommand{\resExactAgree}{1577}
\newcommand{\resExactConclusive}{1578}

\newcommand{\resExactInconclusive}{79}

\newcommand{\resExactRows}{1657}

\newcommand{\resExactSLUnsatRows}{6}
\newcommand{\resExactTimeRatioMedian}{1.09}

\newcommand{\resExactUnsatRows}{155}

\newcommand{\resFlatEightMB}{2}
\newcommand{\resFlatEightRssMB}{44}
\newcommand{\resFlatEightTime}{0.8}

\newcommand{\resFlatFortyEightMB}{116}
\newcommand{\resFlatFortyEightRssMB}{736}
\newcommand{\resFlatFortyEightTime}{61.9}

\newcommand{\resFlatSixteenMB}{8}
\newcommand{\resFlatSixteenRssMB}{86}
\newcommand{\resFlatSixteenTime}{2.9}

\newcommand{\resFlatSixtyFourMB}{252}
\newcommand{\resFlatSixtyFourRssMB}{1588}
\newcommand{\resFlatSixtyFourTime}{139.0}

\newcommand{\resFlatSixtyFourVars}{259}
\newcommand{\resFlatTableTimeMax}{2.4}
\newcommand{\resFlatTableVarsMax}{22}
\newcommand{\resFlatTableVarsMin}{5}
\newcommand{\resFlatThirtyTwoMB}{36}
\newcommand{\resFlatThirtyTwoRssMB}{305}
\newcommand{\resFlatThirtyTwoTime}{15.8}

\newcommand{\resFlatTwentyFourMB}{20}
\newcommand{\resFlatTwentyFourRssMB}{173}
\newcommand{\resFlatTwentyFourTime}{8.3}

\newcommand{\resGsnBothChecks}{51}
\newcommand{\resGsnBothDiverse}{3}
\newcommand{\resGsnBothElements}{151}
\newcommand{\resGsnBothGoals}{54}
\newcommand{\resGsnBothLeaves}{39}
\newcommand{\resGsnBothOutOfScope}{0}
\newcommand{\resGsnBothSolutionShare}{49}
\newcommand{\resGsnBothSolutions}{19}
\newcommand{\resGsnBothUndeveloped}{20}
\newcommand{\resGsnNvpChecks}{71}
\newcommand{\resGsnNvpDiverse}{7}
\newcommand{\resGsnNvpElements}{165}
\newcommand{\resGsnNvpGoals}{54}
\newcommand{\resGsnNvpLeaves}{53}
\newcommand{\resGsnNvpOutOfScope}{0}
\newcommand{\resGsnNvpSolutionShare}{62}
\newcommand{\resGsnNvpSolutions}{33}
\newcommand{\resGsnNvpUndeveloped}{20}
\newcommand{\resGsnRandomHwChecks}{42}
\newcommand{\resGsnRandomHwDiverse}{0}
\newcommand{\resGsnRandomHwElements}{117}
\newcommand{\resGsnRandomHwGoals}{41}
\newcommand{\resGsnRandomHwLeaves}{31}
\newcommand{\resGsnRandomHwOutOfScope}{0}
\newcommand{\resGsnRandomHwSolutionShare}{52}
\newcommand{\resGsnRandomHwSolutions}{16}
\newcommand{\resGsnRandomHwUndeveloped}{15}
\newcommand{\resModExtBothCheaper}{38}
\newcommand{\resModExtBothDearer}{4}

\newcommand{\resModExtBothJointRows}{54}
\newcommand{\resModExtBothMaxExtraPct}{9.8}
\newcommand{\resModExtBothMaxSavingPct}{18.5}
\newcommand{\resModExtBothMinExtraPct}{5.8}
\newcommand{\resModExtBothMinSavingPct}{3.1}

\newcommand{\resModExtBothPFCheapHundredFiftyInfeasible}{0}

\newcommand{\resModExtBothPFCheapHundredInfeasible}{0}

\newcommand{\resModExtBothPFCheapHundredTenInfeasible}{0}

\newcommand{\resModExtBothPFCheapHundredTwentyFiveInfeasible}{0}

\newcommand{\resModExtBothPFCheapMaxLossPct}{0.0}

\newcommand{\resModExtBothPFCheapTwoHundredInfeasible}{0}

\newcommand{\resModExtBothPFLightHundredFiftyInfeasible}{0}

\newcommand{\resModExtBothPFLightHundredInfeasible}{0}

\newcommand{\resModExtBothPFLightHundredTenInfeasible}{0}

\newcommand{\resModExtBothPFLightHundredTwentyFiveInfeasible}{0}

\newcommand{\resModExtBothPFLightMaxLossPct}{0.0}

\newcommand{\resModExtBothPFLightTwoHundredInfeasible}{0}

\newcommand{\resModExtBothPFMaxHundredFiftyInfeasible}{3}

\newcommand{\resModExtBothPFMaxHundredInfeasible}{6}

\newcommand{\resModExtBothPFMaxHundredTenInfeasible}{6}

\newcommand{\resModExtBothPFMaxHundredTwentyFiveInfeasible}{6}

\newcommand{\resModExtBothPFMaxMaxLossPct}{46.3}

\newcommand{\resModExtBothPFMaxTwoHundredInfeasible}{0}

\newcommand{\resModExtBothPFMinHundredFiftyInfeasible}{3}

\newcommand{\resModExtBothPFMinHundredInfeasible}{6}

\newcommand{\resModExtBothPFMinHundredTenInfeasible}{6}

\newcommand{\resModExtBothPFMinHundredTwentyFiveInfeasible}{6}

\newcommand{\resModExtBothPFMinMaxLossPct}{23.2}

\newcommand{\resModExtBothPFMinTwoHundredInfeasible}{0}

\newcommand{\resModExtBothPatternsChanged}{11}

\newcommand{\resModExtBothSLHundredFiftyInfeasible}{0}

\newcommand{\resModExtBothSLHundredInfeasible}{2}

\newcommand{\resModExtBothSLHundredTenInfeasible}{0}

\newcommand{\resModExtBothSLHundredTwentyFiveInfeasible}{0}

\newcommand{\resModExtBothSLMaxLossPct}{5.9}

\newcommand{\resModExtBothSLTwoHundredInfeasible}{0}

\newcommand{\resModExtBothTimeRatioMed}{1.26}

\newcommand{\resModExtDiverseCheaper}{6}
\newcommand{\resModExtDiverseDearer}{9}

\newcommand{\resModExtDiverseJointRows}{54}
\newcommand{\resModExtDiverseMaxExtraPct}{37.0}
\newcommand{\resModExtDiverseMaxSavingPct}{7.0}
\newcommand{\resModExtDiverseMinExtraPct}{4.8}
\newcommand{\resModExtDiverseMinSavingPct}{6.6}
\newcommand{\resModExtDiversePatternsChanged}{10}

\newcommand{\resModExtDiverseTimeRatioMed}{1.03}

\newcommand{\resModExtDoerCheaper}{48}
\newcommand{\resModExtDoerDearer}{0}

\newcommand{\resModExtDoerInstances}{9}
\newcommand{\resModExtDoerJointRows}{54}
\newcommand{\resModExtDoerMaxSavingPct}{19.6}
\newcommand{\resModExtDoerMinSavingPct}{1.9}
\newcommand{\resModExtDoerPatternsChanged}{15}

\newcommand{\resModExtDoerTimeRatioMed}{1.29}

\newcommand{\resModExtTokOneCheaper}{0}
\newcommand{\resModExtTokOneDearer}{0}

\newcommand{\resModExtTokOneHalfCheaper}{2}
\newcommand{\resModExtTokOneHalfDearer}{0}

\newcommand{\resModExtTokOneHalfJointRows}{45}
\newcommand{\resModExtTokOneHalfMaxSavingPct}{11.5}
\newcommand{\resModExtTokOneHalfMinSavingPct}{8.9}

\newcommand{\resModExtTokOneHalfPFCheapHundredFiftyInfeasible}{0}

\newcommand{\resModExtTokOneHalfPFCheapHundredInfeasible}{2}

\newcommand{\resModExtTokOneHalfPFCheapHundredTenInfeasible}{2}

\newcommand{\resModExtTokOneHalfPFCheapHundredTwentyFiveInfeasible}{2}

\newcommand{\resModExtTokOneHalfPFCheapMaxLossPct}{0.0}
\newcommand{\resModExtTokOneHalfPFCheapTwoHundredInfeasible}{0}

\newcommand{\resModExtTokOneHalfPFLightHundredFiftyInfeasible}{0}

\newcommand{\resModExtTokOneHalfPFLightHundredInfeasible}{0}

\newcommand{\resModExtTokOneHalfPFLightHundredTenInfeasible}{0}

\newcommand{\resModExtTokOneHalfPFLightHundredTwentyFiveInfeasible}{0}

\newcommand{\resModExtTokOneHalfPFLightMaxLossPct}{30.2}
\newcommand{\resModExtTokOneHalfPFLightTwoHundredInfeasible}{0}

\newcommand{\resModExtTokOneHalfPFMaxHundredFiftyInfeasible}{0}

\newcommand{\resModExtTokOneHalfPFMaxHundredInfeasible}{6}

\newcommand{\resModExtTokOneHalfPFMaxHundredTenInfeasible}{5}

\newcommand{\resModExtTokOneHalfPFMaxHundredTwentyFiveInfeasible}{5}

\newcommand{\resModExtTokOneHalfPFMaxMaxLossPct}{18.1}
\newcommand{\resModExtTokOneHalfPFMaxTwoHundredInfeasible}{0}

\newcommand{\resModExtTokOneHalfPFMinHundredFiftyInfeasible}{0}

\newcommand{\resModExtTokOneHalfPFMinHundredInfeasible}{6}

\newcommand{\resModExtTokOneHalfPFMinHundredTenInfeasible}{5}

\newcommand{\resModExtTokOneHalfPFMinHundredTwentyFiveInfeasible}{5}

\newcommand{\resModExtTokOneHalfPFMinMaxLossPct}{0.9}
\newcommand{\resModExtTokOneHalfPFMinTwoHundredInfeasible}{0}

\newcommand{\resModExtTokOneHalfPatternsChanged}{5}

\newcommand{\resModExtTokOneHalfSLHundredFiftyInfeasible}{0}

\newcommand{\resModExtTokOneHalfSLHundredInfeasible}{2}

\newcommand{\resModExtTokOneHalfSLHundredTenInfeasible}{0}

\newcommand{\resModExtTokOneHalfSLHundredTwentyFiveInfeasible}{0}

\newcommand{\resModExtTokOneHalfSLMaxLossPct}{13.1}
\newcommand{\resModExtTokOneHalfSLTwoHundredInfeasible}{0}

\newcommand{\resModExtTokOneHalfTimeRatioMed}{1.13}

\newcommand{\resModExtTokOneJointRows}{45}

\newcommand{\resModExtTokOnePFCheapHundredFiftyInfeasible}{2}

\newcommand{\resModExtTokOnePFCheapHundredInfeasible}{2}

\newcommand{\resModExtTokOnePFCheapHundredTenInfeasible}{2}

\newcommand{\resModExtTokOnePFCheapHundredTwentyFiveInfeasible}{2}

\newcommand{\resModExtTokOnePFCheapMaxLossPct}{0.0}
\newcommand{\resModExtTokOnePFCheapTwoHundredInfeasible}{0}

\newcommand{\resModExtTokOnePFLightHundredFiftyInfeasible}{0}

\newcommand{\resModExtTokOnePFLightHundredInfeasible}{0}

\newcommand{\resModExtTokOnePFLightHundredTenInfeasible}{0}

\newcommand{\resModExtTokOnePFLightHundredTwentyFiveInfeasible}{0}

\newcommand{\resModExtTokOnePFLightMaxLossPct}{30.2}
\newcommand{\resModExtTokOnePFLightTwoHundredInfeasible}{0}

\newcommand{\resModExtTokOnePFMaxHundredFiftyInfeasible}{3}

\newcommand{\resModExtTokOnePFMaxHundredInfeasible}{6}

\newcommand{\resModExtTokOnePFMaxHundredTenInfeasible}{6}

\newcommand{\resModExtTokOnePFMaxHundredTwentyFiveInfeasible}{6}

\newcommand{\resModExtTokOnePFMaxMaxLossPct}{18.1}
\newcommand{\resModExtTokOnePFMaxTwoHundredInfeasible}{0}

\newcommand{\resModExtTokOnePFMinHundredFiftyInfeasible}{3}

\newcommand{\resModExtTokOnePFMinHundredInfeasible}{6}

\newcommand{\resModExtTokOnePFMinHundredTenInfeasible}{6}

\newcommand{\resModExtTokOnePFMinHundredTwentyFiveInfeasible}{6}

\newcommand{\resModExtTokOnePFMinMaxLossPct}{0.9}
\newcommand{\resModExtTokOnePFMinTwoHundredInfeasible}{0}

\newcommand{\resModExtTokOnePatternsChanged}{3}

\newcommand{\resModExtTokOneSLHundredFiftyInfeasible}{0}

\newcommand{\resModExtTokOneSLHundredInfeasible}{2}

\newcommand{\resModExtTokOneSLHundredTenInfeasible}{0}

\newcommand{\resModExtTokOneSLHundredTwentyFiveInfeasible}{0}

\newcommand{\resModExtTokOneSLMaxLossPct}{4.6}
\newcommand{\resModExtTokOneSLTwoHundredInfeasible}{0}

\newcommand{\resModExtTokOneTimeRatioMed}{1.02}

\newcommand{\resModExtTokTwoCheaper}{28}
\newcommand{\resModExtTokTwoDearer}{0}

\newcommand{\resModExtTokTwoJointRows}{45}
\newcommand{\resModExtTokTwoMaxSavingPct}{22.2}
\newcommand{\resModExtTokTwoMinSavingPct}{0.1}
\newcommand{\resModExtTokTwoPatternsChanged}{15}

\newcommand{\resModExtTokTwoTimeRatioMed}{1.33}

\newcommand{\resModExtTokTwoVariantCompared}{40}

\newcommand{\resModExtTokTwoVariantTimeRatioMed}{1.01}

\newcommand{\resMultiCost}{5665}
\newcommand{\resMultiExactAgree}{21}

\newcommand{\resMultiExactOpen}{14}

\newcommand{\resMultiHundredFiftyJointCost}{5852}

\newcommand{\resMultiHundredFiftyNPLoss}{0.0}

\newcommand{\resMultiHundredFiftyPFCheapLoss}{0.0}

\newcommand{\resMultiHundredFiftyPFLightLoss}{0.0}

\newcommand{\resMultiHundredFiftyPFMaxLoss}{0.0}

\newcommand{\resMultiHundredFiftyPFMinLoss}{0.0}

\newcommand{\resMultiHundredTenJointExactLong}{UNSAT}
\newcommand{\resMultiHundredTenJointExactLongTime}{994.7}

\newcommand{\resMultiHundredTwentyFiveJointCost}{5866}

\newcommand{\resMultiHundredTwentyFiveNPLoss}{0.0}

\newcommand{\resMultiHundredTwentyFivePFCheapLoss}{0.0}

\newcommand{\resMultiHundredTwentyFivePFLightLoss}{0.0}

\newcommand{\resMultiMuAloneRasta}{235}
\newcommand{\resMultiMuAloneSobel}{416}
\newcommand{\resMultiMuAloneSusan}{1177}
\newcommand{\resMultiNoneJointCost}{5665}

\newcommand{\resMultiNoneNPLoss}{0.0}

\newcommand{\resMultiNonePFCheapLoss}{0.0}

\newcommand{\resMultiNonePFLightLoss}{0.0}

\newcommand{\resMultiNonePFMaxLoss}{0.4}

\newcommand{\resMultiNonePFMinLoss}{0.0}

\newcommand{\resMultiNoneSLLoss}{0.4}
\newcommand{\resMultiNprocs}{6}

\newcommand{\resMultiTime}{44.0}
\newcommand{\resMultiTimeMax}{63.3}
\newcommand{\resMultiTimeMin}{37.5}
\newcommand{\resMultiTwoHundredJointCost}{5768}

\newcommand{\resMultiTwoHundredNPLoss}{0.0}

\newcommand{\resMultiTwoHundredPFCheapLoss}{0.0}

\newcommand{\resMultiTwoHundredPFLightLoss}{0.0}

\newcommand{\resMultiTwoHundredPFMaxLoss}{0.0}

\newcommand{\resMultiTwoHundredPFMinLoss}{0.0}

\newcommand{\resMutArgBadIndex}{50}

\newcommand{\resMutArgCaughtAudit}{250}
\newcommand{\resMutArgCaughtCur}{300}

\newcommand{\resMutArgMutants}{300}
\newcommand{\resMutArgOperators}{6}

\newcommand{\resMutGenCatalogCases}{17}
\newcommand{\resMutGenCatalogRefused}{17}

\newcommand{\resMutGenControls}{10}

\newcommand{\resMutGenFaultModelAmbiguous}{3}
\newcommand{\resMutGenFaultModelCases}{18}
\newcommand{\resMutGenFaultModelRefused}{18}

\newcommand{\resMutGsnEmittedFaultyArt}{230}

\newcommand{\resMutModelFamiliesManifested}{14}

\newcommand{\resMutModelManifested}{108}

\newcommand{\resMutModelModelOnly}{3}
\newcommand{\resMutModelMutants}{18}

\newcommand{\resMutModelRejectedArt}{79}
\newcommand{\resMutModelRejectedCur}{108}

\newcommand{\resMutModelRuns}{163}
\newcommand{\resMutOpCeilingManifestedArt}{21}

\newcommand{\resMutOpCeilingRejectedArt}{21}
\newcommand{\resMutOpCeilingRejectedCur}{21}

\newcommand{\resMutOpCostManifestedArt}{10}

\newcommand{\resMutOpCostRejectedArt}{0}
\newcommand{\resMutOpCostRejectedCur}{10}

\newcommand{\resMutOpDeactivateManifestedArt}{32}

\newcommand{\resMutOpDeactivateRejectedArt}{0}
\newcommand{\resMutOpDeactivateRejectedCur}{32}

\newcommand{\resMutOpDropFieldManifestedArt}{20}

\newcommand{\resMutOpDropFieldRejectedArt}{10}
\newcommand{\resMutOpDropFieldRejectedCur}{20}

\newcommand{\resMutOpMixManifestedArt}{23}

\newcommand{\resMutOpMixRejectedArt}{23}
\newcommand{\resMutOpMixRejectedCur}{23}

\newcommand{\resMutOpOrderCrossManifestedArt}{19}

\newcommand{\resMutOpOrderCrossRejectedArt}{0}
\newcommand{\resMutOpOrderCrossRejectedCur}{19}

\newcommand{\resMutOpOrderCycleManifestedArt}{30}

\newcommand{\resMutOpOrderCycleRejectedArt}{2}
\newcommand{\resMutOpOrderCycleRejectedCur}{30}

\newcommand{\resMutOpOrderSplitManifestedArt}{30}

\newcommand{\resMutOpOrderSplitRejectedArt}{2}
\newcommand{\resMutOpOrderSplitRejectedCur}{30}

\newcommand{\resMutOpPartForgeManifestedArt}{23}

\newcommand{\resMutOpPartForgeRejectedArt}{0}
\newcommand{\resMutOpPartForgeRejectedCur}{23}

\newcommand{\resMutOpPatternNoneManifestedArt}{30}

\newcommand{\resMutOpPatternNoneRejectedArt}{0}
\newcommand{\resMutOpPatternNoneRejectedCur}{30}

\newcommand{\resMutOpPeriodManifestedArt}{12}

\newcommand{\resMutOpPeriodRejectedArt}{12}
\newcommand{\resMutOpPeriodRejectedCur}{12}

\newcommand{\resMutOpPlaceManifestedArt}{32}

\newcommand{\resMutOpPlaceRejectedArt}{32}
\newcommand{\resMutOpPlaceRejectedCur}{32}

\newcommand{\resMutOpSilCoreManifestedArt}{39}

\newcommand{\resMutOpSilCoreRejectedArt}{39}
\newcommand{\resMutOpSilCoreRejectedCur}{39}

\newcommand{\resMutOpSilNodeManifestedArt}{50}

\newcommand{\resMutOpSilNodeRejectedArt}{50}
\newcommand{\resMutOpSilNodeRejectedCur}{50}

\newcommand{\resMutOpWcetManifestedArt}{50}

\newcommand{\resMutOpWcetRejectedArt}{0}
\newcommand{\resMutOpWcetRejectedCur}{50}

\newcommand{\resMutRecheckSolutions}{4113}

\newcommand{\resMutSolInstances}{10}

\newcommand{\resMutSolManifested}{421}

\newcommand{\resMutSolMutants}{421}
\newcommand{\resMutSolOperators}{15}
\newcommand{\resMutSolRejectedArt}{191}
\newcommand{\resMutSolRejectedCur}{421}

\newcommand{\resNvpNprocs}{6}

\newcommand{\resNvpTime}{5.5}
\newcommand{\resOneStepRastaBound}{550}
\newcommand{\resOneStepRastaHwCost}{18}
\newcommand{\resOneStepRastaMuMax}{543}
\newcommand{\resOneStepRastaTime}{0.6}
\newcommand{\resOneStepSobelBound}{400}
\newcommand{\resOneStepSobelHwCost}{16}

\newcommand{\resOneStepSobelMuMax}{384}
\newcommand{\resOneStepSobelTime}{0.5}
\newcommand{\resOneStepSusanBound}{2050}
\newcommand{\resOneStepSusanHwCost}{12}

\newcommand{\resOneStepSusanMuMax}{1659}
\newcommand{\resOneStepSusanTime}{0.5}

\newcommand{\resPatConsNPInfeasiblePairs}{19}

\newcommand{\resPatConsPromoMax}{4550}
\newcommand{\resPatConsPromoMin}{293}
\newcommand{\resPatternTieInstances}{3}

\newcommand{\resPipePinMax}{1.2}

\newcommand{\resPlacementRandomHwNprocs}{3}

\newcommand{\resPlacementSystematicSwNprocs}{2}

\newcommand{\resPriceDoOneSeventyEightBOneOneExtra}{59}
\newcommand{\resPriceDoOneSeventyEightBOneOnePromotion}{1400}

\newcommand{\resPriceDoOneSeventyEightBSixteenOneExtra}{14}

\newcommand{\resPriceKlostermanConsolidateFactor}{8}

\newcommand{\resPriceKlostermanFourNprocs}{3}

\newcommand{\resPriceKlostermanLowConsolidateFactor}{4}

\newcommand{\resPriceKlostermanLowFourNprocs}{1}

\newcommand{\resPriceKlostermanLowOneOnePromotion}{125}

\newcommand{\resPriceKlostermanOneOnePromotion}{275}

\newcommand{\resPriceMaxFactor}{16}
\newcommand{\resPriceMinFactor}{1}

\newcommand{\resPriceMyklebustOneOneExtra}{98}
\newcommand{\resPriceMyklebustOneOnePromotion}{1913}

\newcommand{\resPriceMyklebustSixteenNprocs}{3}
\newcommand{\resPriceMyklebustSixteenOneExtra}{26}

\newcommand{\resRepsMaxSpread}{86}
\newcommand{\resRepsMedianSpread}{9}
\newcommand{\resRepsPerRow}{5}
\newcommand{\resRepsRuns}{270}
\newcommand{\resRosvallAllOnestepStatus}{OPTIMAL}
\newcommand{\resRosvallAllOnestepTime}{7.3}
\newcommand{\resRosvallAllTwoStepHwCost}{54}
\newcommand{\resRosvallAllTwoStepNprocs}{6}
\newcommand{\resRosvallAllTwoStepStepOneTime}{3.6}
\newcommand{\resRosvallAllTwoStepStepTwoTime}{5.5}
\newcommand{\resRosvallAllTwoStepTime}{9.1}

\newcommand{\resScalCatFiveProven}{5}

\newcommand{\resScalCatFiveTime}{18.2}

\newcommand{\resScalCatNineFlatVars}{38572}
\newcommand{\resScalCatNineProven}{4}

\newcommand{\resScalCatNineTime}{76.5}

\newcommand{\resScalCatOneBuildFail}{5}
\newcommand{\resScalCatThreeFlatVars}{14786}
\newcommand{\resScalCatThreeProven}{5}

\newcommand{\resScalCatThreeTime}{13.3}

\newcommand{\resScalEightMuFound}{5}
\newcommand{\resScalEightMuProven}{5}
\newcommand{\resScalEightNoMu}{0}
\newcommand{\resScalEightProven}{5}
\newcommand{\resScalEightReps}{5}
\newcommand{\resScalEightTime}{1.8}
\newcommand{\resScalEightTimeout}{0}
\newcommand{\resScalFortyEightMuFound}{0}
\newcommand{\resScalFortyEightMuProven}{0}
\newcommand{\resScalFortyEightNoMu}{5}
\newcommand{\resScalGapCount}{1}
\newcommand{\resScalGapMax}{12.2}

\newcommand{\resScalMaxProvenActors}{16}

\newcommand{\resScalSixteenMuFound}{5}
\newcommand{\resScalSixteenMuProven}{4}
\newcommand{\resScalSixteenNoMu}{0}
\newcommand{\resScalSixteenProven}{5}
\newcommand{\resScalSixteenReps}{5}
\newcommand{\resScalSixteenTime}{56.1}
\newcommand{\resScalSixteenTimeout}{0}
\newcommand{\resScalSixtyFourMuFound}{0}
\newcommand{\resScalSixtyFourMuProven}{0}
\newcommand{\resScalSixtyFourNoMu}{5}

\newcommand{\resScalThirtyTwoMuFound}{4}
\newcommand{\resScalThirtyTwoMuProven}{0}
\newcommand{\resScalThirtyTwoNoMu}{1}
\newcommand{\resScalThirtyTwoProven}{1}
\newcommand{\resScalThirtyTwoReps}{4}
\newcommand{\resScalThirtyTwoTime}{273.7}
\newcommand{\resScalThirtyTwoTimeout}{3}

\newcommand{\resScalTwentyFourMuFound}{5}
\newcommand{\resScalTwentyFourMuProven}{2}
\newcommand{\resScalTwentyFourNoMu}{0}
\newcommand{\resScalTwentyFourProven}{2}
\newcommand{\resScalTwentyFourReps}{5}
\newcommand{\resScalTwentyFourTime}{84.9}
\newcommand{\resScalTwentyFourTimeout}{3}
\newcommand{\resSetupCheckerScale}{30}
\newcommand{\resSetupCpSat}{9.15}
\newcommand{\resSetupCpu}{Intel Core i7-7700HQ, 2.80\,GHz}
\newcommand{\resSetupLogicalCpus}{8}
\newcommand{\resSetupMiniZinc}{2.10.1}
\newcommand{\resSetupRamGiB}{15.5}

\newcommand{\resSetupScalCatActors}{24}
\newcommand{\resSetupScalMaxActors}{64}
\newcommand{\resSetupScalMinActors}{8}
\newcommand{\resSetupScalMuTimeLimit}{300}
\newcommand{\resSetupScalReps}{5}
\newcommand{\resSetupScalSlots}{8}
\newcommand{\resSetupScalTimeLimit}{600}
\newcommand{\resSetupScalTypes}{3}
\newcommand{\resSetupThreads}{4}
\newcommand{\resSetupTimeLimit}{300}
\newcommand{\resThrBaseInstances}{9}
\newcommand{\resThrBaseMuStarMax}{333}
\newcommand{\resThrBaseMuStarMin}{235}
\newcommand{\resThrBaseNPAbove}{0}

\newcommand{\resThrBaseNPMaxRatio}{1.00}
\newcommand{\resThrBasePFCheapAbove}{2}

\newcommand{\resThrBasePFCheapMaxRatio}{1.55}

\newcommand{\resThrBasePFLightAbove}{0}

\newcommand{\resThrBasePFLightMaxRatio}{1.00}
\newcommand{\resThrBasePFMaxAbove}{6}

\newcommand{\resThrBasePFMaxMaxRatio}{1.55}

\newcommand{\resThrBasePFMinAbove}{6}

\newcommand{\resThrBasePFMinMaxRatio}{1.55}
\newcommand{\resThrBasePFMinMinAboveRatio}{1.45}
\newcommand{\resThrBaseSLAbove}{2}

\newcommand{\resThrBaseSLMaxRatio}{1.09}

\newcommand{\resThrChkFiftyMuStarMax}{384}
\newcommand{\resThrChkFiftyMuStarMin}{235}
\newcommand{\resThrChkFiftyNPAbove}{0}

\newcommand{\resThrChkFiftyNPMaxRatio}{1.00}
\newcommand{\resThrChkFiftyPFCheapAbove}{2}

\newcommand{\resThrChkFiftyPFCheapMaxRatio}{1.43}

\newcommand{\resThrChkFiftyPFLightAbove}{2}

\newcommand{\resThrChkFiftyPFLightMaxRatio}{1.43}

\newcommand{\resThrChkFiftyPFMaxAbove}{7}

\newcommand{\resThrChkFiftyPFMaxMaxRatio}{1.43}

\newcommand{\resThrChkFiftyPFMinAbove}{6}

\newcommand{\resThrChkFiftyPFMinMaxRatio}{1.43}

\newcommand{\resThrChkFiftySLAbove}{2}

\newcommand{\resThrChkFiftySLMaxRatio}{1.13}

\newcommand{\resThrChkTenMuStarMax}{282}
\newcommand{\resThrChkTenMuStarMin}{235}
\newcommand{\resThrChkTenNPAbove}{0}

\newcommand{\resThrChkTenNPMaxRatio}{1.00}
\newcommand{\resThrChkTenPFCheapAbove}{2}

\newcommand{\resThrChkTenPFCheapMaxRatio}{1.55}

\newcommand{\resThrChkTenPFLightAbove}{0}

\newcommand{\resThrChkTenPFLightMaxRatio}{1.00}
\newcommand{\resThrChkTenPFMaxAbove}{6}

\newcommand{\resThrChkTenPFMaxMaxRatio}{1.82}

\newcommand{\resThrChkTenPFMinAbove}{6}

\newcommand{\resThrChkTenPFMinMaxRatio}{1.82}

\newcommand{\resThrChkTenSLAbove}{0}

\newcommand{\resThrChkTenSLMaxRatio}{1.00}

\newcommand{\resThrTokTwoMuStarMax}{256}
\newcommand{\resThrTokTwoMuStarMin}{235}
\newcommand{\resThrTokTwoNPAbove}{0}

\newcommand{\resThrTokTwoNPMaxRatio}{1.00}
\newcommand{\resThrTokTwoPFCheapAbove}{0}

\newcommand{\resThrTokTwoPFCheapMaxRatio}{1.00}
\newcommand{\resThrTokTwoPFLightAbove}{0}

\newcommand{\resThrTokTwoPFLightMaxRatio}{1.00}
\newcommand{\resThrTokTwoPFMaxAbove}{0}

\newcommand{\resThrTokTwoPFMaxMaxRatio}{1.00}
\newcommand{\resThrTokTwoPFMinAbove}{0}

\newcommand{\resThrTokTwoPFMinMaxRatio}{1.00}
\newcommand{\resThrTokTwoSLAbove}{0}

\newcommand{\resThrTokTwoSLMaxRatio}{1.00}
\newcommand{\resTightHundredFiftyInstances}{9}

\newcommand{\resTightHundredFiftyNPInfeasible}{0}
\newcommand{\resTightHundredFiftyNPMaxLoss}{0.0}
\newcommand{\resTightHundredFiftyNPWorse}{0}

\newcommand{\resTightHundredFiftyPFCheapInfeasible}{2}
\newcommand{\resTightHundredFiftyPFCheapMaxLoss}{0.0}
\newcommand{\resTightHundredFiftyPFCheapWorse}{0}

\newcommand{\resTightHundredFiftyPFLightInfeasible}{0}
\newcommand{\resTightHundredFiftyPFLightMaxLoss}{12.4}

\newcommand{\resTightHundredFiftyPFLightWorse}{2}

\newcommand{\resTightHundredFiftyPFMaxInfeasible}{3}
\newcommand{\resTightHundredFiftyPFMaxMaxLoss}{10.6}

\newcommand{\resTightHundredFiftyPFMaxWorse}{2}

\newcommand{\resTightHundredFiftyPFMinInfeasible}{3}
\newcommand{\resTightHundredFiftyPFMinMaxLoss}{0.1}

\newcommand{\resTightHundredFiftyPFMinWorse}{1}

\newcommand{\resTightHundredFiftySLInfeasible}{0}
\newcommand{\resTightHundredFiftySLMaxLoss}{0.3}

\newcommand{\resTightHundredFiftySLWorse}{1}
\newcommand{\resTightHundredInstances}{9}

\newcommand{\resTightHundredNPInfeasible}{0}
\newcommand{\resTightHundredNPMaxLoss}{0.0}
\newcommand{\resTightHundredNPWorse}{0}

\newcommand{\resTightHundredPFCheapInfeasible}{2}
\newcommand{\resTightHundredPFCheapMaxLoss}{0.0}
\newcommand{\resTightHundredPFCheapWorse}{0}

\newcommand{\resTightHundredPFLightInfeasible}{0}
\newcommand{\resTightHundredPFLightMaxLoss}{0.0}
\newcommand{\resTightHundredPFLightWorse}{0}

\newcommand{\resTightHundredPFMaxInfeasible}{6}
\newcommand{\resTightHundredPFMaxMaxLoss}{9.9}

\newcommand{\resTightHundredPFMaxWorse}{1}

\newcommand{\resTightHundredPFMinInfeasible}{6}
\newcommand{\resTightHundredPFMinMaxLoss}{0.0}
\newcommand{\resTightHundredPFMinWorse}{0}

\newcommand{\resTightHundredSLInfeasible}{2}
\newcommand{\resTightHundredSLMaxLoss}{2.7}

\newcommand{\resTightHundredSLWorse}{3}
\newcommand{\resTightHundredTenInstances}{9}

\newcommand{\resTightHundredTenNPInfeasible}{0}
\newcommand{\resTightHundredTenNPMaxLoss}{0.0}
\newcommand{\resTightHundredTenNPWorse}{0}

\newcommand{\resTightHundredTenPFCheapInfeasible}{2}
\newcommand{\resTightHundredTenPFCheapMaxLoss}{0.0}
\newcommand{\resTightHundredTenPFCheapWorse}{0}

\newcommand{\resTightHundredTenPFLightInfeasible}{0}
\newcommand{\resTightHundredTenPFLightMaxLoss}{0.0}
\newcommand{\resTightHundredTenPFLightWorse}{0}

\newcommand{\resTightHundredTenPFMaxInfeasible}{6}
\newcommand{\resTightHundredTenPFMaxMaxLoss}{9.7}

\newcommand{\resTightHundredTenPFMaxWorse}{1}

\newcommand{\resTightHundredTenPFMinInfeasible}{6}
\newcommand{\resTightHundredTenPFMinMaxLoss}{0.0}
\newcommand{\resTightHundredTenPFMinWorse}{0}

\newcommand{\resTightHundredTenSLInfeasible}{0}
\newcommand{\resTightHundredTenSLMaxLoss}{0.6}

\newcommand{\resTightHundredTenSLWorse}{3}
\newcommand{\resTightHundredTwentyFiveInstances}{9}

\newcommand{\resTightHundredTwentyFiveNPInfeasible}{0}
\newcommand{\resTightHundredTwentyFiveNPMaxLoss}{0.0}
\newcommand{\resTightHundredTwentyFiveNPWorse}{0}

\newcommand{\resTightHundredTwentyFivePFCheapInfeasible}{2}
\newcommand{\resTightHundredTwentyFivePFCheapMaxLoss}{0.0}
\newcommand{\resTightHundredTwentyFivePFCheapWorse}{0}

\newcommand{\resTightHundredTwentyFivePFLightInfeasible}{0}
\newcommand{\resTightHundredTwentyFivePFLightMaxLoss}{0.0}
\newcommand{\resTightHundredTwentyFivePFLightWorse}{0}

\newcommand{\resTightHundredTwentyFivePFMaxInfeasible}{6}
\newcommand{\resTightHundredTwentyFivePFMaxMaxLoss}{18.1}

\newcommand{\resTightHundredTwentyFivePFMaxWorse}{1}

\newcommand{\resTightHundredTwentyFivePFMinInfeasible}{6}
\newcommand{\resTightHundredTwentyFivePFMinMaxLoss}{0.0}
\newcommand{\resTightHundredTwentyFivePFMinWorse}{0}

\newcommand{\resTightHundredTwentyFiveSLInfeasible}{0}
\newcommand{\resTightHundredTwentyFiveSLMaxLoss}{0.9}

\newcommand{\resTightHundredTwentyFiveSLWorse}{4}
\newcommand{\resTightTwoHundredInstances}{9}

\newcommand{\resTightTwoHundredNPInfeasible}{0}
\newcommand{\resTightTwoHundredNPMaxLoss}{0.0}
\newcommand{\resTightTwoHundredNPWorse}{0}

\newcommand{\resTightTwoHundredPFCheapInfeasible}{0}
\newcommand{\resTightTwoHundredPFCheapMaxLoss}{0.0}
\newcommand{\resTightTwoHundredPFCheapWorse}{0}

\newcommand{\resTightTwoHundredPFLightInfeasible}{0}
\newcommand{\resTightTwoHundredPFLightMaxLoss}{30.2}

\newcommand{\resTightTwoHundredPFLightWorse}{2}

\newcommand{\resTightTwoHundredPFMaxInfeasible}{0}
\newcommand{\resTightTwoHundredPFMaxMaxLoss}{9.0}

\newcommand{\resTightTwoHundredPFMaxWorse}{6}

\newcommand{\resTightTwoHundredPFMinInfeasible}{0}
\newcommand{\resTightTwoHundredPFMinMaxLoss}{0.9}

\newcommand{\resTightTwoHundredPFMinWorse}{3}

\newcommand{\resTightTwoHundredSLInfeasible}{0}
\newcommand{\resTightTwoHundredSLMaxLoss}{4.6}

\newcommand{\resTightTwoHundredSLWorse}{3}
\newcommand{\resVarChkFiftyHundredFiftyInstances}{9}

\newcommand{\resVarChkFiftyHundredFiftyPFCheapInfeasible}{0}
\newcommand{\resVarChkFiftyHundredFiftyPFCheapMaxLoss}{0.0}
\newcommand{\resVarChkFiftyHundredFiftyPFCheapWorse}{0}

\newcommand{\resVarChkFiftyHundredFiftyPFLightInfeasible}{0}
\newcommand{\resVarChkFiftyHundredFiftyPFLightMaxLoss}{13.7}

\newcommand{\resVarChkFiftyHundredFiftyPFLightWorse}{2}

\newcommand{\resVarChkFiftyHundredFiftyPFMaxInfeasible}{0}
\newcommand{\resVarChkFiftyHundredFiftyPFMaxMaxLoss}{10.6}

\newcommand{\resVarChkFiftyHundredFiftyPFMaxWorse}{5}

\newcommand{\resVarChkFiftyHundredFiftyPFMinInfeasible}{0}
\newcommand{\resVarChkFiftyHundredFiftyPFMinMaxLoss}{1.3}

\newcommand{\resVarChkFiftyHundredFiftyPFMinWorse}{3}

\newcommand{\resVarChkFiftyHundredFiftySLInfeasible}{0}
\newcommand{\resVarChkFiftyHundredFiftySLMaxLoss}{1.3}

\newcommand{\resVarChkFiftyHundredFiftySLWorse}{2}
\newcommand{\resVarChkFiftyHundredInstances}{9}

\newcommand{\resVarChkFiftyHundredPFCheapInfeasible}{2}
\newcommand{\resVarChkFiftyHundredPFCheapMaxLoss}{0.0}
\newcommand{\resVarChkFiftyHundredPFCheapWorse}{0}

\newcommand{\resVarChkFiftyHundredPFLightInfeasible}{2}
\newcommand{\resVarChkFiftyHundredPFLightMaxLoss}{0.0}
\newcommand{\resVarChkFiftyHundredPFLightWorse}{0}

\newcommand{\resVarChkFiftyHundredPFMaxInfeasible}{7}
\newcommand{\resVarChkFiftyHundredPFMaxMaxLoss}{0.0}
\newcommand{\resVarChkFiftyHundredPFMaxWorse}{0}

\newcommand{\resVarChkFiftyHundredPFMinInfeasible}{6}
\newcommand{\resVarChkFiftyHundredPFMinMaxLoss}{0.0}
\newcommand{\resVarChkFiftyHundredPFMinWorse}{0}

\newcommand{\resVarChkFiftyHundredSLInfeasible}{2}
\newcommand{\resVarChkFiftyHundredSLMaxLoss}{2.7}

\newcommand{\resVarChkFiftyHundredSLWorse}{3}
\newcommand{\resVarChkFiftyHundredTenInstances}{9}

\newcommand{\resVarChkFiftyHundredTenPFCheapInfeasible}{2}
\newcommand{\resVarChkFiftyHundredTenPFCheapMaxLoss}{0.0}
\newcommand{\resVarChkFiftyHundredTenPFCheapWorse}{0}

\newcommand{\resVarChkFiftyHundredTenPFLightInfeasible}{2}
\newcommand{\resVarChkFiftyHundredTenPFLightMaxLoss}{0.0}
\newcommand{\resVarChkFiftyHundredTenPFLightWorse}{0}

\newcommand{\resVarChkFiftyHundredTenPFMaxInfeasible}{6}
\newcommand{\resVarChkFiftyHundredTenPFMaxMaxLoss}{11.3}

\newcommand{\resVarChkFiftyHundredTenPFMaxWorse}{1}

\newcommand{\resVarChkFiftyHundredTenPFMinInfeasible}{6}
\newcommand{\resVarChkFiftyHundredTenPFMinMaxLoss}{0.0}
\newcommand{\resVarChkFiftyHundredTenPFMinWorse}{0}

\newcommand{\resVarChkFiftyHundredTenSLInfeasible}{2}
\newcommand{\resVarChkFiftyHundredTenSLMaxLoss}{0.1}

\newcommand{\resVarChkFiftyHundredTenSLWorse}{1}
\newcommand{\resVarChkFiftyHundredTwentyFiveInstances}{9}

\newcommand{\resVarChkFiftyHundredTwentyFivePFCheapInfeasible}{2}
\newcommand{\resVarChkFiftyHundredTwentyFivePFCheapMaxLoss}{0.0}
\newcommand{\resVarChkFiftyHundredTwentyFivePFCheapWorse}{0}

\newcommand{\resVarChkFiftyHundredTwentyFivePFLightInfeasible}{2}
\newcommand{\resVarChkFiftyHundredTwentyFivePFLightMaxLoss}{0.0}
\newcommand{\resVarChkFiftyHundredTwentyFivePFLightWorse}{0}

\newcommand{\resVarChkFiftyHundredTwentyFivePFMaxInfeasible}{6}
\newcommand{\resVarChkFiftyHundredTwentyFivePFMaxMaxLoss}{18.4}

\newcommand{\resVarChkFiftyHundredTwentyFivePFMaxWorse}{1}

\newcommand{\resVarChkFiftyHundredTwentyFivePFMinInfeasible}{6}
\newcommand{\resVarChkFiftyHundredTwentyFivePFMinMaxLoss}{0.0}
\newcommand{\resVarChkFiftyHundredTwentyFivePFMinWorse}{0}

\newcommand{\resVarChkFiftyHundredTwentyFiveSLInfeasible}{0}
\newcommand{\resVarChkFiftyHundredTwentyFiveSLMaxLoss}{0.5}

\newcommand{\resVarChkFiftyHundredTwentyFiveSLWorse}{2}
\newcommand{\resVarChkFiftyNoBoundInstances}{9}

\newcommand{\resVarChkFiftyNoBoundPFCheapInfeasible}{0}
\newcommand{\resVarChkFiftyNoBoundPFCheapMaxLoss}{0.0}
\newcommand{\resVarChkFiftyNoBoundPFCheapWorse}{0}

\newcommand{\resVarChkFiftyNoBoundPFLightInfeasible}{0}
\newcommand{\resVarChkFiftyNoBoundPFLightMaxLoss}{15.4}

\newcommand{\resVarChkFiftyNoBoundPFLightWorse}{2}

\newcommand{\resVarChkFiftyNoBoundPFMaxInfeasible}{0}
\newcommand{\resVarChkFiftyNoBoundPFMaxMaxLoss}{8.8}

\newcommand{\resVarChkFiftyNoBoundPFMaxWorse}{5}

\newcommand{\resVarChkFiftyNoBoundPFMinInfeasible}{0}
\newcommand{\resVarChkFiftyNoBoundPFMinMaxLoss}{0.0}
\newcommand{\resVarChkFiftyNoBoundPFMinWorse}{0}

\newcommand{\resVarChkFiftyNoBoundSLInfeasible}{0}
\newcommand{\resVarChkFiftyNoBoundSLMaxLoss}{0.0}
\newcommand{\resVarChkFiftyNoBoundSLWorse}{0}
\newcommand{\resVarChkFiftyTwoHundredInstances}{9}

\newcommand{\resVarChkFiftyTwoHundredPFCheapInfeasible}{0}
\newcommand{\resVarChkFiftyTwoHundredPFCheapMaxLoss}{0.0}
\newcommand{\resVarChkFiftyTwoHundredPFCheapWorse}{0}

\newcommand{\resVarChkFiftyTwoHundredPFLightInfeasible}{0}
\newcommand{\resVarChkFiftyTwoHundredPFLightMaxLoss}{15.2}

\newcommand{\resVarChkFiftyTwoHundredPFLightWorse}{2}

\newcommand{\resVarChkFiftyTwoHundredPFMaxInfeasible}{0}
\newcommand{\resVarChkFiftyTwoHundredPFMaxMaxLoss}{9.0}

\newcommand{\resVarChkFiftyTwoHundredPFMaxWorse}{4}

\newcommand{\resVarChkFiftyTwoHundredPFMinInfeasible}{0}
\newcommand{\resVarChkFiftyTwoHundredPFMinMaxLoss}{0.0}
\newcommand{\resVarChkFiftyTwoHundredPFMinWorse}{0}

\newcommand{\resVarChkFiftyTwoHundredSLInfeasible}{0}
\newcommand{\resVarChkFiftyTwoHundredSLMaxLoss}{4.6}

\newcommand{\resVarChkFiftyTwoHundredSLWorse}{2}
\newcommand{\resVarChkTenHundredFiftyInstances}{9}

\newcommand{\resVarChkTenHundredFiftyPFCheapInfeasible}{2}
\newcommand{\resVarChkTenHundredFiftyPFCheapMaxLoss}{0.0}
\newcommand{\resVarChkTenHundredFiftyPFCheapWorse}{0}

\newcommand{\resVarChkTenHundredFiftyPFLightInfeasible}{0}
\newcommand{\resVarChkTenHundredFiftyPFLightMaxLoss}{12.4}

\newcommand{\resVarChkTenHundredFiftyPFLightWorse}{2}

\newcommand{\resVarChkTenHundredFiftyPFMaxInfeasible}{3}
\newcommand{\resVarChkTenHundredFiftyPFMaxMaxLoss}{10.8}

\newcommand{\resVarChkTenHundredFiftyPFMaxWorse}{3}

\newcommand{\resVarChkTenHundredFiftyPFMinInfeasible}{3}
\newcommand{\resVarChkTenHundredFiftyPFMinMaxLoss}{1.0}

\newcommand{\resVarChkTenHundredFiftyPFMinWorse}{2}

\newcommand{\resVarChkTenHundredFiftySLInfeasible}{0}
\newcommand{\resVarChkTenHundredFiftySLMaxLoss}{1.0}

\newcommand{\resVarChkTenHundredFiftySLWorse}{2}
\newcommand{\resVarChkTenHundredInstances}{9}

\newcommand{\resVarChkTenHundredPFCheapInfeasible}{2}
\newcommand{\resVarChkTenHundredPFCheapMaxLoss}{0.0}
\newcommand{\resVarChkTenHundredPFCheapWorse}{0}

\newcommand{\resVarChkTenHundredPFLightInfeasible}{0}
\newcommand{\resVarChkTenHundredPFLightMaxLoss}{0.0}
\newcommand{\resVarChkTenHundredPFLightWorse}{0}

\newcommand{\resVarChkTenHundredPFMaxInfeasible}{6}
\newcommand{\resVarChkTenHundredPFMaxMaxLoss}{11.4}

\newcommand{\resVarChkTenHundredPFMaxWorse}{1}

\newcommand{\resVarChkTenHundredPFMinInfeasible}{6}
\newcommand{\resVarChkTenHundredPFMinMaxLoss}{0.0}
\newcommand{\resVarChkTenHundredPFMinWorse}{0}

\newcommand{\resVarChkTenHundredSLInfeasible}{0}
\newcommand{\resVarChkTenHundredSLMaxLoss}{2.7}

\newcommand{\resVarChkTenHundredSLWorse}{2}
\newcommand{\resVarChkTenHundredTenInstances}{9}

\newcommand{\resVarChkTenHundredTenPFCheapInfeasible}{2}
\newcommand{\resVarChkTenHundredTenPFCheapMaxLoss}{0.0}
\newcommand{\resVarChkTenHundredTenPFCheapWorse}{0}

\newcommand{\resVarChkTenHundredTenPFLightInfeasible}{0}
\newcommand{\resVarChkTenHundredTenPFLightMaxLoss}{0.0}
\newcommand{\resVarChkTenHundredTenPFLightWorse}{0}

\newcommand{\resVarChkTenHundredTenPFMaxInfeasible}{6}
\newcommand{\resVarChkTenHundredTenPFMaxMaxLoss}{11.6}

\newcommand{\resVarChkTenHundredTenPFMaxWorse}{1}

\newcommand{\resVarChkTenHundredTenPFMinInfeasible}{6}
\newcommand{\resVarChkTenHundredTenPFMinMaxLoss}{0.0}
\newcommand{\resVarChkTenHundredTenPFMinWorse}{0}

\newcommand{\resVarChkTenHundredTenSLInfeasible}{0}
\newcommand{\resVarChkTenHundredTenSLMaxLoss}{4.8}

\newcommand{\resVarChkTenHundredTenSLWorse}{1}
\newcommand{\resVarChkTenHundredTwentyFiveInstances}{9}

\newcommand{\resVarChkTenHundredTwentyFivePFCheapInfeasible}{2}
\newcommand{\resVarChkTenHundredTwentyFivePFCheapMaxLoss}{0.0}
\newcommand{\resVarChkTenHundredTwentyFivePFCheapWorse}{0}

\newcommand{\resVarChkTenHundredTwentyFivePFLightInfeasible}{0}
\newcommand{\resVarChkTenHundredTwentyFivePFLightMaxLoss}{0.0}
\newcommand{\resVarChkTenHundredTwentyFivePFLightWorse}{0}

\newcommand{\resVarChkTenHundredTwentyFivePFMaxInfeasible}{6}
\newcommand{\resVarChkTenHundredTwentyFivePFMaxMaxLoss}{10.6}

\newcommand{\resVarChkTenHundredTwentyFivePFMaxWorse}{1}

\newcommand{\resVarChkTenHundredTwentyFivePFMinInfeasible}{6}
\newcommand{\resVarChkTenHundredTwentyFivePFMinMaxLoss}{0.0}
\newcommand{\resVarChkTenHundredTwentyFivePFMinWorse}{0}

\newcommand{\resVarChkTenHundredTwentyFiveSLInfeasible}{0}
\newcommand{\resVarChkTenHundredTwentyFiveSLMaxLoss}{1.0}

\newcommand{\resVarChkTenHundredTwentyFiveSLWorse}{2}
\newcommand{\resVarChkTenNoBoundInstances}{9}

\newcommand{\resVarChkTenNoBoundPFCheapInfeasible}{0}
\newcommand{\resVarChkTenNoBoundPFCheapMaxLoss}{0.0}
\newcommand{\resVarChkTenNoBoundPFCheapWorse}{0}

\newcommand{\resVarChkTenNoBoundPFLightInfeasible}{0}
\newcommand{\resVarChkTenNoBoundPFLightMaxLoss}{30.8}

\newcommand{\resVarChkTenNoBoundPFLightWorse}{2}

\newcommand{\resVarChkTenNoBoundPFMaxInfeasible}{0}
\newcommand{\resVarChkTenNoBoundPFMaxMaxLoss}{8.8}

\newcommand{\resVarChkTenNoBoundPFMaxWorse}{5}

\newcommand{\resVarChkTenNoBoundPFMinInfeasible}{0}
\newcommand{\resVarChkTenNoBoundPFMinMaxLoss}{0.0}
\newcommand{\resVarChkTenNoBoundPFMinWorse}{0}

\newcommand{\resVarChkTenNoBoundSLInfeasible}{0}
\newcommand{\resVarChkTenNoBoundSLMaxLoss}{0.0}
\newcommand{\resVarChkTenNoBoundSLWorse}{0}
\newcommand{\resVarChkTenTwoHundredInstances}{9}

\newcommand{\resVarChkTenTwoHundredPFCheapInfeasible}{0}
\newcommand{\resVarChkTenTwoHundredPFCheapMaxLoss}{0.0}
\newcommand{\resVarChkTenTwoHundredPFCheapWorse}{0}

\newcommand{\resVarChkTenTwoHundredPFLightInfeasible}{0}
\newcommand{\resVarChkTenTwoHundredPFLightMaxLoss}{30.2}

\newcommand{\resVarChkTenTwoHundredPFLightWorse}{2}

\newcommand{\resVarChkTenTwoHundredPFMaxInfeasible}{0}
\newcommand{\resVarChkTenTwoHundredPFMaxMaxLoss}{9.0}

\newcommand{\resVarChkTenTwoHundredPFMaxWorse}{6}

\newcommand{\resVarChkTenTwoHundredPFMinInfeasible}{0}
\newcommand{\resVarChkTenTwoHundredPFMinMaxLoss}{0.9}

\newcommand{\resVarChkTenTwoHundredPFMinWorse}{3}

\newcommand{\resVarChkTenTwoHundredSLInfeasible}{0}
\newcommand{\resVarChkTenTwoHundredSLMaxLoss}{4.6}

\newcommand{\resVarChkTenTwoHundredSLWorse}{3}
\newcommand{\resVarTokTwoHundredFiftyInstances}{8}

\newcommand{\resVarTokTwoHundredFiftyPFCheapInfeasible}{0}
\newcommand{\resVarTokTwoHundredFiftyPFCheapMaxLoss}{0.0}
\newcommand{\resVarTokTwoHundredFiftyPFCheapWorse}{0}

\newcommand{\resVarTokTwoHundredFiftyPFLightInfeasible}{0}
\newcommand{\resVarTokTwoHundredFiftyPFLightMaxLoss}{30.1}

\newcommand{\resVarTokTwoHundredFiftyPFLightWorse}{2}

\newcommand{\resVarTokTwoHundredFiftyPFMaxInfeasible}{0}
\newcommand{\resVarTokTwoHundredFiftyPFMaxMaxLoss}{10.1}

\newcommand{\resVarTokTwoHundredFiftyPFMaxWorse}{2}

\newcommand{\resVarTokTwoHundredFiftyPFMinInfeasible}{0}
\newcommand{\resVarTokTwoHundredFiftyPFMinMaxLoss}{0.0}
\newcommand{\resVarTokTwoHundredFiftyPFMinWorse}{0}

\newcommand{\resVarTokTwoHundredFiftySLInfeasible}{0}
\newcommand{\resVarTokTwoHundredFiftySLMaxLoss}{4.6}

\newcommand{\resVarTokTwoHundredFiftySLWorse}{2}
\newcommand{\resVarTokTwoHundredInstances}{9}

\newcommand{\resVarTokTwoHundredPFCheapInfeasible}{0}
\newcommand{\resVarTokTwoHundredPFCheapMaxLoss}{0.0}
\newcommand{\resVarTokTwoHundredPFCheapWorse}{0}

\newcommand{\resVarTokTwoHundredPFLightInfeasible}{0}
\newcommand{\resVarTokTwoHundredPFLightMaxLoss}{28.4}

\newcommand{\resVarTokTwoHundredPFLightWorse}{2}

\newcommand{\resVarTokTwoHundredPFMaxInfeasible}{0}
\newcommand{\resVarTokTwoHundredPFMaxMaxLoss}{8.0}

\newcommand{\resVarTokTwoHundredPFMaxWorse}{4}

\newcommand{\resVarTokTwoHundredPFMinInfeasible}{0}
\newcommand{\resVarTokTwoHundredPFMinMaxLoss}{0.3}

\newcommand{\resVarTokTwoHundredPFMinWorse}{1}

\newcommand{\resVarTokTwoHundredSLInfeasible}{0}
\newcommand{\resVarTokTwoHundredSLMaxLoss}{2.7}

\newcommand{\resVarTokTwoHundredSLWorse}{3}
\newcommand{\resVarTokTwoHundredTenInstances}{9}

\newcommand{\resVarTokTwoHundredTenPFCheapInfeasible}{0}
\newcommand{\resVarTokTwoHundredTenPFCheapMaxLoss}{0.0}
\newcommand{\resVarTokTwoHundredTenPFCheapWorse}{0}

\newcommand{\resVarTokTwoHundredTenPFLightInfeasible}{0}
\newcommand{\resVarTokTwoHundredTenPFLightMaxLoss}{28.4}

\newcommand{\resVarTokTwoHundredTenPFLightWorse}{2}

\newcommand{\resVarTokTwoHundredTenPFMaxInfeasible}{0}
\newcommand{\resVarTokTwoHundredTenPFMaxMaxLoss}{7.9}

\newcommand{\resVarTokTwoHundredTenPFMaxWorse}{5}

\newcommand{\resVarTokTwoHundredTenPFMinInfeasible}{0}
\newcommand{\resVarTokTwoHundredTenPFMinMaxLoss}{4.6}

\newcommand{\resVarTokTwoHundredTenPFMinWorse}{3}

\newcommand{\resVarTokTwoHundredTenSLInfeasible}{0}
\newcommand{\resVarTokTwoHundredTenSLMaxLoss}{3.6}

\newcommand{\resVarTokTwoHundredTenSLWorse}{3}
\newcommand{\resVarTokTwoHundredTwentyFiveInstances}{9}

\newcommand{\resVarTokTwoHundredTwentyFivePFCheapInfeasible}{0}
\newcommand{\resVarTokTwoHundredTwentyFivePFCheapMaxLoss}{0.0}
\newcommand{\resVarTokTwoHundredTwentyFivePFCheapWorse}{0}

\newcommand{\resVarTokTwoHundredTwentyFivePFLightInfeasible}{0}
\newcommand{\resVarTokTwoHundredTwentyFivePFLightMaxLoss}{28.4}

\newcommand{\resVarTokTwoHundredTwentyFivePFLightWorse}{2}

\newcommand{\resVarTokTwoHundredTwentyFivePFMaxInfeasible}{0}
\newcommand{\resVarTokTwoHundredTwentyFivePFMaxMaxLoss}{10.3}

\newcommand{\resVarTokTwoHundredTwentyFivePFMaxWorse}{6}

\newcommand{\resVarTokTwoHundredTwentyFivePFMinInfeasible}{0}
\newcommand{\resVarTokTwoHundredTwentyFivePFMinMaxLoss}{1.3}

\newcommand{\resVarTokTwoHundredTwentyFivePFMinWorse}{4}

\newcommand{\resVarTokTwoHundredTwentyFiveSLInfeasible}{0}
\newcommand{\resVarTokTwoHundredTwentyFiveSLMaxLoss}{1.3}

\newcommand{\resVarTokTwoHundredTwentyFiveSLWorse}{5}
\newcommand{\resVarTokTwoNoBoundInstances}{9}

\newcommand{\resVarTokTwoNoBoundPFCheapInfeasible}{0}
\newcommand{\resVarTokTwoNoBoundPFCheapMaxLoss}{0.0}
\newcommand{\resVarTokTwoNoBoundPFCheapWorse}{0}

\newcommand{\resVarTokTwoNoBoundPFLightInfeasible}{0}
\newcommand{\resVarTokTwoNoBoundPFLightMaxLoss}{30.8}

\newcommand{\resVarTokTwoNoBoundPFLightWorse}{2}

\newcommand{\resVarTokTwoNoBoundPFMaxInfeasible}{0}
\newcommand{\resVarTokTwoNoBoundPFMaxMaxLoss}{8.8}

\newcommand{\resVarTokTwoNoBoundPFMaxWorse}{5}

\newcommand{\resVarTokTwoNoBoundPFMinInfeasible}{0}
\newcommand{\resVarTokTwoNoBoundPFMinMaxLoss}{0.0}
\newcommand{\resVarTokTwoNoBoundPFMinWorse}{0}

\newcommand{\resVarTokTwoNoBoundSLInfeasible}{0}
\newcommand{\resVarTokTwoNoBoundSLMaxLoss}{0.0}
\newcommand{\resVarTokTwoNoBoundSLWorse}{0}
\newcommand{\resVarTokTwoTwoHundredInstances}{8}

\newcommand{\resVarTokTwoTwoHundredPFCheapInfeasible}{0}
\newcommand{\resVarTokTwoTwoHundredPFCheapMaxLoss}{0.0}
\newcommand{\resVarTokTwoTwoHundredPFCheapWorse}{0}

\newcommand{\resVarTokTwoTwoHundredPFLightInfeasible}{0}
\newcommand{\resVarTokTwoTwoHundredPFLightMaxLoss}{30.4}

\newcommand{\resVarTokTwoTwoHundredPFLightWorse}{2}

\newcommand{\resVarTokTwoTwoHundredPFMaxInfeasible}{0}
\newcommand{\resVarTokTwoTwoHundredPFMaxMaxLoss}{8.7}

\newcommand{\resVarTokTwoTwoHundredPFMaxWorse}{4}

\newcommand{\resVarTokTwoTwoHundredPFMinInfeasible}{0}
\newcommand{\resVarTokTwoTwoHundredPFMinMaxLoss}{0.0}
\newcommand{\resVarTokTwoTwoHundredPFMinWorse}{0}

\newcommand{\resVarTokTwoTwoHundredSLInfeasible}{0}
\newcommand{\resVarTokTwoTwoHundredSLMaxLoss}{0.0}
\newcommand{\resVarTokTwoTwoHundredSLWorse}{0}

\newcommand{\resXsolverCpSatOneProven}{8}

\newcommand{\resXsolverGecodeProven}{1}

\newcommand{\resXsolverHighsProven}{2}
\newcommand{\resXsolverHighsSolved}{6}
\newcommand{\resXsolverInstances}{10}

\usepackage{placeins}
\usepackage{etoolbox}
\newcommand{\tablefontsizes}{%
  \renewcommand{\small}{\fontsize{9}{10}\selectfont}%
  \renewcommand{\footnotesize}{\fontsize{8}{9}\selectfont}%
  \renewcommand{\scriptsize}{\fontsize{7}{8}\selectfont}}
\AtBeginEnvironment{table}{\tablefontsizes}
\AtBeginEnvironment{table*}{\tablefontsizes}
\newcommand{\artifacturl}{\url{https://github.com/SBU-CPS-Lab/SafeDSE}}
\newcommand{\experimentsurl}{\url{https://github.com/SBU-CPS-Lab/SafeDSE-experiments}}

\begin{document}

\title{SafeDSE: Joint Exploration of Safety Architecture Patterns, Mapping,
and Scheduling for Dataflow Applications With Safety Arguments
(Extended Version)}

\author{Fatemeh~Bahrami
        and~Seyed-Hosein~Attarzadeh-Niaki%
\thanks{This technical report is the extended version of a shorter paper
that has been submitted for publication. It adds the material that the page
limit of the shorter version leaves out: full model and verifier details,
the complete pattern catalog and parameters, the proof of the potential
characterization, all measured results per instance, and the negative
results of the development.}%
\thanks{The authors are with the Faculty of Computer Science and
Engineering, Shahid Beheshti University, Tehran, Iran (e-mail:
farnoosh3@gmail.com; h\_attarzadeh@sbu.ac.ir). Corresponding author:
Seyed-Hosein Attarzadeh-Niaki.}}

\markboth{SafeDSE: Extended Technical Report}%
{Bahrami and Attarzadeh-Niaki: SafeDSE (Extended Version)}

\maketitle
\suppressfloats[t]                %
\begin{abstract}
Embedded platforms increasingly host safety-related streaming functions next
to less critical ones. Whether a safety architecture pattern, such as a
doer/checker pair, protects a function depends on the cores that run
its components. Their schedule must meet the throughput requirement,
and mixing safety integrity levels (SILs) on one core raises development
cost. This report presents \safedse{}, a design space exploration framework
for synchronous dataflow applications. One declarative constraint model
decides the pattern of each actor, the placement that the fault model
requires, the static-order schedule, and the platform, and prices SIL
promotion. Guards activate the chosen patterns in a fixed graph that holds
all of them, and node potentials encode throughput. An independent verifier
re-checks every solution, and a generator turns its checks into evidence of
a safety argument in Goal Structuring Notation. On
\resBaselineInstances{} benchmark instances under period bounds, no rule
that fixes patterns before mapping was both always feasible and never more
expensive under conservative pricing, while joint exploration was. In the
generated arguments, \resGsnBothSolutionShare{} to
\resGsnNvpSolutionShare{}\,\% of the leaves cite verifier checks; the rest
are claims a mapping cannot discharge. Optimality within the model was
proven for all synthetic graphs with up to \resScalMaxProvenActors{}
actors. This extended version adds the full model, the verifier and
generator in detail, the complete catalog and parameters, and a proof of
the potential characterization. It also reports every experiment per
instance, a mutation analysis of the verifier, and the negative results of
the development.
\end{abstract}

\begin{IEEEkeywords}
Assurance cases, constraint programming, design space exploration,
functional safety, mixed criticality, multiprocessor scheduling, safety
architecture patterns, synchronous dataflow.
\end{IEEEkeywords}

\section{Introduction}\label{sec:intro}

\begin{figure*}[!t]
  \centering
  \includegraphics[width=\textwidth]{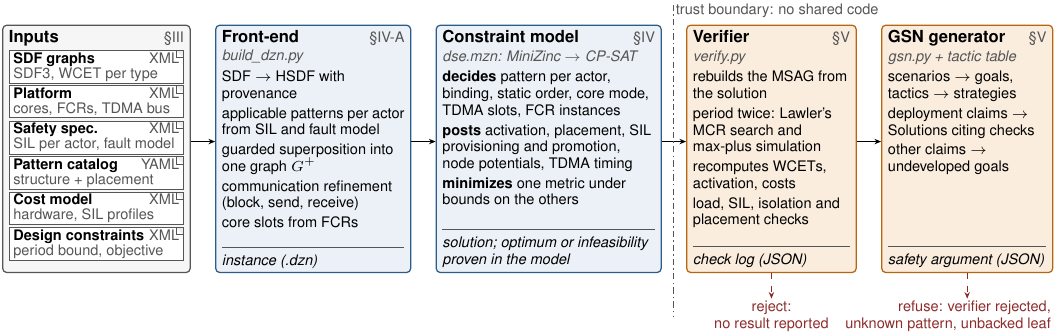}
  \caption{\safedse{} design flow. The front-end turns all design
  alternatives into one fixed constraint instance; the model decides and
  proves the optimum within the model. Only checks of an independent
  verifier, which shares no code with the model, become evidence in the
  generated argument; dashed arrows are rejections and refusals.}
  \label{fig:flow}
\end{figure*}

\IEEEPARstart{E}{mbedded} systems increasingly integrate functions of
different criticality on one multiprocessor
platform~\cite{burns2018survey}. Streaming functions, which process images,
audio, or sensor data, are often specified as synchronous dataflow (SDF)
graphs~\cite{lee1987synchronous} and must sustain a throughput. Under
IEC~61508, each safety function has a safety integrity level (SIL), and the
techniques and measures required for its software become stricter with the
SIL~\cite{iec61508part3,myklebust2015software}. Experts estimated that
moving from SIL~2 to SIL~3 may double the development
cost~\cite{myklebust2015software}. To integrate such a function, a designer
chooses a safety architecture pattern (e.g., a doer/checker pair or a
2-of-2 pair), the cores that run its components, a schedule that meets the
throughput requirement, and the cores to buy.

Each part of this problem has been studied. Design space exploration (DSE)
with constraint programming (CP) maps and schedules SDF graphs on
multiprocessors under throughput constraints and proves the result
optimal~\cite{bonfietti2013maximum,rosvall2017throughput,rosvall2018flexible},
or proves that no mapping meets the requirements~\cite{jordao2023design}.
Safety architecture patterns have been cataloged with their structure,
tactics, and
costs~\cite{armoush2010design,preschern2013building,koopman2016challenges}.
Optimization methods have allocated and decomposed SILs to reduce
development
cost~\cite{tamasselicean2015design,parker2013automatic,zaheri2025asil},
added redundancy for reliability~\cite{glass2007reliability,kang2014static},
and selected architectures from templates~\cite{nuzzo2020optimized}.

However, the choice of a pattern and its deployment depend on each other.
A checker covers random hardware faults of its doer's core only if it runs
on another core; on the same core, it covers only systematic faults
(\cref{sec:bg-safety}). Each checker or replica adds execution time and
often a cycle to the dataflow graph, so a pattern that meets the throughput
requirement under one mapping can miss it under another. Actors of
different SILs on one core without certified partitioning promote the
lower-SIL ones, so saving cores costs development effort. DSE tools take
redundancy as an input, and catalogs describe patterns without a platform.
The methods that combine both search
heuristically~\cite{tamasselicean2015design,kang2014static,parker2013automatic},
separate redundant parts by a fixed rule~\cite{zaheri2025asil}, or choose
patterns without a timing model~\cite{gildantas2023toolchain}. Safety
arguments are built for a given
design~\cite{hawkins2015weaving,denney2018tool}, and a DSE tool delivers no
evidence for the claims of the argument. It therefore remains unclear
whether pattern choice, placement, scheduling, and SIL cost can be decided
in one exact search, and whether the chosen design can carry its own safety
evidence.

This report presents \safedse{}, a DSE framework that takes these decisions
in one declarative constraint model (\cref{fig:flow}). A front-end places
every applicable pattern, every communication actor that a remote channel
may need, and every core that can be bought in one fixed graph. A
\mzn{}~\cite{nethercote2007minizinc} model, solved with
\cpsat{}~\cite{perron2026cpsat}, chooses the pattern of each actor, the core
and static order of each node, and the platform, and proves the result
optimal within the model. A verifier that shares no code with the model
re-checks each solution, and a generator turns the checks into evidence in a
safety argument in Goal Structuring Notation (GSN)~\cite{gsn2021standard}.
On \resBaselineInstances{} benchmark instances under period bounds, no rule
that fixes the pattern before mapping was both always feasible and never
more expensive under our conservative pricing. With a lower-SIL doer and
priced diverse development, the cost-aware rule reached the joint optimum.
Joint exploration was feasible and optimal in all cases.

The contributions are:
\begin{enumerate}
  \renewcommand{\labelenumi}{C\arabic{enumi})}
  \item\label{contrib:superposition} Guarded superposition: patterns with
  their explicit voters, communication refinement, and platform instances
  become guarded parts of one fixed graph, so pattern choice, replica
  placement, static-order scheduling, and platform selection form one CP
  problem (\cref{sec:model-superposition}).
  \item\label{contrib:safety} A placement-aware safety model: a pattern is a
  structure plus placement relations tied to fault classes. Each core is
  provisioned to a SIL under the IEC~61508 rule on mixing SILs or certified
  partitioning, and a cost profile prices SIL promotion
  (\cref{sec:bg-safety,sec:model-safety}).
  \item\label{contrib:throughput} Throughput on a guarded graph: the
  potential form of cyclic scheduling~\cite{hanen1995study} is posted on the
  mapping- and scheduling-aware graph under reified binding, order, and
  activation conditions. A CP solver thus decides mapping, static orders, and
  period together without a throughput propagator. The encoding excludes
  deadlock and couples applications that share a core to one period
  (\cref{sec:model-throughput}).
  \item\label{contrib:assurance} Evidence-backed safety arguments: an
  independent verifier computes each period by two disjoint methods and logs
  every check. A GSN generator cites these records for the deployment claims
  and refuses to emit an argument with an unsupported leaf
  (\cref{sec:verification}).
\end{enumerate}

\emph{Relation to the shorter version.} This report extends a shorter
paper that has a page limit. It keeps the structure of that paper and
restores what was cut. \Cref{sec:related} gives the related work in
more detail. \Cref{sec:background} adds the complete catalog with its
scenarios and the cost profiles. \Cref{sec:model} describes every
constraint family of the model, including communication timing, the static
order, the implied constraints, and the exact variant of the period
constraints, and \cref{app:proof} proves the potential characterization in
full. \Cref{sec:verification} lists every verifier record, the tactic table,
and the refusal conditions of the generator. \Cref{sec:eval} reports every
experiment per instance, including the head-to-head with \desyde{}, input
variants, a cost-model extension, model size, cross-solver runs, and a
mutation analysis of the verifier. \Cref{sec:lessons} collects the negative
results and the defects that the evaluation found in the tool, and
\cref{app:instances,app:artifact} list the instances, parameters, and the
scripts that produce each table and figure. All numbers come from the same
measured data as the shorter paper. The model and the tools are available
at \artifacturl, and the scripts and data of all experiments at
\experimentsurl.

\FloatBarrier
\section{Related Work}\label{sec:related}

\subsection{Dataflow Mapping and Constraint-Based DSE}\label{sec:related-dse}
Throughput-constrained mapping of SDF graphs has been studied with
heuristics and with exact methods. Stuijk et
al.~\cite{stuijk2007multiprocessor} allocated resources heuristically on top
of exact throughput analysis~\cite{ghamarian2006throughput}. Bonfietti et
al.\ posted a throughput constraint for SDF graphs in a CP
solver~\cite{bonfietti2009throughput} and later mapped and scheduled SDF
graphs for maximum throughput with complete CP~\cite{bonfietti2013maximum}.
\desyde{}~\cite{rosvall2014constraint,rosvall2018flexible} extended CP-based
exploration to heterogeneous platforms with time-division multiple access
(TDMA) communication and to trade-offs of throughput against buffer size
and power. Rosvall et al.~\cite{rosvall2017throughput} added throughput
propagators and hard constraints for some applications next to best-effort
ones. IDeSyDe~\cite{jordao2024idesyde} makes such models modular, and
Jord\~ao et al.~\cite{jordao2023design} used it to map avionics functions to
integrated modular avionics (IMA) platforms. Their CP model minimizes the
number of processing elements and either finds a mapping that meets all
requirements or proves that none exists. The redundant part of a flight
function is an input. Since the model is optimistic about partitioning, an
infeasibility carries over to partitioned designs.

Cyclic scheduling characterizes periodic schedules by node
potentials~\cite{hanen1995study}. Hanen and Munier showed that the least
period of a uniform graph is the ratio of its critical circuit. Fixing a
cyclic order of the tasks on each processor adds uniform
constraints (the tie-breaking graph) whose critical circuit gives the best
schedule for that order. This construction is the analogue of the
static-order and wrap edges of the mapping- and scheduling-aware graph
(MSAG) used in this report. Govindarajan et al.~\cite{govindarajan1996framework}
put Reiter's characterization into one integer linear program that schedules
loops and maps them to function units.

Reliability-aware synthesis adds redundancy during the search. Izosimov et
al.~\cite{izosimov2005design} chose re-execution and replication for
fault-tolerant distributed systems under time and cost constraints. Glaß et
al.~\cite{glass2007reliability} and Kang et al.~\cite{kang2014static}
replicated or re-executed tasks, and Bolchini and
Miele~\cite{bolchini2013reliability} hardened tasks by duplication with
comparison or triple modular redundancy. These works search evolutionarily
against transient faults. Kang et al.\ also drop low-criticality task graphs under
faults, use passive replication, derive response-time bounds, and search
with SPEA-II. In the CP line, the redundancy of an application is an input;
in the reliability line, the search is heuristic and certification cost is
not modeled. \safedse{} keeps exact CP search and the MSAG of \desyde{},
makes the pattern a decision, and posts the potential form on the MSAG under
reified conditions instead of a throughput propagator
(\cref{sec:model-throughput}).

\subsection{Safety Patterns, SIL Cost, and Mixed-Criticality Optimization}\label{sec:related-safety}
Pattern catalogs describe safety architectures with their effect on safety,
reliability, and cost~\cite{armoush2010design,preschern2013building}.
Douglass~\cite{douglass2002realtime} described, among others, the
Monitor-Actuator, Homogeneous Redundancy, Heterogeneous Redundancy, and
triple modular redundancy patterns, and Koopman and
Wagner~\cite{koopman2016challenges} explained how a high-SIL monitor lets a
low-SIL actuator carry a high-SIL function. The designer applies these
patterns by hand. Mixed-criticality scheduling, which started from Vestal's
criticality-dependent WCETs~\cite{vestal2007preemptive}, solves another
problem: it guarantees timing at several levels of
assurance~\cite{burns2018survey} and does not choose redundancy. In this
report, ``different criticality'' means different SILs, not Vestal's sense.

The closest work is by T\u{a}ma\c{s}-Selicean and
Pop~\cite{tamasselicean2011design,tamasselicean2015design}. They mapped
tasks of different SILs to a partitioned architecture with static-cyclic
schedules, assigned tasks to partitions, and sized the partition time slots.
Tasks that share a partition are elevated to the highest SIL among them,
and high-SIL tasks are decomposed into redundant lower-SIL tasks from a library
taken from ISO~26262-9, connected through distributor and collector tasks.
They minimized development cost with Tabu Search and cited a certification
overhead of 25 to 100\,\%. Fault tolerance is outside their problem. Our
promotion rule (\cref{sec:bg-safety}) uses the same certification argument,
applied per core. HiP-HOPS synthesizes fault trees from architecture
models~\cite{papadopoulos2011engineering}, and Parker et
al.~\cite{parker2013automatic} decomposed and allocated SILs on them with a
genetic algorithm and a SIL-dependent cost. Zaheri and
Reuss~\cite{zaheri2025asil} solved the decomposition of automotive SILs
(ASILs), mapping, and non-preemptive scheduling as one mixed-integer linear
program (MILP) that minimizes development cost and chain latency. Redundant
subtasks never share an electronic control unit, by a fixed constraint.
ArchEx~\cite{nuzzo2020optimized} selected a minimum-cost architecture from
a template by MILP under a reliability bound and an additive latency bound.
It has an algorithm that is optimal within the error of an approximate
reliability algebra and a lazy one that is sound but possibly sub-optimal;
its patterns are requirement patterns that generate MILP constraints, not
architecture patterns. SafPat~\cite{gildantas2023toolchain}, implemented in
answer set programming and validated by simulated fault injection,
recommends patterns, including the hardware separation of replicas, from
the fault type. The deployment of the extended architecture is then
optimized with the SMT-based synthesis of
AutoFOCUS3~\cite{terzimehic2023safety}, which minimizes safety-critical
coupling between control units. AutoFOCUS3 also respects core SIL limits
with time-triggered schedules, with a binary search on the schedule length
over an SMT solver~\cite{voss2013deployment}. It places given master and
standby replicas on nodes with different power supplies, choosing hot or
cold standby from the fault-tolerance and recovery times, without
optimization~\cite{becker2014formal}. As \cref{tab:related} shows, none of
these works decides the pattern, its placement for a given fault model, and
a throughput-feasible schedule in one exact search.

\begin{table}[!t]
  \caption{Comparison With Related Work}
  \label{tab:related}
  \centering\footnotesize
  \setlength{\tabcolsep}{2pt}
  \begin{tabular*}{\columnwidth}{@{\extracolsep{\fill}}l cccccc@{}}
    \toprule
    Approach & Thr. & Exact & Pat. & Place. & Cost & Arg. \\
    \midrule
    \desyde{}~\cite{rosvall2014constraint,rosvall2018flexible} & \cmark & \cmark & \xmark & \xmark & \xmark & \xmark \\
    Bonfietti et al.~\cite{bonfietti2009throughput,bonfietti2013maximum} & \cmark & \cmark & \xmark & \xmark & \xmark & \xmark \\
    Jord\~ao et al.~\cite{jordao2023design} & \pmark & \cmark & \xmark & \xmark & \xmark & \xmark \\
    Kang et al.~\cite{kang2014static} & \pmark & \xmark & \pmark & \pmark & \xmark & \xmark \\
    T\u{a}ma\c{s}-Selicean, Pop~\cite{tamasselicean2015design} & \pmark & \xmark & \pmark & \xmark & \cmark & \xmark \\
    Parker et al.~\cite{parker2013automatic} & \xmark & \xmark & \pmark & \xmark & \cmark & \xmark \\
    Zaheri, Reuss~\cite{zaheri2025asil} & \pmark & \cmark & \pmark & \pmark & \cmark & \xmark \\
    ArchEx~\cite{nuzzo2020optimized} & \pmark & \cmark & \pmark & \xmark & \xmark & \xmark \\
    AutoFOCUS3~\cite{voss2013deployment,becker2014formal} & \pmark & \cmark & \pmark & \pmark & \xmark & \xmark \\
    SafPat~\cite{gildantas2023toolchain,terzimehic2023safety} & \xmark & \pmark & \cmark & \cmark & \xmark & \xmark \\
    AdvoCATE~\cite{denney2018tool} & \xmark & \xmark & \xmark & \xmark & \xmark & \cmark \\
    \midrule
    \safedse{} & \cmark & \cmark & \cmark & \cmark & \cmark & \cmark \\
    \bottomrule
  \end{tabular*}

  \vspace{2pt}\parbox{\columnwidth}{\scriptsize Columns: the decisions
  and outputs of the problem in \cref{sec:bg-problem}. Thr.: \cmark{} SDF
  throughput, \pmark{} deadline or latency. Exact: optimum proven, \pmark{}
  for the deployment only. Pat.: \cmark{} catalog patterns chosen,
  \pmark{} redundancy or SIL decomposition only. Place.: \cmark{} separation
  from the fault model, \pmark{} fixed rule. Cost: SIL-dependent
  development cost. Arg.: safety argument produced.}
\end{table}

\subsection{Model-Based Safety Argument Generation}\label{sec:related-gsn}
GSN~\cite{gsn2021standard} is the common notation for safety arguments, and
Preschern et al.~\cite{preschern2013building} gave a GSN argument for each
pattern. Hawkins et al.~\cite{hawkins2015weaving} linked argument elements
to design models so that argument parts can be generated from them.
AdvoCATE~\cite{denney2018tool} supports the creation of arguments, the
integration of formal methods, the instantiation of argument patterns from
structured data, abstraction, queries, and the verification of the argument
structure. Resolute~\cite{gacek2014resolute} discharges claims with
analyses of an AADL architecture model through rules written in a
domain-specific language, so that the evidence is computed. These tools take
the design as given. Assurance~2.0~\cite{bloomfield2021assurance} asks for
deductive argument steps, a search for defeaters, and an assessment of
confidence. In \safedse{}, the argument describes the design that the
exploration chose, and each deployment claim cites a check of an
independent verifier (\cref{sec:verification}); the search for defeaters
stays with the engineer.

\FloatBarrier
\section{Preliminaries and Problem Formulation}\label{sec:background}

\subsection{Application and Platform Models}\label{sec:bg-models}
An application is a synchronous dataflow (SDF) graph~\cite{lee1987synchronous}:
actors fire with fixed rates on channels that may hold initial tokens. Like
\desyde{}~\cite{rosvall2018flexible}, we map and schedule the unfolded
homogeneous SDF (HSDF) graph, in which every node fires once per iteration
and records its SDF actor and copy index. Each core runs its nodes
self-timed in a static order.

\begin{definition}[Period]\label{def:period}
Let $G$ be an HSDF graph with execution time $T_i$ of node $i$ and $\tok_{ij}$
initial tokens on edge $(i,j)$. The cycle ratio of a cycle $C$ is
$\sum_{i\in C}T_i / \sum_{(i,j)\in C}\tok_{ij}$, and $\mathrm{MCR}(G)$ is the
maximum cycle ratio over all cycles of $G$. The period of $G$ is
$\mu=\mathrm{MCR}(G)$ and its throughput is
$1/\mu$~\cite{ghamarian2006throughput}.
\end{definition}

The platform offers core types. A core type has a worst-case execution time
(WCET) per actor and operating mode, an integer in the benchmark's time
unit. A missing entry forbids the binding: without a timing figure, the
binding is not certified. It also has the highest SIL
to which software on it can be certified, a memory size, and optionally
certified partitioning at a cost. Each mode has a price and a power. Cores
are bought as cards, each an instance of a card template. The cores on a
card share power and clock, so a card is a fault containment region (FCR).
Remote channels use a TDMA bus whose slots go to the sending
cores~\cite{rosvall2018flexible}. \Cref{fig:example-a} shows the running
example and the FCR templates of the platform, and \cref{tab:platforms}
lists all card templates of the evaluation.

\begin{table*}[!t]
  \caption{Card Templates of the Platforms in the Evaluation}
  \label{tab:platforms}
  \centering\footnotesize
  \setlength{\tabcolsep}{3pt}
\begin{tabular*}{\textwidth}{@{\extracolsep{\fill}}l l l c c r >{\raggedright\arraybackslash}p{1.55in} c c r@{}}
\toprule
Platform file & Card & Core type & Cores & Max. & Card price & Modes (cycle factor, core price) & $\silmax$ & Part. & Memory \\
\midrule
\code{mixed.xml} & R-card & cortexR & 2 & 3 & 120 & standard (1, 40); eco (1.3, 40) & 4 & yes, 15 & 64000 \\
 & M-card & cortexM & 4 & 4 & 25 & standard (1.6, 8); performance (1.2, 11) & 2 & no & 16000 \\
\addlinespace[1pt]
\code{mixed\_noiso.xml} & R-card & cortexR & 2 & 3 & 120 & standard (1, 40); eco (1.3, 40) & 4 & no & 64000 \\
 & M-card & cortexM & 4 & 4 & 25 & standard (1.6, 8); performance (1.2, 11) & 2 & no & 16000 \\
\addlinespace[1pt]
\code{mixed\_3type.xml} & R-card & cortexR & 2 & 3 & 120 & standard (1, 40); eco (1.3, 40) & 4 & no & 64000 \\
 & P-card & ppcE200 & 2 & 3 & 140 & standard (1.15, 45) & 4 & no & 48000 \\
 & M-card & cortexM & 4 & 4 & 25 & standard (1.6, 8); performance (1.2, 11) & 2 & no & 16000 \\
\addlinespace[1pt]
\code{mixed\_4type.xml} & R-card & cortexR & 2 & 3 & 120 & standard (1, 40); eco (1.3, 40) & 4 & no & 64000 \\
 & P-card & ppcE200 & 2 & 3 & 140 & standard (1.15, 45) & 4 & no & 48000 \\
 & M-card & cortexM & 4 & 4 & 25 & standard (1.6, 8); performance (1.2, 11) & 2 & no & 16000 \\
 & S-card & sparcLEON & 2 & 3 & 150 & standard (1.25, 48) & 4 & no & 48000 \\
\addlinespace[1pt]
\code{mixed\_costly.xml} & R-card & cortexR & 2 & 3 & 480 & standard (1, 160); eco (1.3, 160) & 4 & no & 64000 \\
 & M-card & cortexM & 4 & 4 & 100 & standard (1.6, 32); performance (1.2, 44) & 2 & no & 16000 \\
\bottomrule
\end{tabular*}

\vspace{2pt}\parbox{\textwidth}{\scriptsize Used by: \code{mixed.xml}: \code{p\_rasta}; \code{mixed\_noiso.xml}: \code{f\_random\_hw}, \code{d\_*}, \code{c\_pat}, RQ4; \code{mixed\_3type.xml}: \code{f\_sw3}, \code{f\_both3}; \code{mixed\_4type.xml}: \code{v\_nvp}; \code{mixed\_costly.xml}: RQ4 (old \cref{fig:profcons}). Max.: cards the front-end may instantiate. Cycle factor: multiplier of the nominal execution time in that mode. Part.: certified partitioning and its cost per partitioned core. TDMA bus of the card platforms: 16 slots, at most 8 per core. Prices in cost units; they are illustrative.}

\end{table*}

\subsection{Safety Model}\label{sec:bg-safety}
The safety specification assigns each SDF actor $a$ a required SIL
$\silreq_a\in\{0,\dots,4\}$ from a hazard and risk analysis (SIL~0: no
requirement)~\cite{iec61508part3}. It fixes a fault model: random hardware
faults (H), systematic faults (S) that differ between core types and their
toolchains or between algorithms, or both. Design faults of the application
software that reappear in every implementation are not in S; they need
diverse software development, which \safedse{} does not establish.

A safety architecture
pattern~\cite{armoush2010design,preschern2013building,koopman2016challenges,douglass2002realtime}
adds components, with edges and initial tokens, to the actor it protects,
the owner. The edge that returns a verdict to the owner holds one token, so
the fault reaction starts within one period. Each component fires once per
iteration and produces on all its outputs; a detected fault changes token
values, not the presence of tokens, so the timing analysis stays valid under
a fault. Placement relations link the roles by \code{SAME} or
\code{DIFFERENT} core, \code{DIFFERENT\_FCR}, or \code{DIVERSE} (different
core types), mostly per fault class. On a platform with a single core type,
\code{DIVERSE} cannot hold, so heterogeneity is a precondition for any
defense against systematic faults.

The catalog (\cref{tab:catalogue}) gives the SIL range for which a pattern
is admitted, the fault classes it covers, a recurring cost, and the tactics
and general scenarios used by the argument generator
(\cref{sec:verification}). \Cref{tab:catalog-full} lists the remaining
fields of each record, and \cref{tab:scenarios} the scenarios. The SIL
range is necessary, not sufficient: \safedse{} computes no failure rates,
hardware fault tolerance, or diagnostic coverage. The ceiling of the
pattern \code{none} at SIL~1 follows IEC~61508's expectation of diagnostic
coverage from SIL~2 upward; it is the most consequential number of the
catalog and therefore lives in the data, where it can be argued with. The
2-of-2 pair runs only while both channels agree, i.e., 1oo2 in IEC~61508
notation~\cite{iec61508part4,iec61508part6}. The pattern names are
Koopman's; Koopman and Wagner~\cite{koopman2016challenges} published the
doer/checker (monitor/actuator) pair, Douglass~\cite{douglass2002realtime}
has counterparts for four rows, and Armoush~\cite{armoush2010design} for
three (\cref{tab:catalogue}). Armoush calls the mixed-SIL doer/checker and
the diverse doer/checker homogeneous and heterogeneous duplex patterns, and
the low-SIL diverse doer/checker is our combination of two catalog
patterns. N-version programming is IEC~61508-7 C.4.4.

In \cref{fig:example-b}, the two doer/checker patterns have the same
structure and differ only in \code{DIFFERENT} versus \code{SAME}, and the
fault model selects between them. A pattern is applicable to $a$ if its SIL
range contains $\silreq_a$, it covers the fault model, and the platform has
enough core types certifiable at that SIL for its \code{DIVERSE} groups.
An inapplicable pattern is dropped as a candidate; an instance is rejected
only if an actor has no applicable pattern left.

If software implements safety functions of different SILs, IEC~61508
requires that all of it is developed to the highest SIL unless their
independence is shown~\cite{iec61508part3}. We apply this rule per core and
accept certified partitioning as evidence of independence; otherwise, a
low-SIL actor next to a high-SIL actor is promoted. This assumes that cores
interfere only through the bus, not through shared memory, caches, or
I/O~\cite{easa2022amc20193}. Communication actors are exempt under the
black-channel principle~\cite{iec61784part3}: the communicating actors
detect corruption, loss, and delay end to end. Whether a transfer is a
safety function is a system-level assumption; the front-end offers two
alternatives (\cref{sec:model-comm}), and the generated argument states the
one used. Development cost is a baseline cost times a multiplier $k(s)$
for SIL $s$. Three profiles of $k$, which disagree strongly, come from the
estimates that Myklebust et al.\ collected~\cite{myklebust2015software}:
their own (default), Klosterman's, and DO-178B ratios with levels D and C as
SIL~2 and~3 (\cref{tab:profiles}). Klosterman's table has two rows that map
to SIL~3; the profile takes the higher one, and the evaluation also uses
the lower one.

\begin{figure}[!t]
  \centering
  \subfloat[]{\includegraphics{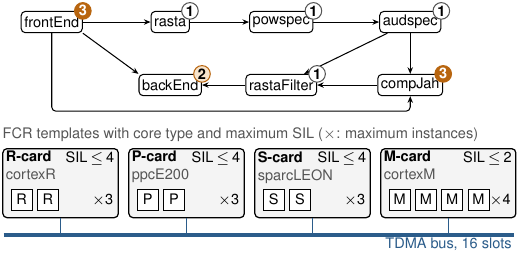}\label{fig:example-a}}\\[-2pt]
  \subfloat[]{\includegraphics{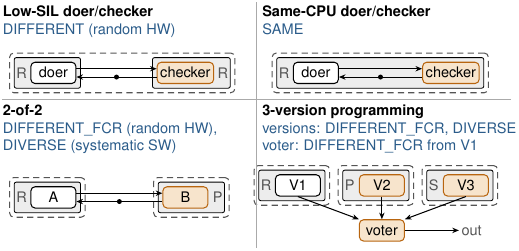}\label{fig:example-b}}
  \caption{(a) Running example: RASTA speech front end with the required
  SIL of each actor (badge), and the FCR templates of the platform. (b) Four
  catalog patterns with their placement relations (white: owner, shaded:
  added component, dot: initial token). Both doer/checker patterns have the
  same structure and differ only in placement, which the fault model
  selects; the evaluation instances use subsets of the catalog.}
  \label{fig:example}
\end{figure}

\begin{table}[!t]
  \caption{Pattern Catalog (Excerpt of the Data File)}
  \label{tab:catalogue}
  \centering\footnotesize
  \setlength{\tabcolsep}{3pt}
\begin{tabular*}{\columnwidth}{@{\extracolsep{\fill}}l l >{\raggedright\arraybackslash}p{1.22in} c c@{}}
\toprule
Pattern & Adds & Placement per fault class & SIL & Covers \\
\midrule
none & -- & -- & 0--1 & -- \\
Low-SIL D/C & $c$ & \mbox{H: $o\neq_{c}c$} & 1--2 & H \\
Same-CPU D/C & $c$ & \mbox{$o=c$} & 1--2 & S \\
2-of-2 & $b$ & \mbox{H: $o\neq_{f}b$}; \mbox{S: $o\neq_{t}b$} & 3--4 & H, S \\
High-SIL isolated D/C & $c$ & \mbox{H: $o\neq_{f}c$} & 3 & H \\
Mixed-SIL D/C & $c_1,c_2$ & \mbox{H: $o\neq_{f}c_1$}, \mbox{$c_1\neq_{f}c_2$}; \mbox{S: $c_1\neq_{t}c_2$} & 3--4 & H, S \\
Dual 2-of-2 & $b,s_a,s_b$ & \mbox{H: $o\neq_{f}b$}, \mbox{$s_a\neq_{f}s_b$}, \mbox{$o\neq_{f}s_a$} & 3--4 & H \\
1-of-2 failover & $b$ & \mbox{H: $o\neq_{f}b$} & 0--1 & H \\
Low-SIL diverse D/C & $c$ & \mbox{H: $o\neq_{f}c$}; \mbox{S: $o\neq_{t}c$} & 1--2 & H, S \\
3-version (NVP)$^\dagger$ & $v_2,v_3,w$ & \mbox{H: $\neq_{f}\{o,v_2,v_3\}$}, \mbox{$o\neq_{f}w$}; \mbox{S: $\neq_{t}\{o,v_2,v_3\}$} & 3--4 & H, S \\
\bottomrule
\end{tabular*}

\vspace{2pt}\parbox{\columnwidth}{\scriptsize Roles: $o$ owner (the application actor); $c,c_1,c_2$ checkers; $b$ second channel of the owner's pair; $s_a,s_b$ standby pair; $v_2,v_3$ versions; $w$ voter. $\neq_{c}$, $\neq_{f}$, $\neq_{t}$: different core, FCR, core type; $=$: same core. S is covered through different core types or, for Same-CPU D/C, the checker's algorithm. D/C: doer/checker. NVP: N-version programming. $^\dagger$Separate catalog, used in one case study; its voter $w$ is not replicated. Sources: Koopman (names; D/C pair in~\cite{koopman2016challenges}): all except 3-version (NVP); Douglass~\cite{douglass2002realtime}: Low-SIL D/C, High-SIL isolated D/C, 1-of-2 failover, Low-SIL diverse D/C; Armoush~\cite{armoush2010design}: Mixed-SIL D/C, Low-SIL diverse D/C, 3-version (NVP).}

\end{table}

\begin{table}[!t]
  \caption{Development-Cost Profiles}
  \label{tab:profiles}
  \centering\footnotesize
  \setlength{\tabcolsep}{4pt}
\begin{tabular*}{\columnwidth}{@{\extracolsep{\fill}}l rrrrr@{}}
\toprule
Profile & SIL~0 & 1 & 2 & 3 & 4 \\
\midrule
Myklebust et al. & 1.00 & 1.13 & 2.25 & 5.18 & 9.06 \\
Klosterman & 1.00 & 1.13 & 1.23 & 1.70 & 2.35 \\
Klosterman (lower row) & 1.00 & 1.13 & 1.23 & 1.40 & 1.59 \\
DO-178B & 1.00 & 2.00 & 3.00 & 5.00 & 9.00 \\
\bottomrule
\end{tabular*}

\vspace{2pt}\parbox{\columnwidth}{\scriptsize Multiplier $k(s)$ of the baseline development cost (100 units per node). SIL~0 and SIL~4 are extrapolated. Lower row: Klosterman's ASIL~C row for SIL~3, used only in the price sweep of \cref{sec:eval-rq4}.}

\end{table}

\begin{table*}[!t]
  \caption{Pattern Catalog: Structure and Attributes of Every Record}
  \label{tab:catalog-full}
  \centering\footnotesize
  \setlength{\tabcolsep}{3pt}
\begin{tabular*}{\textwidth}{@{\extracolsep{\fill}}l >{\raggedright\arraybackslash}p{1.0in} >{\raggedright\arraybackslash}p{1.15in} >{\raggedright\arraybackslash}p{0.75in} l c c >{\raggedright\arraybackslash}p{0.95in}@{}}
\toprule
Pattern & Components (WCET, $\Delta$SIL) & Edges (tokens) & Inputs to / output from & Failure & Rec. & Mult. & Tactics \\
\midrule
none & -- & -- & -- / -- & none & 0 & -- & -- \\
Low-SIL D/C & $c$ (checker, $-1$) & $o\!\to\!c$ (0), $c\!\to\!o$ (1) & $c$ / $o$ & active & 2 & 1 & SaC, RR \\
Same-CPU D/C & $c$ (checker, $-1$) & $o\!\to\!c$ (0), $c\!\to\!o$ (1) & $c$ / $o$ & active & 1 & 1 & SaC \\
2-of-2 & $b$ (full, 0) & $o\!\to\!b$ (0), $b\!\to\!o$ (1) & $b$ / $o$ & silent & 2 & 1 (2$^{d}$) & RR, Cmp, Vot, DR \\
High-SIL isolated D/C & $c$ (checker, 0) & $o\!\to\!c$ (0), $c\!\to\!o$ (1) & $c$ / $o$ & silent & 2 & 1 & SaC, RR \\
Mixed-SIL D/C & $c_1$ (checker, $-2$), $c_2$ (checker, $-2$) & $o\!\to\!c_1$ (0), $o\!\to\!c_2$ (0), $c_1\!\to\!c_2$ (1), $c_2\!\to\!c_1$ (1), $c_1\!\to\!o$ (1) & $c_1,c_2$ / $o$ & silent & 3 & 1 & SaC, Cmp, RR, Vot, DR \\
Dual 2-of-2 & $b$ (full, 0), $s_a$ (full, 0), $s_b$ (full, 0) & $o\!\to\!b$ (0), $b\!\to\!o$ (1), $s_a\!\to\!s_b$ (0), $s_b\!\to\!s_a$ (1), $s_a\!\to\!o$ (1) & $b,s_a,s_b$ / $o$ & operational & 4 & 1 & RR, Cmp, Vot, Rep \\
1-of-2 failover & $b$ (full, 0) & $b\!\to\!o$ (1) & $b$ / $o$ & operational & 2 & 1 & RR, Rep \\
Low-SIL diverse D/C & $c$ (checker, $-1$) & $o\!\to\!c$ (0), $c\!\to\!o$ (1) & $c$ / $o$ & active & 2 & 2 & SaC, DR, RR \\
3-version (NVP)$^\dagger$ & $v_2$ (full, 0), $v_3$ (full, 0), $w$ (voter, 0) & $o\!\to\!w$ (0), $v_2\!\to\!w$ (0), $v_3\!\to\!w$ (0) & $v_2,v_3$ / $w$ & masking & 4 & 0.4/1 & RR, DR, Cmp, Vot \\
\bottomrule
\end{tabular*}

\vspace{2pt}\parbox{\textwidth}{\scriptsize Roles as in \cref{tab:catalogue}. WCET: \emph{full}, the owner's WCET (a replica); \emph{checker} or \emph{voter}, the component's own WCET entry. $\Delta$SIL: the catalog's \code{sil\_offset}, the number of levels by which the doer may be developed below the actor's SIL; the model does not use it (\cref{sec:discussion}). Edges: from, to, and initial tokens. Inputs to: components that receive copies of the owner's input channels; output from: the role whose output leaves the pattern. Failure: failure mode of the protected function. Rec.: recurring cost units. Mult.: development-cost multipliers per component in the catalog ($^{d}$when the pattern places the component \code{DIVERSE}); read but not priced by the model. Tactics after Preschern et al.: SaC sanity check, Cmp comparison, RR replication redundancy, DR diverse redundancy, Vot voting, Rep repair. $^\dagger$Separate catalog.}

\end{table*}

\begin{table*}[!t]
  \caption{General Scenarios of the Catalog Patterns}
  \label{tab:scenarios}
  \centering\footnotesize
  \setlength{\tabcolsep}{3pt}
\begin{tabular*}{\textwidth}{@{\extracolsep{\fill}}l l c l >{\raggedright\arraybackslash}p{3.45in}@{}}
\toprule
Pattern & Scen. & Covers & Tactics & General scenario (catalog text) \\
\midrule
Low-SIL D/C & SC1 & -- & Sanity Check & An implausible output of the doer is detected by the checker. \\
 & SC2 & H & Replication Redundancy & A random hardware fault in the doer's processor does not prevent detection, because the checker runs on another processor. \\
\addlinespace[1pt]
Same-CPU D/C & SC1 & -- & Sanity Check & An implausible output of the doer is detected by the checker. \\
\addlinespace[1pt]
2-of-2 & SC1 & -- & Comparison & A fault in either channel is detected as a disagreement between the two channels. \\
 & SC2 & H & Replication Redundancy & A random hardware fault in one channel does not affect the other channel. \\
 & SC3 & S & Diverse Redundancy & A systematic fault is not common to both channels. \\
 & SC4 & -- & Voting & A detected disagreement does not propagate an incorrect result to the consumer. \\
\addlinespace[1pt]
High-SIL isolated D/C & SC1 & -- & Sanity Check & An implausible output of the doer is detected by the equally-high-SIL checker. \\
 & SC2 & H & Replication Redundancy & A random hardware fault in the doer's fault containment region does not disable the checker. \\
\addlinespace[1pt]
Mixed-SIL D/C & SC1 & -- & Sanity Check & An implausible output of the low-SIL doer is detected by the high-SIL checkers. \\
 & SC2 & -- & Comparison & A fault in one checker is detected by the other checker's cross-check. \\
 & SC3 & H & Replication Redundancy & A random hardware fault affects at most one of the doer and the two checkers. \\
 & SC4 & S & Diverse Redundancy & A systematic fault in one checker is not reproduced by the other checker. \\
 & SC5 & -- & Voting & A detected disagreement does not propagate an incorrect result to the consumer. \\
\addlinespace[1pt]
Dual 2-of-2 & SC1 & -- & Comparison & A fault in a channel is detected as a disagreement within its pair. \\
 & SC2 & H & Replication Redundancy & A random hardware fault affects at most one of the four channels. \\
 & SC3 & -- & Repair & The function remains available after the primary pair has failed silent. \\
 & SC4 & -- & Voting & The failed pair's result does not reach the consumer once the standby pair has taken over. \\
\addlinespace[1pt]
1-of-2 failover & SC1 & -- & Repair & The function remains available after one channel stops. \\
\addlinespace[1pt]
Low-SIL diverse D/C & SC1 & -- & Sanity Check & An implausible output of the doer is detected by the checker. \\
 & SC2 & H & Replication Redundancy & A random hardware fault in the doer's fault containment region does not prevent detection. \\
 & SC3 & S & Diverse Redundancy & A systematic fault in the doer's implementation is not reproduced by the checker. \\
\addlinespace[1pt]
3-version (NVP)$^\dagger$ & SC1 & -- & Voting & A fault in any single version is masked by the majority of the remaining versions. \\
 & SC2 & H & Replication Redundancy & A random hardware fault affects at most one of the three versions. \\
 & SC3 & S & Diverse Redundancy & A systematic fault in one version is not reproduced by the other two versions. \\
 & SC4 & -- & Comparison & A disagreement among the versions is detected before the result leaves the component. \\
\bottomrule
\end{tabular*}

\vspace{2pt}\parbox{\textwidth}{\scriptsize Covers: the fault class a scenario needs (H random hardware, S systematic); a scenario outside the fault model becomes an undeveloped goal with the fault model as the reason. --: in scope under every fault model. \code{none} has no scenario. $^\dagger$Separate catalog.}

\end{table*}

\subsection{Problem Statement}\label{sec:bg-problem}
Given the applications with period bounds, the platform, the safety
specification, the catalog, and a cost profile, the problem is to choose a
pattern per SDF actor and, per active HSDF node, a core, a static-order
position, and an implemented SIL. The choice also covers each core's
purchase, mode, provisioned SIL, and partitioning, and the TDMA slots and
buffer sizes. The placement relations of the chosen patterns and all period
bounds must hold, and one metric is minimized under upper bounds on the
others. Patterns are not nested: they apply to the actors of the
application only, which keeps the superposed graph linear in the catalog
size.

\FloatBarrier
\section{The \safedse{} Constraint Model}\label{sec:model}

\subsection{Decision Variables and Guarded Superposition}\label{sec:model-superposition}
The model decides a pattern $\pat_a$ per SDF actor $a$, and per node $i$ a
core $\proc_i$, a successor in that core's static order, and an implemented
SIL $\sil_i$, the systematic capability of its software. Per core slot $p$,
it decides the use, mode, provisioned SIL $\csil_p$, and partitioning
$\partn_p$; per application $z$, a period $\mu_z$; and globally, the TDMA
slots and buffer sizes.

The topology is not a variable. The front-end builds the superposed graph
$\Gplus$, which holds every component of every applicable pattern
(\cref{fig:model-a}). Patterns share a component slot of an actor only if
their components use the same WCET entry (a full replica uses its owner's,
a checker its own). Node $i$ carries the set $\Gamma_i$ of patterns that
use it (all patterns for an application node). Patterns are applied before
unfolding, because a pattern is a decision about a function, not about one
iteration of it, so copy $r$ of a component belongs to copy $r$ of its
owner, $o(i)$. A component slot is itself an SDF actor of $\Gplus$; its
pattern variable is tied to its owner's, which removes one redundant
variable and one symmetry per slot. With $a(i)$ the actor of node $i$ (for
a component, the owner),
\begin{equation}\label{eq:activation}
  \begin{gathered}
  \act_i \Leftrightarrow \pat_{a(i)}\in\Gamma_i,\\
  \neg\act_i \rightarrow T_i=0 \,\wedge\, \proc_i=\proc_{o(i)}.
  \end{gathered}
\end{equation}
An inactive node is neutralized, not deleted: it takes no time, stays on its
owner's core, and no cost or memory sum counts it. Pinning matters as much
as the zero time; an unpinned inactive node could take any core and
multiply every solution by the number of cores. The graph thus has a fixed
shape, and the order and throughput constraints are posted once. The
pattern of each actor is restricted to the applicable ones, which the
front-end computes from the SIL ranges, the fault model, and the platform.

The execution time of a computation node follows its binding and mode,
\begin{equation}\label{eq:wcet}
  T_i=\act_i\,W\bigl(i,\ctype_{\proc_i},\mathit{mode}_{\proc_i}\bigr),
\end{equation}
where $W$ is the WCET table; a missing entry is a forbidden value of
$\proc_i$. The nested lookup looks expensive, but a flattened encoding with
one reified implication per node and core was never faster in the
development measurements, so the model keeps the direct form.

A negative result motivated guarded superposition. A variable adjacency
matrix over all possible replicas needs a quadratic number of topology
variables and an \code{element} constraint on a variable index for every
access. In the early analysis of the tool, it did not solve a five-actor
instance within minutes. With a fixed graph, only the selection is a
variable, and every guard is a lookup into a constant table.

\emph{Explicit voters.} Most catalog patterns fold the comparison into the
owner (\cref{tab:catalog-full}). A pattern with an explicit voter, such as
N-version programming, routes the owner's output through the voter. The
front-end re-emits every channel that leaves the owner from the voter slot,
guarded by the voter patterns. The original channel survives only under the
other patterns. An edge into a downstream pattern can then depend on
the patterns of two actors (the voter pattern of its source and the input
fan-out of its destination), so a guarded edge $e$ carries two conditions,
\begin{equation}\label{eq:guard2}
  \mathit{on}_e \Leftrightarrow
  \pat_{a_1(e)}\in\Gamma^{1}_e \wedge \pat_{a_2(e)}\in\Gamma^{2}_e .
\end{equation}
An edge with one condition repeats its owner as the second one with an
all-true row, so the conjunction degenerates exactly and no sentinel index
is needed. Dropping either condition would let the pre-voter result reach
a downstream checker. The guarded edge constraints themselves turned out to
be redundant: the front-end also writes every pattern channel into the
token matrix, which \cref{eq:pot} posts unconditionally, and zero execution
time of inactive nodes is what neutralizes their edges
(\cref{sec:model-throughput}). They are kept because this redundancy
depends on how the front-end fills the token matrix.

\begin{figure}[!t]
  \centering
  \subfloat[]{\includegraphics{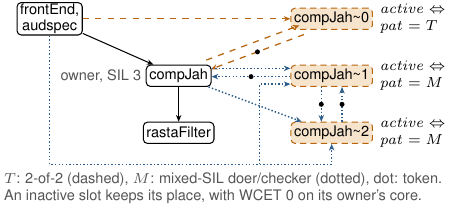}\label{fig:model-a}}\\[-2pt]
  \subfloat[]{\includegraphics{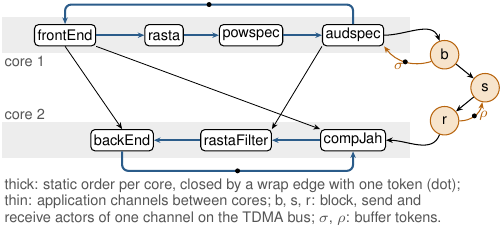}\label{fig:model-b}}
  \caption{(a) Guarded superposition of \code{compJah} under both fault
  classes: each slot is guarded by the patterns that use it, the graph is
  fixed, and only $\pat$ is a variable. (b) MSAG of a two-core binding of
  the running example, with one remote channel refined for the TDMA bus.
  Not drawn: the one-token self-edge of every actor and the refinement of
  the other remote channels.}
  \label{fig:model}
\end{figure}

\subsection{Communication Refinement}\label{sec:model-comm}
Communication refinement uses the same mechanism. For each application
channel $c=(u,v)$, $\Gplus$ holds a block, a send, and a receive actor.
They take zero time unless the channel is remote, which the model defines as
\begin{equation}\label{eq:remote}
  \mathit{remote}_c \Leftrightarrow
  \act_u \wedge \act_v \wedge \proc_u\neq\proc_v .
\end{equation}
A remote channel replaces its direct edge by the path through these actors
(\cref{fig:model-b}); the front-end removes the direct edge from the token
matrix, so exactly one of the two forms is posted. A direct edge kept next
to the path would be a shortcut past the whole transfer delay. Block and
send run on the sending core's network interface and receive in the
receiving core's memory, so their placement is derived:
$\proc_{\mathit{blk}(c)}=\proc_{\mathit{snd}(c)}=\proc_u$ and
$\proc_{\mathit{rcv}(c)}=\proc_v$.

The times follow \desyde{}'s TDMA model~\cite{rosvall2018flexible}. With
$S$ slots per round of cycle length $L$, $a_p$ slots allocated to the
sending core $p$, and $n_c$ slots needed for the message of $c$,
\begin{equation}\label{eq:tdma}
  \begin{aligned}
  T_{\mathit{blk}(c)}&=\bigl(S-(a_p-1)\bigr)L,\\
  T_{\mathit{snd}(c)}&=\bigl(\rho_c(a_p)\,S+S\bigr)L,\qquad
  T_{\mathit{rcv}(c)}=0,
  \end{aligned}
\end{equation}
where $\rho_c(s)=\lceil n_c/s\rceil-1$ is the number of full extra rounds.
Blocking is the worst-case wait for the core's slot, and sending is the
transfer in whole rounds. Receiving takes no time because the token lands in
local memory; the receive actor still exists because its buffer closes a
cycle. The slots of all cores sum to at most $S$, a core has slots
iff it sends, and it has no more slots than its largest outgoing message
needs, a restriction that \desyde{} also posts.

The path of a remote channel $c=(u,v)$ in the MSAG is
\begin{equation}\label{eq:commpath}
  \begin{gathered}
  u\xrightarrow{\tok_{uv}}\mathit{blk}\xrightarrow{0}\mathit{snd}
  \xrightarrow{0}\mathit{rcv}\xrightarrow{0}v,\\
  \mathit{blk}\xrightarrow{\beta^{s}_c}u,\qquad
  \mathit{rcv}\xrightarrow{\beta^{r}_c}\mathit{snd},
  \end{gathered}
\end{equation}
with the channel's initial tokens on the first edge, so the pipeline depth
is kept. The send and receive buffer sizes $\beta^{s}_c,\beta^{r}_c\ge1$
are the tokens on the back-edges; a local channel has $\beta^{s}_c=0$ and
$\beta^{r}_c=1$. The send buffers of a core must fit its network interface.
Because the buffer sizes are variables, their back-edges in \cref{eq:pot}
multiply two variables, which \cpsat{} handles natively.

Communication actors are scheduled on the bus, not in a core's static order,
and they are not counted in a core's load. A transfer overlaps computation;
it lengthens the latency, and the period only when the bus is the
bottleneck. Contention for the bus is already priced into
\cref{eq:tdma}, so this exclusion does not make the model optimistic. An
early version that put them into the static order reported wrong periods;
\cref{sec:lessons} returns to it.

Only application channels are refined (\code{--comm-scope app}).
Cross-checks between pattern components are assumed to use a dedicated
link; \code{--comm-scope all} refines them too, and RQ4 measures the effect
(\cref{sec:eval-rq4}). Whether a transfer is a safety function is set by
\code{--comm-sil}. The default \code{exempt} leaves communication actors
out of the SIL-mixing rule. \code{inherit} gives each transfer at least its
sender's SIL, and \code{core} subjects it to the rule of the core it runs
on, like any software there. A replica channel pair can also be required to
leave from different cores, which separates their TDMA slots; on a single
bus this is weaker than redundant communication, and the argument says so.

\subsection{Safety and Placement Constraints}\label{sec:model-safety}
A core's SIL is bounded by its type, only a type $t$ with certified
partitioning ($\mathit{cp}_t$) may be partitioned, and every active node
meets its actor's required SIL:
\begin{equation}\label{eq:sil}
  \begin{gathered}
  \csil_p\le\silmax_{\ctype_p},\qquad
  \partn_p\rightarrow\mathit{cp}_{\ctype_p},\\
  \act_i \rightarrow \silreq_{a(i)}\le\sil_i\le\csil_{\proc_i}.
  \end{gathered}
\end{equation}
An unused core has $\csil_p=0$ and no partitioning, and an inactive node
has $\sil_i=0$. The IEC~61508 rule on mixing SILs becomes, for every active
computation node (communication actors are exempt by default),
\begin{equation}\label{eq:rule}
  \act_i \wedge \neg\partn_{\proc_i} \rightarrow \sil_i=\csil_{\proc_i},
\end{equation}
and $\csil_p$ equals the highest $\sil_i$ of an active, non-exempt node on a
used core $p$. So all nodes on a core without partitioning share that SIL,
and a partitioned core pays the partitioning cost of its type. With the
baseline development cost $b_i$ and the profile's multiplier $k$, the
development cost and its part spent on promotion (included in
$C_{\mathrm{dev}}$) are
\begin{equation}\label{eq:devcost}
  \begin{aligned}
  C_{\mathrm{dev}}&=\textstyle\sum_{i}\act_i\, b_i\, k(\sil_i),\\
  C_{\mathrm{pro}}&=\textstyle\sum_{i}\act_i\, b_i
  \bigl(k(\sil_i)-k(\silreq_{a(i)})\bigr).
  \end{aligned}
\end{equation}
The model stores $100\,k$ and divides each sum by 100, so all arithmetic
stays integral. An input flag forbids promotion: every active node is then
implemented exactly at its required SIL (\cref{sec:eval-rq2}).

For each placement relation of an applicable pattern whose fault class is in
the fault model, the front-end emits a tuple $(u,v,R,a,\Gamma)$, and the
model posts
\begin{equation}\label{eq:placement}
  \pat_a\in\Gamma \rightarrow R(\proc_u,\proc_v).
\end{equation}
Relation $R$ is one of $\proc_u=\proc_v$ (\code{SAME}), $\proc_u\neq\proc_v$
(\code{DIFFERENT}), $\fcr_{\proc_u}\neq\fcr_{\proc_v}$
(\code{DIFFERENT\_FCR}), or $\ctype_{\proc_u}\neq\ctype_{\proc_v}$
(\code{DIVERSE}). As $\fcr$ and $\ctype$ are constant arrays, each relation
is an \code{element} constraint. Relations hold per HSDF copy: copy $r$ of a
checker avoids copy $r$ of its doer. The set-wise reading (no copy of the
checker on a core of any copy of the doer) is stronger and forbids safe
interleavings. Under a single fault the pairwise reading suffices: a fault
on a core with doer copy $r+1$ and checker copy $r$ is caught by checker
copy $r+1$ elsewhere.

\subsection{Static Order}\label{sec:model-order}
Each core runs its computation nodes in one static order, encoded by a
successor $\mathit{succ}_i\in\{0,\dots,n\}$ ($0$: last on its core), a
position $\mathit{ord}_i$, and a head flag $h_i$:
\begin{equation}\label{eq:order}
  \begin{gathered}
  \mathit{succ}_i=j>0 \rightarrow \proc_j=\proc_i \wedge
  \mathit{ord}_j=\mathit{ord}_i+1,\\
  h_j \Leftrightarrow \nexists i:\mathit{succ}_i=j,\qquad
  h_j\rightarrow\mathit{ord}_j=0,
  \end{gathered}
\end{equation}
with all non-zero successors distinct, no node its own successor, distinct
positions on one core, and exactly one last node per used core. The wrap
edge runs from the last node of a core to its head. This form replaces
Rosvall's encoding with a dummy node and a \code{circuit}
constraint~\cite{rosvall2018flexible}, a Gecode idiom that \mzn{}
decomposes anyway. The anchor $h_j\rightarrow\mathit{ord}_j=0$ is not
cosmetic: without it the positions float, the head cannot be identified,
and the wrap edge, and with it the bound ``period at least the core's
load'', silently disappears. A core is used iff an active node is on it,
and a used core requires its card. Deadlock is caught by the potentials of
\cref{sec:model-throughput}, so the positions only have to make each static
order acyclic.

\subsection{Throughput by Node Potentials}\label{sec:model-throughput}
The MSAG~\cite{rosvall2018flexible} of a solution adds to $\Gplus$ a
token-free edge from each node to its successor and a one-token wrap edge
from the last to the first node of each core. Remote channels add their
paths \cref{eq:commpath}, whose back-edges carry the buffer sizes as tokens.
Each computation node has a one-token self-edge and the time
\cref{eq:wcet}. With an integer potential $\pot_i$ per node, the model
posts for every MSAG edge $(i,j)$, with $z(i)$ the application of $i$,
\begin{equation}\label{eq:pot}
  \pot_j \ge \pot_i + T_i - \tok_{ij}\,\mu_{z(i)} .
\end{equation}
Order, wrap, and channel edges are posted under reified conditions on the
successor and $\mathit{remote}_c$ variables. One potential is anchored at
zero, which removes the free additive constant.

\begin{proposition}\label{prop:potentials}
Fix all decisions except $\mu$ and $\pot$. Potentials that satisfy
\cref{eq:pot} on every MSAG edge exist iff every cycle $C$ of the MSAG
satisfies $\sum_{i\in C}T_i\le\mu\sum_{(i,j)\in C}\tok_{ij}$.
\end{proposition}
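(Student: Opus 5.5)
With everything except $\mu$ and $\pot$ fixed, the MSAG is a fixed finite directed graph, with constant times $T_i$ and constant token counts $\tok_{ij}$ (buffer sizes included). The system \cref{eq:pot} is then a system of difference constraints $\pot_j-\pot_i\ge w_{ij}$ with constant weights $w_{ij}=T_i-\tok_{ij}\mu$. The plan is to reduce the statement to the classical fact that such a system is feasible iff the weighted graph has no positive cycle (a longest-path / Bellman--Ford argument). Then we rewrite the no-positive-cycle condition as the cycle inequality of the statement. If the MSAG contains nodes of several applications, $\mu$ stands for the vector of their periods. Each edge uses the period of its source, so $w_{ij}$ is still a constant and nothing changes.

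\emph{Necessity.} Take any cycle $C$ and sum \cref{eq:pot} over its edges. The potentials telescope to zero, which leaves $0\ge\sum_{(i,j)\in C}(T_i-\tok_{ij}\mu)$. Since each node of a cycle is the source of exactly one cycle edge, $\sum_{(i,j)\in C}T_i=\sum_{i\in C}T_i$, and this is the claimed inequality.

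\emph{Sufficiency.} Assume every cycle has nonpositive weight $\sum_{(i,j)\in C}w_{ij}\le0$. Add a virtual source $s$ with zero-weight edges to every node, and let $\pot_j$ be the maximum weight of a path from $s$ to $j$. This maximum is finite. Any walk splits into a simple path plus cycles, and the cycles do not increase the weight, so the maximum is attained on one of finitely many simple paths. By the definition of a maximum, $\pot_j\ge\pot_i+w_{ij}$ for every edge, which is \cref{eq:pot}. Shifting all potentials by a constant keeps the differences, so the anchor $\pot_{i_0}=0$ is harmless.

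\emph{Integrality.} The model uses integer potentials. If $\mu$ and all $T_i,\tok_{ij}$ are integers, then every $w_{ij}$ is an integer, every path weight is an integer, and the constructed $\pot$ are integers. I expect this to be the main point to check rather than a real obstacle: with integer $\mu$ the statement holds as given. One should also note that inactive nodes ($T_i=0$) and guarded-off edges only remove constraints or add zero terms, so they do not affect the argument. The self-edge with one token only yields $T_i\le\mu$, which is a special case of the cycle condition.
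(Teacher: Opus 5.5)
Your proof is correct and follows the paper's argument essentially step by step. For necessity, you telescope along a cycle; for sufficiency, you take longest-path values from an added zero-weight source, which are finite because cycles have nonpositive weight, integral for integral data, and unaffected by the anchor because of shift invariance. The only differences are minor: the paper handles several applications per connected component, where \cref{eq:coupling} makes all periods equal, instead of your per-edge vector of periods, and it also notes $0\le\ell_j\le\sum_i T_i$ so that the model's bounded potential domain contains the constructed solution.
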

\begin{IEEEproof}
If $\pot$ is feasible, summing \cref{eq:pot} along $C$ cancels the
potentials and yields the inequality. Conversely, weight each edge with
$w_{ij}=T_i-\mu\,\tok_{ij}$; no cycle has positive weight. Add a source with
a zero-weight edge to every node. The longest-path lengths $\ell_j$ from it
are then finite and satisfy $\ell_j\ge\ell_i+w_{ij}$ on every edge, so
$\pot=\ell$ is feasible; it is integral for integral $T$, $\tok$, and $\mu$.
As $0\le\ell_j\le\sum_iT_i$, the bounded potential domain of the model
holds $\ell$ after anchoring. \Cref{app:proof} gives the proof in full,
with its corollaries.
\end{IEEEproof}

\Cref{prop:potentials} is the linear-programming form of the cycle ratio
problem, with the potentials as node distances~\cite{dasdan2004experimental},
as in cyclic scheduling~\cite{hanen1995study}. Minimizing $\mu$ thus yields
the period of \cref{def:period} without a throughput
propagator~\cite{rosvall2017throughput}, which a \mzn{} model cannot host;
as $\mu$ is an integer, the model reports $\lceil\mathrm{MCR}\rceil$. A
token-free cycle with positive execution time makes \cref{eq:pot}
infeasible, so the same inequalities replace separate deadlock machinery,
such as Rosvall's rank constraints or a test for positive throughput. As
remote buffers hold a token and every WCET in our inputs is at least one,
each token-free cycle of active nodes has positive time, so deadlock is
excluded. The encoding needs one variable per node, against a quadratic
number for an edge-based residual formulation. For fixed token counts, the
inequality is linear; only the buffer back-edges are products of two
variables.

Neutralized nodes keep their channels with zero execution time. A
token-free cycle through them has zero weight and does not constrain $\mu$.
A path through an inactive slot, however, can raise the least $\mu$ of a
design above its MCR, never below it. A channel of an unselected pattern
into an inactive slot, followed by the slot's static-order edge, is a path
that the deployed design does not have (\cref{app:proof}, Corollary~4). The
model thus stays sound but can be pessimistic for a design, and under a
period bound it can reject a design that meets the bound. We wrote two
exact variants. Variant~A posts pattern channels only under their guard and
changes the verifier in step. Variant~B adds a second potential for the
static-order and wrap edges, equal to $\pot$ on active nodes, and keeps
under the guard only the channels whose guard differs from the activity of
their endpoints. Both remove the gap on the affected designs but solve
more slowly. Variant~B served as a cross-check of every RQ2 and RQ4 result
(\cref{sec:eval-rq2}), with a median run-time ratio of
\resExactTimeRatioMedian{} and a heavy tail at loose bounds; the default
model keeps the faster form. For other objectives than the period, the
model guarantees only $\mathrm{MCR}\le\mu\le$ the period bound, and the
evaluation fixes the design and minimizes $\mu$ afterwards
(\cref{sec:eval-setup}).

Rosvall et al.~\cite{rosvall2017throughput} gave each application the MCR of
its connected MSAG component. Applications that share a core lie in one
component and share its period, since the core repeats one static order in
which each of their nodes fires once. The model equates their periods
pairwise:
\begin{equation}\label{eq:coupling}
  \mathit{sh}_{y,z}\Leftrightarrow\exists p:\ \mathit{on}_{y,p}\wedge
  \mathit{on}_{z,p},\qquad
  \mathit{sh}_{y,z}\rightarrow\mu_y=\mu_z,
\end{equation}
where $\mathit{on}_{z,p}$ holds iff an active node of $z$ is on core $p$. By
transitivity, each component gets one period, and its potentials share one
time base, which is the condition for summing \cref{eq:pot} along a cycle
that crosses applications. \desyde{} states the coupling over pairs of
actors of different applications, a quadratic number of reified
implications. In an early version of \safedse{}, this timed out at 32
actors; the occupancy matrix $\mathit{on}$ makes it
$O(|\mathit{APP}|^2+n|\mathit{APP}|)$ and is equivalent.

Four implied constraints from \desyde{}~\cite{rosvall2018flexible} prune at
the root, although the potentials imply them only after search. With the
load $\lambda_p$ of core $p$ over its computation nodes,
\begin{equation}\label{eq:load}
  \lambda_p=\textstyle\sum_{j}[\proc_j=p]\,T_j,\qquad
  \proc_i=p \rightarrow \mu_{z(i)}\ge\lambda_p .
\end{equation}
Second, an active node's time is at most its application's period bound.
Third, two active nodes of one application whose least WCETs sum to more
than its period bound share no core and are not adjacent in a static order.
The activity condition matters, because an inactive slot is pinned to its
owner's core (\cref{sec:lessons}). Fourth, an application uses at least
$\lceil\sum_i\min_t W(i,t)/B_z\rceil$ distinct cores, with the sum over
its always-active nodes and $B_z$ its period bound. The wrap edge already
implies \cref{eq:load}, but only through a cycle that the solver must
find; stated as a sum, it propagates at once.

\subsection{Further Metrics and Constraints}\label{sec:model-other}
The memory of the active nodes on a core is bounded by its type's memory,
\begin{equation}\label{eq:memory}
  \textstyle\sum_{i}[\proc_i=p\wedge\act_i]\,m_i\le M_{\ctype_p}.
\end{equation}
The hardware cost sums the prices of the used cards and of the used cores
in their modes, and the power sums the power of the used cores in their
modes. A latency bound between a source and a sink actor uses the
potentials again: in the periodic regime, the $k$-th firing of node $i$
starts at about $\pot_i+k\mu$, so
\begin{equation}\label{eq:latency}
  \mathit{lat}(s,d)=\pot_d-\pot_s+\rho(s,d)\,\mu+T_d ,
\end{equation}
with $\rho(s,d)$ the least number of initial tokens on a path from $s$ to
$d$, computed by the front-end. The estimate is exact in the periodic phase
but ignores the transient before it, so it serves as a cheap filter; a
separate checker simulates the transient, and a reported latency is not a
worst-case bound without it.

\subsection{Objectives and Solving}\label{sec:model-solving}
The model computes eight metrics: the worst period, the largest latency,
the hardware, development, and total cost, the power, the number of cores,
and the promotion cost. One metric is minimized, chosen by a parameter, and
every metric has an upper bound, so switching the objective needs no model
edit. The total cost
\begin{equation}\label{eq:totalcost}
  C_{\mathrm{tot}}=C_{\mathrm{hw}}+C_{\mathrm{dev}}+C_{\mathrm{part}}
  +C_{\mathrm{pat}}
\end{equation}
adds the price of cards and cores in their modes, \cref{eq:devcost}, the
cost of each partitioned core, and the recurring cost of the selected
patterns, summed over the actors of the applications only, all in one cost
unit.

Symmetry breaking orders only interchangeable cores. Cores of one type on
one card are interchangeable, so a value-precedence constraint orders their
use. Cores on different cards are not, once FCR relations exist; only whole
cards of the same template are, so the model buys a prefix of them and
orders successive identical cards by their node membership. Cores of one
type that are each alone on their card are interchangeable outright and get
value precedence too. The order of Rosvall and
Sander~\cite{rosvall2018flexible} on the HSDF copies of an SDF actor (copies
ordered by core index, and by position on a shared core) acts only on
multi-rate actors, which our pattern instances lack. With per-copy
placement relations, it could exclude an optimum in principle; in our
probes with multi-rate owners, it pruned designs but moved no optimum.

The model is written in \mzn{}~\cite{nethercote2007minizinc} and solved
with \cpsat{}~\cite{perron2026cpsat}. The search annotation branches on the
binding (first fail, smallest value first), then on the successors, then on
the periods (smallest value first). With one worker, \cpsat{} follows this
annotation; with several, its portfolio does not, which is why the period
of a non-period optimum is not pinned to its least value without the extra
step of \cref{sec:eval-setup}. The model also has a two-step strategy
inherited from Rosvall and Sander: step~1 forbids applications to share a
core, and its objective value bounds the full model in step~2, which stays
complete. With four workers it is no longer needed on the multi-application
benchmark (\cref{sec:eval-rq1}).

\FloatBarrier
\section{Independent Verification and Safety-Argument Generation}\label{sec:verification}

\subsection{Independent Verifier}\label{sec:ver-verifier}
A wrong constraint does not fail; it returns a plausible number. A verifier
that shares no code with the model therefore re-checks every solution. It
reads the front-end's instance and the solution, recomputes every field
that the model derives, and appends one record to a check log per check.
No derived field of the solution is taken on trust. Each record has a kind,
a pass flag, and a detail. An \code{input} record checks that every
required field is present, so a missing field is a failed record, not a
crash. A \code{wcet} record checks that every node time equals the table
entry of its core type and mode, zero when inactive, and that no binding is
forbidden. An \code{order} record checks one acyclic static order per used
core, over exactly the nodes bound to it. An \code{activation} record checks
that activation follows the selected pattern, that the pattern is admissible
for the actor, and that an actor's slots share it. The \code{memory} and
\code{cost} records recompute memory use and the hardware, development,
promotion, partitioning, and pattern cost from the instance tables. A
\code{sil\_actor} record per actor and per component checks that the
implemented SIL is at least the required one and at most the core's. An
\code{isolation} record per core checks the SIL ceiling of the type, the
SIL-mixing rule, and partitioning only on a type that provides it. Each
relation instance of the selected patterns gets a \code{placement} record.
Finally, the \code{period} record checks the period as described next, and
the \code{load\_bound} record that it is at least the load of the busiest
core.

For the period, the verifier rebuilds the MSAG of the deployed design. It
removes inactive nodes and splices each static order around them:
consecutive live nodes get a token-free edge, and the last live node a
one-token wrap edge to the first. A refined channel that stays local gets
back its direct edge with its initial tokens; a remote one gets the path
\cref{eq:commpath}, with the tokens on its first edge and the two buffer
back-edges. A token-free cycle in this graph is a deadlock and fails the
record. For each connected component (applications that share a core share
a period, \cref{eq:coupling}), the verifier computes the MCR twice. The
first oracle is Lawler's parametric search~\cite{dasdan2004experimental}:
bisection on $\mu$ with a Bellman-Ford test for a positive cycle, then the
rational with the smallest denominator in the final interval. The second
fires the actors in max-plus time, i.e., it simulates the self-timed
execution, until the differences of the start times repeat over a window of
up to eight iterations. The two share no reasoning: one is a cycle analysis,
the other a simulation. Both must agree, and the reported period must equal
$\lceil\mathrm{MCR}\rceil$; for other objectives than the period, we first
minimize $\mu$ for the fixed design (\cref{sec:eval-setup}). If a latency
bound is set, the verifier also simulates the transient and flags a worst
case above the model's periodic estimate \cref{eq:latency}.

The log records what was checked, not what is true. The generator treats a
missing record like a failed one, so ``nobody looked'' differs from
``looked and found it fine''. The verifier reads the front-end's instance,
so front-end and specification errors are common to model and verifier, as
every argument states in an assumption. It is an unqualified T2 tool in the
sense of IEC~61508~\cite{iec61508part4}. Its first version took WCETs,
static orders, activation, and costs from the solution; a mutation analysis
(\cref{sec:eval-rq4}) showed that it then accepted corrupted solutions, and
the current version re-derives all of them.

\subsection{Evidence-Backed Pattern Arguments}\label{sec:ver-gsn}
Preschern et al.~\cite{preschern2013building} attached a
GSN~\cite{gsn2021standard} fragment to each pattern: its general scenarios
are goals, the tactics that achieve a scenario are strategies below it, and
IEC~61508 methods realize the tactics. A scenario that depends on another
is nested below it. Their fragments are per pattern and generic; our
generator instantiates them for a solved and verified deployment, which is
what makes them checkable.

The top goal claims that the deployed architecture meets, for every
safety-related actor, the architectural preconditions that the catalog
states for its SIL. Its contexts name the standard and the meaning of
SIL~0, the hazard analysis as the source of the required SILs, and the
catalog with its sources. Further contexts give the applications, the
platform with the instantiated cores, the fault model with the excluded
fault classes and the meaning of S, and the SHA-256 hashes of instance,
solution, and check log. A
justification limits the argument to the architecture and its deployment,
and three assumptions cover sound WCETs, token preservation, and the shared
front-end. A strategy then argues over each actor with
$\silreq_a\ge1$ and over the platform-wide properties.

Per actor, the goal claims that the deployment meets the catalog's
preconditions for $\silreq_a$; no goal claims an attained SIL. Contexts give
the HSDF copies with their cores and SILs, and the components of the
selected pattern with theirs. A provisioning goal has a Solution for the
owner and one per component, each citing a \code{sil\_actor} record. A
strategy ``by application of'' the selected pattern has one goal per
general scenario (\cref{tab:scenarios}) and below it one strategy per
tactic. A tactic table, transcribed from Preschern et al.'s appendix, links
each tactic to IEC~61508-7 methods and to the placement relations whose
records discharge its deployment precondition (\cref{tab:tactics}). It
links Replication Redundancy to \code{DIFFERENT\_FCR} or \code{DIFFERENT},
Diverse Redundancy to \code{DIVERSE}, and value-domain tactics (Sanity
Check, Condition Monitoring, Comparison, Voting, Override, Repair) to none.
This column is the exact boundary between what a mapping can and cannot
establish, and the one to be most suspicious of. A relation listed there
that the model does not post would turn a hand-wave into a claim.

For a tactic with relations, a deployment goal claims exactly what the
relation establishes. It claims ``the replicated components run in different fault
containment regions'' for \code{DIFFERENT\_FCR}, ``on different cores,
possibly in one fault containment region'' for \code{DIFFERENT}, and ``the
diverse components run on different core types'' for \code{DIVERSE}. Its
Solutions cite the \code{placement} records of that relation by index
(\cref{fig:gsn,fig:gsn-compjah}). An undeveloped goal next to a replication
goal covers the common causes that remain: the interconnect and cross-check
link, the shared inputs, the comparison or voting point, and for
\code{DIFFERENT} also the shared card with its power and clock. The goal
that an IEC~61508 method realizes the tactic, and is effective for the
faults of concern, stays undeveloped with the reason ``value-domain or
implementation evidence; the mapping cannot establish this'' and the
candidate methods as context. A \code{DIVERSE} goal carries the assumption
that distinct core types give distinct toolchains and object code, not an
independent design. A scenario whose fault class lies outside the fault
model becomes an undeveloped goal with the fault model as the reason. A
placement relation that no tactic consumes (e.g., \code{SAME}, which
confers no independence) appears as a goal ``remaining placement
constraints'' with its records.

Four platform-wide goals follow. The isolation goal claims that no core
hosts software of differing SIL without certified partitioning, also from
SIL-0 actors. It is argued by the Barrier tactic with a Solution citing the
\code{isolation} records, under the assumption that cores interfere only
through the interconnect; if a core is partitioned, an undeveloped goal
claims that the mechanism is certified. The timing goal has two Solutions,
the \code{period} and the \code{load\_bound} records. The development
process goal stays undeveloped, and so does the goal that each detected
fault brings the function to its safe state within the process safety
time, with the verdict tokens of the selected patterns as context. With \code{--comm-sil exempt}, an assumption
states the black-channel condition for transfers.

\begin{definition}[Well-formed argument]\label{def:argument}
Every leaf is a Solution that cites at least one passed record of the check
log of the argued solution, or a goal marked undeveloped with a stated
reason.
\end{definition}

\Cref{def:argument} is structural: nothing unchecked appears as checked,
but the argument may still be inadequate (\cref{sec:related-gsn}). The
generator's audit enforces it, including that every cited record exists and
passed, and the generator refuses (exit code~3) to emit an argument that
violates it. It also refuses if the verifier rejected the solution (exit
code~2), if the catalog lacks a pattern selected in the instance (exit
code~4), or if a given fault model contradicts the one it infers (exit
code~2). The fault model is inferred, not trusted from a flag. The
admissible patterns per actor are a function of the fault model and the
required SILs, so the generator tries all three fault models and keeps those
that reproduce them. If none or several fit, it names none, and the
out-of-scope analysis is unavailable. A wrong flag would otherwise yield a
fluent argument that claims coverage the design never had. The catalog
check exists because of an observed failure: against a catalog without the
NVP record, an early generator version produced a well-formed,
audit-passing argument that argued an actor with three-version programming
``by development process alone''.

\begin{figure}[!t]
  \centering
  \includegraphics{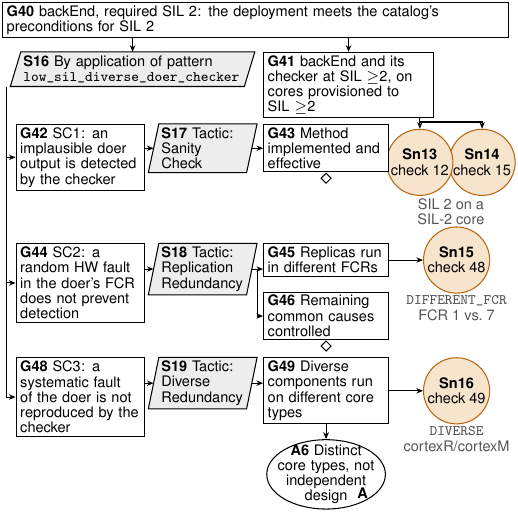}
  \caption{Excerpt of the argument generated for a verified optimum of the
  running example under both fault classes: actor \code{backEnd} (SIL~2,
  low-SIL diverse doer/checker), ids as generated, texts shortened.
  Deployment goals claim exactly what their log records establish; the
  remaining common causes and the method goals (diamonds) stay with the
  engineer. The method goal is drawn once.}
  \label{fig:gsn}
\end{figure}

\begin{figure}[!t]
  \centering
  \includegraphics{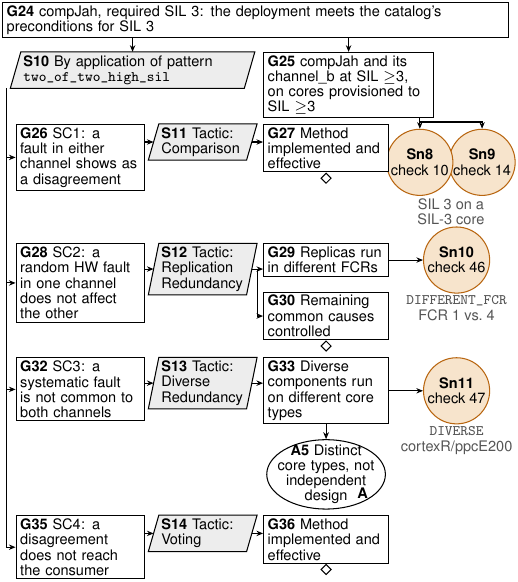}
  \caption{Second excerpt of the same argument: actor \code{compJah} (SIL~3,
  2-of-2). The replication goal cites the \code{DIFFERENT\_FCR} record and
  the diversity goal the \code{DIVERSE} record of the same pair; Comparison
  and Voting are value-domain tactics and stay undeveloped. Generated by the
  script of \cref{fig:gsn}.}
  \label{fig:gsn-compjah}
\end{figure}

\begin{table*}[!t]
  \caption{Tactic Table: Tactics, IEC 61508-7 Methods, and Deployment Relations}
  \label{tab:tactics}
  \centering\footnotesize
  \setlength{\tabcolsep}{3pt}
\begin{tabular*}{\textwidth}{@{\extracolsep{\fill}}l l >{\raggedright\arraybackslash}p{1.75in} l >{\raggedright\arraybackslash}p{1.2in}@{}}
\toprule
Tactic & Category & IEC 61508-7 methods & Relations & Patterns using it \\
\midrule
Sanity Check & Checking & A.1.2 monitoring relay contacts; A.2.7 analog signal monitoring; A.3.1-A.3.3 self-tests; A.4.1-A.4.4 checksums; A.5.1-A.5.5 RAM tests; A.6.1 test pattern; A.9 temporal and logical program monitoring; C.3.3 assertion programming; C.5.3 interface checking & -- & Low-SIL D/C, Same-CPU D/C, High-SIL isolated D/C, Mixed-SIL D/C, Low-SIL diverse D/C \\
Condition Monitoring & Checking & A.1.1 failure detection by online monitoring; A.6.4 monitored outputs; A.9 temporal and logical program monitoring; A.13.1 monitoring & -- & -- \\
Comparison & Checking & A.1.3 comparator; A.6.5 input comparison/voting & -- & 2-of-2, Mixed-SIL D/C, Dual 2-of-2, 3-version (NVP) \\
Replication Redundancy & Redundancy & A.2.1 tests by redundant hardware; A.2.5 monitored redundancy; A.3.5 reciprocal comparison by software; A.4.5 block replication; A.6.3 multi-channel output; A.7.3 complete hardware redundancy; A.7.5 transmission redundancy & \code{DIFFERENT\_FCR}, \code{DIFFERENT} & Low-SIL D/C, 2-of-2, High-SIL isolated D/C, Mixed-SIL D/C, Dual 2-of-2, 1-of-2 failover, Low-SIL diverse D/C, 3-version (NVP) \\
Diverse Redundancy & Redundancy & A.7.6 information redundancy; A.13.2 cross-monitoring of multiple actuators; B.1.4 diverse hardware; C.4.4 diverse programming & \code{DIVERSE} & 2-of-2, Mixed-SIL D/C, Low-SIL diverse D/C, 3-version (NVP) \\
Voting & Masking & A.1.4 voter; A.6.5 input comparison/voting & -- & 2-of-2, Mixed-SIL D/C, Dual 2-of-2, 3-version (NVP) \\
Override & Masking & IEC 61508 fail-safe principle; A.1.3 comparator & -- & -- \\
Repair & Recovery & C.3.9 error correction; C.3.10 dynamic reconfiguration & -- & Dual 2-of-2, 1-of-2 failover \\
Degradation & Recovery & A.8 voltage supply error handling; C.3.8 degraded function limitation & -- & -- \\
Barrier & Isolation & A.11 separation of energy lines from information lines; B.1.3 separation of safety functions from non-safety functions; B.3.4 modularisation; C.2.8 information hiding/encapsulation; C.3.11 time-triggered architecture & -- & -- \\
Simplicity & Avoidance & B.2.1 structured specification; B.3.2 structured design; C.2.7 structured programming; C.4.2 programming language subset & -- & -- \\
Substitution & Avoidance & B.3.3 usage of well-proven components; B.5.4 field experience; C.2.10 usage of well-proven/verified software elements; C.4.3 certified tools and compilers & -- & -- \\
\bottomrule
\end{tabular*}

\vspace{2pt}\parbox{\textwidth}{\scriptsize Methods as listed by Preschern et al.\ (annex clause of IEC~61508-7 and its title). Relations: the placement relations whose verifier records discharge the tactic's deployment precondition (\code{deployment\_relations} of the tactic table); a tactic without one is realized in the value domain and stays undeveloped.}

\end{table*}

\section{Experimental Evaluation}\label{sec:eval}

\subsection{Setup}\label{sec:eval-setup}
All runs used one machine (\resSetupCpu, \resSetupLogicalCpus{} logical
CPUs, \resSetupRamGiB\,GiB), \mzn{}~\resSetupMiniZinc{}, and
\cpsat{}~\resSetupCpSat{} with \resSetupThreads{} workers, one per physical
core, and a time limit of \resSetupTimeLimit{}\,s per solve unless stated
otherwise. When $\mu$ is not the objective, an optimum may report any
period between the MCR and the bound (\cref{sec:model-throughput}), so we
then fix the design and minimize $\mu$ in a wrapper model (the period step).
Every solution below passed the verifier; a run that did not is left out
of every number and named where it matters. Optima are often not unique
(on \resPatternTieInstances{} instances, two runs chose different SIL-3
patterns at equal cost), so we report cost, and periods only where the
period is the objective.

\Cref{tab:results} lists the pattern instances. They include the running
example on platforms with two to four core types (one with certified
partitioning), the fault model changed, two with the TDMA bus,
Sobel~\cite{rosvall2018flexible} with a SIL-3 actor, and three applications
together; \cref{app:instances} gives their inputs. WCETs on the card
platforms are the benchmarks' nominal times scaled per core mode, and a
checker or voter takes \resSetupCheckerScale\,\% of its owner's. All nodes
have the same baseline development cost, and prices are illustrative
(\cref{tab:platforms,tab:profiles}). The joint optimum of every row of
\cref{tab:results} was solved \resRepsPerRow{} times. All \resRepsRuns{} runs of this kind
returned the same optimum per row, and the run time varied by a median of \resRepsMedianSpread\,\% and at most
\resRepsMaxSpread\,\% around the median of a group. Front-end, period step,
verifier, and generator take at most \resPipePinMax{}\,s per instance; the
solve dominates (\cref{tab:pipe}).

\begin{table}[!t]
  \caption{Pattern Instances and Joint Optima (All Solutions Verified)}
  \label{tab:results}
  \centering\footnotesize
  \setlength{\tabcolsep}{1.5pt}
\begin{tabular*}{\columnwidth}{@{\extracolsep{\fill}}l l c r r r c r r@{}}
\toprule
Instance & Platform & FM & $|\Gplus|$ & Vars & Cost & Cores & $\mu^{*}$ & $t$ (s) \\
\midrule
\code{p\_rasta} & 2T, part. & H & 26 & 11k & 3330 & 2 & 235 & 6.0 \\
\code{f\_random\_hw} & 2T & H & 26 & 11k & 3349 & 5 & 235 & 8.7 \\
\code{d\_random\_hw}$^{a}$ & 2T & H & 18 & 7k & 1857 & 3 & 235 & 6.0 \\
\code{d\_systematic\_sw}$^{a}$ & 2T & S & 14 & 5k & 1846 & 2 & 235 & 3.3 \\
\code{f\_sw3} & 3T & S & 18 & 8k & 3365 & 4 & 235 & 5.2 \\
\code{f\_both3} & 3T & H+S & 18 & 8k & 3406 & 5 & 235 & 3.2 \\
\code{c\_pat} & 2T, TDMA & H & 71 & 15k & 3349 & 5 & 235 & 14.3 \\
\code{pc\_sobel}$^{b}$ & 2T, TDMA & H & 57 & 5k & 1401 & 3 & 333 & 3.0 \\
\code{v\_nvp}$^{c}$ & 4T & H+S & 18 & 9k & 5680 & 6 & 235 & 4.9 \\
\code{m\_3app}$^{d}$ & 2T & H & 44 & 22k & 5665 & 6 & -- & 44.0 \\
\bottomrule
\end{tabular*}

  \vspace{2pt}\parbox{\columnwidth}{\scriptsize Application RASTA unless
  noted; $n$T: $n$ core types; part.: certified partitioning; FM: fault
  model (H random hardware, S systematic). $|\Gplus|$: nodes of the
  superposed graph incl.\ communication actors; Vars: FlatZinc variables;
  Cost: proven minimum of $C_{\mathrm{tot}}$ of the cost model; $\mu^{*}$:
  minimum period; $t$: median solve time of the cost optimum over
  \resRepsPerRow{} runs. $^{a}$SIL~3 actors lowered to SIL~2. $^{b}$Sobel
  on Rosvall's platform. $^{c}$Separate catalog with the NVP pattern.
  $^{d}$Sobel (SIL~2), SUSAN (SIL~0), and RASTA, one period each.}
\end{table}

\begin{table}[!t]
  \caption{Time per Pipeline Stage (s)}
  \label{tab:pipe}
  \centering\footnotesize
  \setlength{\tabcolsep}{3pt}
\begin{tabular*}{\columnwidth}{@{\extracolsep{\fill}}l rrrrr@{}}
\toprule
Instance & Front-end & Solve & Period step & Verifier & Generator \\
\midrule
\code{p\_rasta} & 0.14 & 6.2 & 0.53 & 0.09 & 0.17 \\
\code{f\_random\_hw} & 0.14 & 9.5 & 0.55 & 0.09 & 0.17 \\
\code{d\_random\_hw} & 0.16 & 7.1 & 0.56 & 0.10 & 0.17 \\
\code{d\_systematic\_sw} & 0.15 & 3.5 & 0.50 & 0.10 & 0.17 \\
\code{f\_sw3} & 0.14 & 5.7 & 0.46 & 0.08 & 0.15 \\
\code{f\_both3} & 0.14 & 3.4 & 0.48 & 0.10 & 0.16 \\
\code{c\_pat} & 0.15 & 15.1 & 1.17 & 0.19 & 0.18 \\
\code{pc\_sobel} & 0.15 & 3.2 & 0.92 & 0.10 & 0.17 \\
\code{v\_nvp} & 0.12 & 5.6 & 0.52 & 0.10 & 0.14 \\
\code{m\_3app} & 0.15 & 80.7 & 0.79 & 0.12 & 0.18 \\
\bottomrule
\end{tabular*}

\vspace{2pt}\parbox{\columnwidth}{\scriptsize Wall-clock seconds of one run of each stage on the joint TOTALCOST instance (\resSetupThreads{} workers for the solve). Period step: the design fixed and $\mu$ minimized (\cref{sec:eval-setup}).}

\end{table}

\subsection{RQ1: Comparison With \desyde{}}\label{sec:eval-rq1}
Without a safety specification, the model only maps and schedules. We ran
it and \desyde{} on the files of Experiments~3 and~4 of Rosvall and
Sander's article~\cite{rosvall2018flexible}: four applications on eight
identical cores with a TDMA bus (Experiment~3) and the same with a JPEG
accelerator (Experiment~4). \desyde{} was built at the version of the
article. It needed compile fixes for a current compiler and one fix of
undefined behavior: the constructor used by these experiments left the
system constraints uninitialized, which failed every scenario at the root.
After the fix it reproduced the article's first result exactly. Both tools
had \resDsdThreads{} threads and a wall budget of \resDsdSafeTimeLimit{}\,s
per scenario. \desyde{} ran its published configuration, a two-step
heuristic with its throughput propagator, and minimizes one application's
period under bounds for the others; \safedse{} minimizes the worst period,
and every solution was pinned and verified. The objectives coincide when
the optimized application's period is bounded below by more than every
other bound, which holds in \resDsdComparable{} of the
\resDsdScenarios{} scenarios; the other \resDsdSkipped{} (Experiment~3
SoSuRa and the Experiment~4 scenarios with SUSAN) were not compared.

\begin{table}[!t]
  \caption{Head-to-Head With \desyde{} (Periods, Times in s)}
  \label{tab:desyde}
  \centering\footnotesize
  \setlength{\tabcolsep}{2.5pt}
  \begin{tabular*}{\columnwidth}{@{\extracolsep{\fill}}l rr rrc rr@{}}
    \toprule
    & \multicolumn{2}{c}{Article} & \multicolumn{3}{c}{\desyde{} here}
    & \multicolumn{2}{c}{\safedse{}} \\
    \cmidrule(lr){2-3}\cmidrule(lr){4-6}\cmidrule(l){7-8}
    Scenario & $\mu$ & $t$ & $\mu$ & $t$ & pr. & $\mu$ & $t$ \\
    \midrule
    3: RaJp & \resDsdThreeRaJpPubPeriod & \resDsdThreeRaJpPubTime & \resDsdThreeRaJpDesydePeriod & \resDsdThreeRaJpDesydeTime & \resDsdThreeRaJpDesydeProven & \resDsdThreeRaJpSafePeriod & \resDsdThreeRaJpSafeTime \\
    3: SoSuJp & \resDsdThreeSoSuJpPubPeriod & \resDsdThreeSoSuJpPubTime & \resDsdThreeSoSuJpDesydePeriod & \resDsdThreeSoSuJpDesydeTime & \resDsdThreeSoSuJpDesydeProven & \resDsdThreeSoSuJpSafePeriod & \resDsdThreeSoSuJpSafeTime \\
    3: SoRaJp & \resDsdThreeSoRaJpPubPeriod & \resDsdThreeSoRaJpPubTime & \resDsdThreeSoRaJpDesydePeriod & \resDsdThreeSoRaJpDesydeTime & \resDsdThreeSoRaJpDesydeProven & \resDsdThreeSoRaJpSafePeriod & \resDsdThreeSoRaJpSafeTime$^{\ast}$ \\
    3: SuRaJp & \resDsdThreeSuRaJpPubPeriod & \resDsdThreeSuRaJpPubTime & \resDsdThreeSuRaJpDesydePeriod & \resDsdThreeSuRaJpDesydeTime & \resDsdThreeSuRaJpDesydeProven & \resDsdThreeSuRaJpSafePeriod & \resDsdThreeSuRaJpSafeTime \\
    3: SoSuRaJp & \resDsdThreeSoSuRaJpPubPeriod & \resDsdThreeSoSuRaJpPubTime & \resDsdThreeSoSuRaJpDesydePeriod & \resDsdThreeSoSuRaJpDesydeTime & \resDsdThreeSoSuRaJpDesydeProven & \resDsdThreeSoSuRaJpSafePeriod & \resDsdThreeSoSuRaJpSafeTime \\
    4: RaJp & \resDsdFourRaJpPubPeriod & \resDsdFourRaJpPubTime & \resDsdFourRaJpDesydePeriod & \resDsdFourRaJpDesydeTime & \resDsdFourRaJpDesydeProven & \resDsdFourRaJpSafePeriod & \resDsdFourRaJpSafeTime \\
    4: SoRaJp & \resDsdFourSoRaJpPubPeriod & \resDsdFourSoRaJpPubTime$^{\ast}$ & \resDsdFourSoRaJpDesydePeriod & \resDsdFourSoRaJpDesydeTime & \resDsdFourSoRaJpDesydeProven & \resDsdFourSoRaJpSafePeriod & \resDsdFourSoRaJpSafeTime \\
    \bottomrule
  \end{tabular*}

  \vspace{2pt}\parbox{\columnwidth}{\scriptsize Experiment: scenario
  (So Sobel, Su SUSAN, Ra RASTA, Jp JPEG). Article: \desyde{} with 8 threads
  as published~\cite{rosvall2018flexible}; here: \resDsdThreads{} threads,
  \resDsdSafeTimeLimit{}\,s wall budget; a time at the budget means the
  search was stopped with the best value so far. pr.: proven within the
  budget. $^{\ast}$Not proven. \safedse{} times are medians over up to
  \resDsdMaxReps{} runs for the fast scenarios and one run otherwise.}
\end{table}

\Cref{tab:desyde} gives the results. \safedse{} reached the article's
period in all \resDsdComparable{} scenarios and proved it in
\resDsdSafeProven{}, including one optimum that the article left open after
\resDsdFourSoRaJpPubTime{}\,s (here \resDsdFourSoRaJpSafeTime{}\,s).
\desyde{} proved \resDsdDesydeProven{} within the budget on our machine. In
the other scenarios, its first step mostly ended at its inactivity time-out
with the article's value, and its second step did not finish. On
Experiment~4 RaJp, it reached only \resDsdFourRaJpDesydePeriod{} (article:
\resDsdFourRaJpPubPeriod). \safedse{} is not faster everywhere: RASTA with
JPEG took \resDsdThreeRaJpSafeTime{}\,s against
\resDsdThreeRaJpDesydeTime{}\,s, and two scenarios took close to the
budget. Both searches are nondeterministic, and the one-hour scenarios ran
once; the comparison covers only scenarios where the two objectives agree.
An uncapped first \desyde{} run of one scenario was still in its second
step long after the budget, where only \desyde{}'s own six-hour limit
applies, so all runs got the same cap.

\begin{table}[!t]
  \caption{Single-Application Benchmarks, Minimum Hardware Cost}
  \label{tab:rq1-hw}
  \centering\footnotesize
  \setlength{\tabcolsep}{4pt}
  \begin{tabular*}{\columnwidth}{@{\extracolsep{\fill}}l r r r r@{}}
    \toprule
    Application & Hardware cost & $\mu$ & Published bound & $t$ (s) \\
    \midrule
    Sobel & \resOneStepSobelHwCost & \resOneStepSobelMuMax & \resOneStepSobelBound & \resOneStepSobelTime \\
    SUSAN & \resOneStepSusanHwCost & \resOneStepSusanMuMax & \resOneStepSusanBound & \resOneStepSusanTime \\
    RASTA & \resOneStepRastaHwCost & \resOneStepRastaMuMax & \resOneStepRastaBound & \resOneStepRastaTime \\
    \bottomrule
  \end{tabular*}

  \vspace{2pt}\parbox{\columnwidth}{\scriptsize Rosvall and Sander's
  platform and period bounds; $\mu$: largest period over the repetitions
  (the period of a cost optimum is not unique); $t$: median time to prove
  the optimum.}
\end{table}

On the single-application benchmarks of the same article, minimizing the
hardware cost under the published period bound gives the optima of
\cref{tab:rq1-hw}, each proven within a second, and every period meets the
published bound. With all four applications together, the optimum has
\resRosvallAllTwoStepHwCost{} cost units on \resRosvallAllTwoStepNprocs{}
cores. The two-step strategy of \cref{sec:model-solving} proved it in
\resRosvallAllTwoStepTime{}\,s (step~1 \resRosvallAllTwoStepStepOneTime{}\,s,
step~2 \resRosvallAllTwoStepStepTwoTime{}\,s), and one step in
\resRosvallAllOnestepTime{}\,s (status \resRosvallAllOnestepStatus). With
four workers, the two steps are thus not needed on this benchmark; with one
worker in an early run, only the two-step strategy proved the optimum
within the time limit.

\subsection{RQ2: Value of Joint Exploration}\label{sec:eval-rq2}
We compare joint exploration with decoupled flows, each the joint model plus
constraints, so none can be cheaper. Safety-last (SL) maps the instance
without patterns at least cost (ties by period, then by core index), fixes
the core of every base actor, and leaves patterns, replicas, and platform
free. Pattern-first fixes each actor's pattern before mapping. The rules take
the fewest components (PF$^{-}$), the widest fault coverage (PF$^{+}$, akin to
SafPat~\cite{gildantas2023toolchain}), the least added WCET on the fastest
type (PF$^{t}$), or the least added cost (PF$^{c}$), with the catalog order
as the last tie-break. No-promotion (NP) forbids promotion. All minimize
$C_{\mathrm{tot}}$ without a period bound and with the bound
$\lceil f\mu^{*}\rceil$ for $f\in\{1.00,1.10,1.25,1.50,2.00\}$, where
$\mu^{*}$ is the instance's own minimum period. No other tool solves this
problem (\cref{tab:related}). A pattern-first flow that backtracks over its
pattern choices would be joint exploration by enumeration; the flows here
model a design process that fixes the architecture first.

\Cref{tab:rq2-nobound} shows the results without a bound. SL, PF$^{-}$,
PF$^{c}$, and NP reach the joint optimum on all \resBaselineInstances{}
instances; PF$^{+}$ loses up to \resBaselinePFMaxMaxLoss{}\,\% and PF$^{t}$
up to \resBaselinePFLightMaxLoss{}\,\%. Without a bound, throughput never
binds, and SL can still co-optimize patterns, replicas, and platform after
fixing the base actors, since a replica can always go to another core.
Under a bound, PF$^{-}$ and PF$^{+}$ are infeasible on
\resTightHundredTwentyFivePFMinInfeasible{} instances up to
$1.25\,\mu^{*}$ and on \resTightHundredFiftyPFMinInfeasible{} at
$1.50\,\mu^{*}$, and PF$^{c}$ on \resTightHundredFiftyPFCheapInfeasible{}
up to $1.50\,\mu^{*}$ (\cref{fig:joint}, \cref{tab:rq2-all}). PF$^{t}$
stays feasible but loses up to \resTightTwoHundredPFLightMaxLoss{}\,\% at
$2.00\,\mu^{*}$, and SL fails on \resTightHundredSLInfeasible{} at
$1.00\,\mu^{*}$ and loses up to \resTightTwoHundredSLMaxLoss{}\,\%. NP
equals joint everywhere. No pattern-first rule is thus both always feasible
and never more expensive. \Cref{fig:worse} adds how often each flow is
feasible but dearer.

\begin{figure}[!t]
  \centering
  \includegraphics{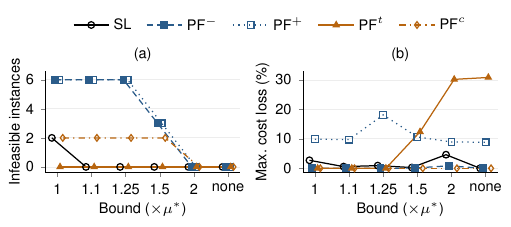}
  \caption{Decoupled flows against joint exploration per period bound: (a)
  instances proven to have no feasible design, (b) largest cost loss among
  the feasible ones (NP, equal to joint, not drawn). No pattern-first rule
  is both always feasible and never more expensive.}
  \label{fig:joint}
\end{figure}

\begin{figure}[!t]
  \centering
  \includegraphics{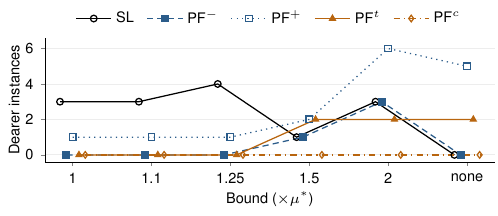}
  \caption{Instances on which a decoupled flow is feasible but dearer than
  joint exploration, per period bound (the third count of
  \cref{tab:rq2-all}). SL loses cost mostly at tight bounds, the
  timing-agnostic PF$^{+}$ mostly at loose ones.}
  \label{fig:worse}
\end{figure}

\begin{table}[!t]
  \caption{RQ2 Without a Period Bound (Extra Cost Over Joint)}
  \label{tab:rq2-nobound}
  \centering\footnotesize
  \setlength{\tabcolsep}{2.5pt}
\begin{tabular*}{\columnwidth}{@{\extracolsep{\fill}}l r r r r r r r c@{}}
\toprule
Instance & Joint & SL & PF$^{-}$ & PF$^{+}$ & PF$^{t}$ & PF$^{c}$ & NP & Cores \\
\midrule
\code{p\_rasta} & 3330 & = & = & = & = & = & = & 2 \\
\code{f\_random\_hw} & 3349 & = & = & +25 & = & = & = & 5 \\
\code{d\_random\_hw} & 1857 & = & = & +25 & = & = & = & 3 \\
\code{d\_systematic\_sw} & 1846 & = & = & +163 & = & = & = & 2 \\
\code{f\_sw3} & 3365 & = & = & +41 & +1038 & = & = & 4 \\
\code{f\_both3} & 3406 & = & = & = & +1038 & = & = & 5 \\
\code{c\_pat} & 3349 & = & = & +25 & = & = & = & 5 \\
\code{pc\_sobel} & 1401 & = & = & = & = & = & = & 3 \\
\code{v\_nvp} & 5680 & = & = & = & = & = & = & 6 \\
\bottomrule
\end{tabular*}

\vspace{2pt}\parbox{\columnwidth}{\scriptsize No period bound. Joint: proven minimum of $C_{\mathrm{tot}}$. Flows: extra cost over joint (=: equal). Cores: of the joint optimum.}

\end{table}

\begin{table*}[!t]
  \caption{RQ2 per Input Variant and Period Bound: Infeasible / Dearer /
  Largest Loss (\%)}
  \label{tab:rq2-all}
  \centering\footnotesize
  \setlength{\tabcolsep}{3pt}
\begin{tabular*}{\textwidth}{@{\extracolsep{\fill}}l l c c c c c c c @{}}
\toprule
Inputs & Bound & $n$ & SL & PF$^{-}$ & PF$^{+}$ & PF$^{t}$ & PF$^{c}$ & NP \\
\midrule
base & none & \resBaselineInstances{} & \resBaselineSLInfeasible{}/\resBaselineSLWorse{}/\resBaselineSLMaxLoss{} & \resBaselinePFMinInfeasible{}/\resBaselinePFMinWorse{}/\resBaselinePFMinMaxLoss{} & \resBaselinePFMaxInfeasible{}/\resBaselinePFMaxWorse{}/\resBaselinePFMaxMaxLoss{} & \resBaselinePFLightInfeasible{}/\resBaselinePFLightWorse{}/\resBaselinePFLightMaxLoss{} & \resBaselinePFCheapInfeasible{}/\resBaselinePFCheapWorse{}/\resBaselinePFCheapMaxLoss{} & \resBaselineNPInfeasible{}/\resBaselineNPWorse{}/\resBaselineNPMaxLoss{} \\
 & $1.00\,\mu^{*}$ & \resTightHundredInstances{} & \resTightHundredSLInfeasible{}/\resTightHundredSLWorse{}/\resTightHundredSLMaxLoss{} & \resTightHundredPFMinInfeasible{}/\resTightHundredPFMinWorse{}/\resTightHundredPFMinMaxLoss{} & \resTightHundredPFMaxInfeasible{}/\resTightHundredPFMaxWorse{}/\resTightHundredPFMaxMaxLoss{} & \resTightHundredPFLightInfeasible{}/\resTightHundredPFLightWorse{}/\resTightHundredPFLightMaxLoss{} & \resTightHundredPFCheapInfeasible{}/\resTightHundredPFCheapWorse{}/\resTightHundredPFCheapMaxLoss{} & \resTightHundredNPInfeasible{}/\resTightHundredNPWorse{}/\resTightHundredNPMaxLoss{} \\
 & $1.10\,\mu^{*}$ & \resTightHundredTenInstances{} & \resTightHundredTenSLInfeasible{}/\resTightHundredTenSLWorse{}/\resTightHundredTenSLMaxLoss{} & \resTightHundredTenPFMinInfeasible{}/\resTightHundredTenPFMinWorse{}/\resTightHundredTenPFMinMaxLoss{} & \resTightHundredTenPFMaxInfeasible{}/\resTightHundredTenPFMaxWorse{}/\resTightHundredTenPFMaxMaxLoss{} & \resTightHundredTenPFLightInfeasible{}/\resTightHundredTenPFLightWorse{}/\resTightHundredTenPFLightMaxLoss{} & \resTightHundredTenPFCheapInfeasible{}/\resTightHundredTenPFCheapWorse{}/\resTightHundredTenPFCheapMaxLoss{} & \resTightHundredTenNPInfeasible{}/\resTightHundredTenNPWorse{}/\resTightHundredTenNPMaxLoss{} \\
 & $1.25\,\mu^{*}$ & \resTightHundredTwentyFiveInstances{} & \resTightHundredTwentyFiveSLInfeasible{}/\resTightHundredTwentyFiveSLWorse{}/\resTightHundredTwentyFiveSLMaxLoss{} & \resTightHundredTwentyFivePFMinInfeasible{}/\resTightHundredTwentyFivePFMinWorse{}/\resTightHundredTwentyFivePFMinMaxLoss{} & \resTightHundredTwentyFivePFMaxInfeasible{}/\resTightHundredTwentyFivePFMaxWorse{}/\resTightHundredTwentyFivePFMaxMaxLoss{} & \resTightHundredTwentyFivePFLightInfeasible{}/\resTightHundredTwentyFivePFLightWorse{}/\resTightHundredTwentyFivePFLightMaxLoss{} & \resTightHundredTwentyFivePFCheapInfeasible{}/\resTightHundredTwentyFivePFCheapWorse{}/\resTightHundredTwentyFivePFCheapMaxLoss{} & \resTightHundredTwentyFiveNPInfeasible{}/\resTightHundredTwentyFiveNPWorse{}/\resTightHundredTwentyFiveNPMaxLoss{} \\
 & $1.50\,\mu^{*}$ & \resTightHundredFiftyInstances{} & \resTightHundredFiftySLInfeasible{}/\resTightHundredFiftySLWorse{}/\resTightHundredFiftySLMaxLoss{} & \resTightHundredFiftyPFMinInfeasible{}/\resTightHundredFiftyPFMinWorse{}/\resTightHundredFiftyPFMinMaxLoss{} & \resTightHundredFiftyPFMaxInfeasible{}/\resTightHundredFiftyPFMaxWorse{}/\resTightHundredFiftyPFMaxMaxLoss{} & \resTightHundredFiftyPFLightInfeasible{}/\resTightHundredFiftyPFLightWorse{}/\resTightHundredFiftyPFLightMaxLoss{} & \resTightHundredFiftyPFCheapInfeasible{}/\resTightHundredFiftyPFCheapWorse{}/\resTightHundredFiftyPFCheapMaxLoss{} & \resTightHundredFiftyNPInfeasible{}/\resTightHundredFiftyNPWorse{}/\resTightHundredFiftyNPMaxLoss{} \\
 & $2.00\,\mu^{*}$ & \resTightTwoHundredInstances{} & \resTightTwoHundredSLInfeasible{}/\resTightTwoHundredSLWorse{}/\resTightTwoHundredSLMaxLoss{} & \resTightTwoHundredPFMinInfeasible{}/\resTightTwoHundredPFMinWorse{}/\resTightTwoHundredPFMinMaxLoss{} & \resTightTwoHundredPFMaxInfeasible{}/\resTightTwoHundredPFMaxWorse{}/\resTightTwoHundredPFMaxMaxLoss{} & \resTightTwoHundredPFLightInfeasible{}/\resTightTwoHundredPFLightWorse{}/\resTightTwoHundredPFLightMaxLoss{} & \resTightTwoHundredPFCheapInfeasible{}/\resTightTwoHundredPFCheapWorse{}/\resTightTwoHundredPFCheapMaxLoss{} & \resTightTwoHundredNPInfeasible{}/\resTightTwoHundredNPWorse{}/\resTightTwoHundredNPMaxLoss{} \\
\midrule
\code{tok2} & none & \resVarTokTwoNoBoundInstances{} & \resVarTokTwoNoBoundSLInfeasible{}/\resVarTokTwoNoBoundSLWorse{}/\resVarTokTwoNoBoundSLMaxLoss{} & \resVarTokTwoNoBoundPFMinInfeasible{}/\resVarTokTwoNoBoundPFMinWorse{}/\resVarTokTwoNoBoundPFMinMaxLoss{} & \resVarTokTwoNoBoundPFMaxInfeasible{}/\resVarTokTwoNoBoundPFMaxWorse{}/\resVarTokTwoNoBoundPFMaxMaxLoss{} & \resVarTokTwoNoBoundPFLightInfeasible{}/\resVarTokTwoNoBoundPFLightWorse{}/\resVarTokTwoNoBoundPFLightMaxLoss{} & \resVarTokTwoNoBoundPFCheapInfeasible{}/\resVarTokTwoNoBoundPFCheapWorse{}/\resVarTokTwoNoBoundPFCheapMaxLoss{} & n/r \\
 & $1.00\,\mu^{*}$ & \resVarTokTwoHundredInstances{} & \resVarTokTwoHundredSLInfeasible{}/\resVarTokTwoHundredSLWorse{}/\resVarTokTwoHundredSLMaxLoss{} & \resVarTokTwoHundredPFMinInfeasible{}/\resVarTokTwoHundredPFMinWorse{}/\resVarTokTwoHundredPFMinMaxLoss{} & \resVarTokTwoHundredPFMaxInfeasible{}/\resVarTokTwoHundredPFMaxWorse{}/\resVarTokTwoHundredPFMaxMaxLoss{} & \resVarTokTwoHundredPFLightInfeasible{}/\resVarTokTwoHundredPFLightWorse{}/\resVarTokTwoHundredPFLightMaxLoss{} & \resVarTokTwoHundredPFCheapInfeasible{}/\resVarTokTwoHundredPFCheapWorse{}/\resVarTokTwoHundredPFCheapMaxLoss{} & n/r \\
 & $1.10\,\mu^{*}$ & \resVarTokTwoHundredTenInstances{} & \resVarTokTwoHundredTenSLInfeasible{}/\resVarTokTwoHundredTenSLWorse{}/\resVarTokTwoHundredTenSLMaxLoss{} & \resVarTokTwoHundredTenPFMinInfeasible{}/\resVarTokTwoHundredTenPFMinWorse{}/\resVarTokTwoHundredTenPFMinMaxLoss{} & \resVarTokTwoHundredTenPFMaxInfeasible{}/\resVarTokTwoHundredTenPFMaxWorse{}/\resVarTokTwoHundredTenPFMaxMaxLoss{} & \resVarTokTwoHundredTenPFLightInfeasible{}/\resVarTokTwoHundredTenPFLightWorse{}/\resVarTokTwoHundredTenPFLightMaxLoss{} & \resVarTokTwoHundredTenPFCheapInfeasible{}/\resVarTokTwoHundredTenPFCheapWorse{}/\resVarTokTwoHundredTenPFCheapMaxLoss{} & n/r \\
 & $1.25\,\mu^{*}$ & \resVarTokTwoHundredTwentyFiveInstances{} & \resVarTokTwoHundredTwentyFiveSLInfeasible{}/\resVarTokTwoHundredTwentyFiveSLWorse{}/\resVarTokTwoHundredTwentyFiveSLMaxLoss{} & \resVarTokTwoHundredTwentyFivePFMinInfeasible{}/\resVarTokTwoHundredTwentyFivePFMinWorse{}/\resVarTokTwoHundredTwentyFivePFMinMaxLoss{} & \resVarTokTwoHundredTwentyFivePFMaxInfeasible{}/\resVarTokTwoHundredTwentyFivePFMaxWorse{}/\resVarTokTwoHundredTwentyFivePFMaxMaxLoss{} & \resVarTokTwoHundredTwentyFivePFLightInfeasible{}/\resVarTokTwoHundredTwentyFivePFLightWorse{}/\resVarTokTwoHundredTwentyFivePFLightMaxLoss{} & \resVarTokTwoHundredTwentyFivePFCheapInfeasible{}/\resVarTokTwoHundredTwentyFivePFCheapWorse{}/\resVarTokTwoHundredTwentyFivePFCheapMaxLoss{} & n/r \\
 & $1.50\,\mu^{*}$ & \resVarTokTwoHundredFiftyInstances{} & \resVarTokTwoHundredFiftySLInfeasible{}/\resVarTokTwoHundredFiftySLWorse{}/\resVarTokTwoHundredFiftySLMaxLoss{} & \resVarTokTwoHundredFiftyPFMinInfeasible{}/\resVarTokTwoHundredFiftyPFMinWorse{}/\resVarTokTwoHundredFiftyPFMinMaxLoss{} & \resVarTokTwoHundredFiftyPFMaxInfeasible{}/\resVarTokTwoHundredFiftyPFMaxWorse{}/\resVarTokTwoHundredFiftyPFMaxMaxLoss{} & \resVarTokTwoHundredFiftyPFLightInfeasible{}/\resVarTokTwoHundredFiftyPFLightWorse{}/\resVarTokTwoHundredFiftyPFLightMaxLoss{} & \resVarTokTwoHundredFiftyPFCheapInfeasible{}/\resVarTokTwoHundredFiftyPFCheapWorse{}/\resVarTokTwoHundredFiftyPFCheapMaxLoss{} & n/r \\
 & $2.00\,\mu^{*}$ & \resVarTokTwoTwoHundredInstances{} & \resVarTokTwoTwoHundredSLInfeasible{}/\resVarTokTwoTwoHundredSLWorse{}/\resVarTokTwoTwoHundredSLMaxLoss{} & \resVarTokTwoTwoHundredPFMinInfeasible{}/\resVarTokTwoTwoHundredPFMinWorse{}/\resVarTokTwoTwoHundredPFMinMaxLoss{} & \resVarTokTwoTwoHundredPFMaxInfeasible{}/\resVarTokTwoTwoHundredPFMaxWorse{}/\resVarTokTwoTwoHundredPFMaxMaxLoss{} & \resVarTokTwoTwoHundredPFLightInfeasible{}/\resVarTokTwoTwoHundredPFLightWorse{}/\resVarTokTwoTwoHundredPFLightMaxLoss{} & \resVarTokTwoTwoHundredPFCheapInfeasible{}/\resVarTokTwoTwoHundredPFCheapWorse{}/\resVarTokTwoTwoHundredPFCheapMaxLoss{} & n/r \\
\midrule
\code{chk10} & none & \resVarChkTenNoBoundInstances{} & \resVarChkTenNoBoundSLInfeasible{}/\resVarChkTenNoBoundSLWorse{}/\resVarChkTenNoBoundSLMaxLoss{} & \resVarChkTenNoBoundPFMinInfeasible{}/\resVarChkTenNoBoundPFMinWorse{}/\resVarChkTenNoBoundPFMinMaxLoss{} & \resVarChkTenNoBoundPFMaxInfeasible{}/\resVarChkTenNoBoundPFMaxWorse{}/\resVarChkTenNoBoundPFMaxMaxLoss{} & \resVarChkTenNoBoundPFLightInfeasible{}/\resVarChkTenNoBoundPFLightWorse{}/\resVarChkTenNoBoundPFLightMaxLoss{} & \resVarChkTenNoBoundPFCheapInfeasible{}/\resVarChkTenNoBoundPFCheapWorse{}/\resVarChkTenNoBoundPFCheapMaxLoss{} & n/r \\
 & $1.00\,\mu^{*}$ & \resVarChkTenHundredInstances{} & \resVarChkTenHundredSLInfeasible{}/\resVarChkTenHundredSLWorse{}/\resVarChkTenHundredSLMaxLoss{} & \resVarChkTenHundredPFMinInfeasible{}/\resVarChkTenHundredPFMinWorse{}/\resVarChkTenHundredPFMinMaxLoss{} & \resVarChkTenHundredPFMaxInfeasible{}/\resVarChkTenHundredPFMaxWorse{}/\resVarChkTenHundredPFMaxMaxLoss{} & \resVarChkTenHundredPFLightInfeasible{}/\resVarChkTenHundredPFLightWorse{}/\resVarChkTenHundredPFLightMaxLoss{} & \resVarChkTenHundredPFCheapInfeasible{}/\resVarChkTenHundredPFCheapWorse{}/\resVarChkTenHundredPFCheapMaxLoss{} & n/r \\
 & $1.10\,\mu^{*}$ & \resVarChkTenHundredTenInstances{} & \resVarChkTenHundredTenSLInfeasible{}/\resVarChkTenHundredTenSLWorse{}/\resVarChkTenHundredTenSLMaxLoss{} & \resVarChkTenHundredTenPFMinInfeasible{}/\resVarChkTenHundredTenPFMinWorse{}/\resVarChkTenHundredTenPFMinMaxLoss{} & \resVarChkTenHundredTenPFMaxInfeasible{}/\resVarChkTenHundredTenPFMaxWorse{}/\resVarChkTenHundredTenPFMaxMaxLoss{} & \resVarChkTenHundredTenPFLightInfeasible{}/\resVarChkTenHundredTenPFLightWorse{}/\resVarChkTenHundredTenPFLightMaxLoss{} & \resVarChkTenHundredTenPFCheapInfeasible{}/\resVarChkTenHundredTenPFCheapWorse{}/\resVarChkTenHundredTenPFCheapMaxLoss{} & n/r \\
 & $1.25\,\mu^{*}$ & \resVarChkTenHundredTwentyFiveInstances{} & \resVarChkTenHundredTwentyFiveSLInfeasible{}/\resVarChkTenHundredTwentyFiveSLWorse{}/\resVarChkTenHundredTwentyFiveSLMaxLoss{} & \resVarChkTenHundredTwentyFivePFMinInfeasible{}/\resVarChkTenHundredTwentyFivePFMinWorse{}/\resVarChkTenHundredTwentyFivePFMinMaxLoss{} & \resVarChkTenHundredTwentyFivePFMaxInfeasible{}/\resVarChkTenHundredTwentyFivePFMaxWorse{}/\resVarChkTenHundredTwentyFivePFMaxMaxLoss{} & \resVarChkTenHundredTwentyFivePFLightInfeasible{}/\resVarChkTenHundredTwentyFivePFLightWorse{}/\resVarChkTenHundredTwentyFivePFLightMaxLoss{} & \resVarChkTenHundredTwentyFivePFCheapInfeasible{}/\resVarChkTenHundredTwentyFivePFCheapWorse{}/\resVarChkTenHundredTwentyFivePFCheapMaxLoss{} & n/r \\
 & $1.50\,\mu^{*}$ & \resVarChkTenHundredFiftyInstances{} & \resVarChkTenHundredFiftySLInfeasible{}/\resVarChkTenHundredFiftySLWorse{}/\resVarChkTenHundredFiftySLMaxLoss{} & \resVarChkTenHundredFiftyPFMinInfeasible{}/\resVarChkTenHundredFiftyPFMinWorse{}/\resVarChkTenHundredFiftyPFMinMaxLoss{} & \resVarChkTenHundredFiftyPFMaxInfeasible{}/\resVarChkTenHundredFiftyPFMaxWorse{}/\resVarChkTenHundredFiftyPFMaxMaxLoss{} & \resVarChkTenHundredFiftyPFLightInfeasible{}/\resVarChkTenHundredFiftyPFLightWorse{}/\resVarChkTenHundredFiftyPFLightMaxLoss{} & \resVarChkTenHundredFiftyPFCheapInfeasible{}/\resVarChkTenHundredFiftyPFCheapWorse{}/\resVarChkTenHundredFiftyPFCheapMaxLoss{} & n/r \\
 & $2.00\,\mu^{*}$ & \resVarChkTenTwoHundredInstances{} & \resVarChkTenTwoHundredSLInfeasible{}/\resVarChkTenTwoHundredSLWorse{}/\resVarChkTenTwoHundredSLMaxLoss{} & \resVarChkTenTwoHundredPFMinInfeasible{}/\resVarChkTenTwoHundredPFMinWorse{}/\resVarChkTenTwoHundredPFMinMaxLoss{} & \resVarChkTenTwoHundredPFMaxInfeasible{}/\resVarChkTenTwoHundredPFMaxWorse{}/\resVarChkTenTwoHundredPFMaxMaxLoss{} & \resVarChkTenTwoHundredPFLightInfeasible{}/\resVarChkTenTwoHundredPFLightWorse{}/\resVarChkTenTwoHundredPFLightMaxLoss{} & \resVarChkTenTwoHundredPFCheapInfeasible{}/\resVarChkTenTwoHundredPFCheapWorse{}/\resVarChkTenTwoHundredPFCheapMaxLoss{} & n/r \\
\midrule
\code{chk50} & none & \resVarChkFiftyNoBoundInstances{} & \resVarChkFiftyNoBoundSLInfeasible{}/\resVarChkFiftyNoBoundSLWorse{}/\resVarChkFiftyNoBoundSLMaxLoss{} & \resVarChkFiftyNoBoundPFMinInfeasible{}/\resVarChkFiftyNoBoundPFMinWorse{}/\resVarChkFiftyNoBoundPFMinMaxLoss{} & \resVarChkFiftyNoBoundPFMaxInfeasible{}/\resVarChkFiftyNoBoundPFMaxWorse{}/\resVarChkFiftyNoBoundPFMaxMaxLoss{} & \resVarChkFiftyNoBoundPFLightInfeasible{}/\resVarChkFiftyNoBoundPFLightWorse{}/\resVarChkFiftyNoBoundPFLightMaxLoss{} & \resVarChkFiftyNoBoundPFCheapInfeasible{}/\resVarChkFiftyNoBoundPFCheapWorse{}/\resVarChkFiftyNoBoundPFCheapMaxLoss{} & n/r \\
 & $1.00\,\mu^{*}$ & \resVarChkFiftyHundredInstances{} & \resVarChkFiftyHundredSLInfeasible{}/\resVarChkFiftyHundredSLWorse{}/\resVarChkFiftyHundredSLMaxLoss{} & \resVarChkFiftyHundredPFMinInfeasible{}/\resVarChkFiftyHundredPFMinWorse{}/\resVarChkFiftyHundredPFMinMaxLoss{} & \resVarChkFiftyHundredPFMaxInfeasible{}/\resVarChkFiftyHundredPFMaxWorse{}/\resVarChkFiftyHundredPFMaxMaxLoss{} & \resVarChkFiftyHundredPFLightInfeasible{}/\resVarChkFiftyHundredPFLightWorse{}/\resVarChkFiftyHundredPFLightMaxLoss{} & \resVarChkFiftyHundredPFCheapInfeasible{}/\resVarChkFiftyHundredPFCheapWorse{}/\resVarChkFiftyHundredPFCheapMaxLoss{} & n/r \\
 & $1.10\,\mu^{*}$ & \resVarChkFiftyHundredTenInstances{} & \resVarChkFiftyHundredTenSLInfeasible{}/\resVarChkFiftyHundredTenSLWorse{}/\resVarChkFiftyHundredTenSLMaxLoss{} & \resVarChkFiftyHundredTenPFMinInfeasible{}/\resVarChkFiftyHundredTenPFMinWorse{}/\resVarChkFiftyHundredTenPFMinMaxLoss{} & \resVarChkFiftyHundredTenPFMaxInfeasible{}/\resVarChkFiftyHundredTenPFMaxWorse{}/\resVarChkFiftyHundredTenPFMaxMaxLoss{} & \resVarChkFiftyHundredTenPFLightInfeasible{}/\resVarChkFiftyHundredTenPFLightWorse{}/\resVarChkFiftyHundredTenPFLightMaxLoss{} & \resVarChkFiftyHundredTenPFCheapInfeasible{}/\resVarChkFiftyHundredTenPFCheapWorse{}/\resVarChkFiftyHundredTenPFCheapMaxLoss{} & n/r \\
 & $1.25\,\mu^{*}$ & \resVarChkFiftyHundredTwentyFiveInstances{} & \resVarChkFiftyHundredTwentyFiveSLInfeasible{}/\resVarChkFiftyHundredTwentyFiveSLWorse{}/\resVarChkFiftyHundredTwentyFiveSLMaxLoss{} & \resVarChkFiftyHundredTwentyFivePFMinInfeasible{}/\resVarChkFiftyHundredTwentyFivePFMinWorse{}/\resVarChkFiftyHundredTwentyFivePFMinMaxLoss{} & \resVarChkFiftyHundredTwentyFivePFMaxInfeasible{}/\resVarChkFiftyHundredTwentyFivePFMaxWorse{}/\resVarChkFiftyHundredTwentyFivePFMaxMaxLoss{} & \resVarChkFiftyHundredTwentyFivePFLightInfeasible{}/\resVarChkFiftyHundredTwentyFivePFLightWorse{}/\resVarChkFiftyHundredTwentyFivePFLightMaxLoss{} & \resVarChkFiftyHundredTwentyFivePFCheapInfeasible{}/\resVarChkFiftyHundredTwentyFivePFCheapWorse{}/\resVarChkFiftyHundredTwentyFivePFCheapMaxLoss{} & n/r \\
 & $1.50\,\mu^{*}$ & \resVarChkFiftyHundredFiftyInstances{} & \resVarChkFiftyHundredFiftySLInfeasible{}/\resVarChkFiftyHundredFiftySLWorse{}/\resVarChkFiftyHundredFiftySLMaxLoss{} & \resVarChkFiftyHundredFiftyPFMinInfeasible{}/\resVarChkFiftyHundredFiftyPFMinWorse{}/\resVarChkFiftyHundredFiftyPFMinMaxLoss{} & \resVarChkFiftyHundredFiftyPFMaxInfeasible{}/\resVarChkFiftyHundredFiftyPFMaxWorse{}/\resVarChkFiftyHundredFiftyPFMaxMaxLoss{} & \resVarChkFiftyHundredFiftyPFLightInfeasible{}/\resVarChkFiftyHundredFiftyPFLightWorse{}/\resVarChkFiftyHundredFiftyPFLightMaxLoss{} & \resVarChkFiftyHundredFiftyPFCheapInfeasible{}/\resVarChkFiftyHundredFiftyPFCheapWorse{}/\resVarChkFiftyHundredFiftyPFCheapMaxLoss{} & n/r \\
 & $2.00\,\mu^{*}$ & \resVarChkFiftyTwoHundredInstances{} & \resVarChkFiftyTwoHundredSLInfeasible{}/\resVarChkFiftyTwoHundredSLWorse{}/\resVarChkFiftyTwoHundredSLMaxLoss{} & \resVarChkFiftyTwoHundredPFMinInfeasible{}/\resVarChkFiftyTwoHundredPFMinWorse{}/\resVarChkFiftyTwoHundredPFMinMaxLoss{} & \resVarChkFiftyTwoHundredPFMaxInfeasible{}/\resVarChkFiftyTwoHundredPFMaxWorse{}/\resVarChkFiftyTwoHundredPFMaxMaxLoss{} & \resVarChkFiftyTwoHundredPFLightInfeasible{}/\resVarChkFiftyTwoHundredPFLightWorse{}/\resVarChkFiftyTwoHundredPFLightMaxLoss{} & \resVarChkFiftyTwoHundredPFCheapInfeasible{}/\resVarChkFiftyTwoHundredPFCheapWorse{}/\resVarChkFiftyTwoHundredPFCheapMaxLoss{} & n/r \\
\bottomrule
\end{tabular*}

\vspace{2pt}\parbox{\textwidth}{\scriptsize Each cell: instances where the flow is infeasible / feasible but dearer than joint / largest extra cost among the feasible ones in percent (--: none feasible). $n$: instances with a verified joint optimum at that bound. Inputs: base catalog and WCETs; \code{tok2}: two tokens on every verdict edge; \code{chk10}, \code{chk50}: checker WCETs at 10\,\% and 50\,\% of the owner's (base: \resSetupCheckerScale{}\,\%). $\mu^{*}$ is the minimum period of each instance under the same inputs. n/r: not run.}

\end{table*}

Each failure is a proven infeasibility. The
\resTightHundredPFMinInfeasible{} instances concerned have a SIL-3 actor,
for which PF$^{-}$ and PF$^{+}$ fix 2-of-2 (PF$^{-}$ by catalog order, as
the isolated checker ties on its key), and PF$^{c}$ fixes it where the
mixed-SIL pair costs more (\cref{tab:pfchoice}). The 2-of-2 verdict edge
closes a one-token cycle through the owner and a full replica, so the period
is at least the sum of their WCETs. This sum is \resCertTwoOfTwoRasta{} on
RASTA,
where $\mu^{*}$ is \resThrBaseMuStarMin{}, \resCertTwoOfTwoDiverse{} when
\code{DIVERSE} puts the replica on a slower type, and
\resCertTwoOfTwoSobel{} on Sobel, where $\mu^{*}$ is
\resThrBaseMuStarMax{}. With a checker, the cycle is shorter. On all
RASTA instances, $\mu^{*}$ is the WCET of one actor on its self-edge.

\begin{table*}[!t]
  \caption{Pattern Fixed by Each Pattern-First Rule (Base Inputs)}
  \label{tab:pfchoice}
  \centering\footnotesize
  \setlength{\tabcolsep}{3pt}
\begin{tabular*}{\textwidth}{@{\extracolsep{\fill}}c l >{\raggedright\arraybackslash}p{1.4in} >{\raggedright\arraybackslash}p{1.4in} >{\raggedright\arraybackslash}p{1.4in} >{\raggedright\arraybackslash}p{1.4in} @{}}
\toprule
SIL & FM & PF$^{-}$ & PF$^{+}$ & PF$^{t}$ & PF$^{c}$ \\
\midrule
2 & H & Low-SIL D/C & Low-SIL diverse D/C & Low-SIL D/C$^{\ddagger}$ & Low-SIL D/C$^{\ddagger}$ \\
2 & S & Same-CPU D/C & Low-SIL diverse D/C & Same-CPU D/C & Same-CPU D/C \\
2 & H+S & Low-SIL diverse D/C & Low-SIL diverse D/C & Low-SIL diverse D/C & Low-SIL diverse D/C \\
3 & H & 2-of-2$^{\ddagger}$ & 2-of-2 & High-SIL isolated D/C & High-SIL isolated D/C \\
3 & S & 2-of-2 & 2-of-2 & Mixed-SIL D/C & 2-of-2 \\
3 & H+S & 2-of-2 (2); 3-version (NVP) (2) & 2-of-2 (2); 3-version (NVP) (2) & Mixed-SIL D/C (2); 3-version (NVP) (2) & 2-of-2 (2); 3-version (NVP) (2) \\
\bottomrule
\end{tabular*}

\vspace{2pt}\parbox{\textwidth}{\scriptsize Pattern that each pattern-first rule fixes for an actor of the given required SIL and fault model (FM) of the instance, base inputs; in parentheses the number of actors when an instance set mixes choices (the platform decides which patterns are applicable). SIL~0--1 actors always take \code{none}. $^{\ddagger}$The rule's key ties, and the catalog order decides for at least one actor.}

\end{table*}

\Cref{fig:thr,tab:rq2-thr} give the least period that each flow reaches
under its own restriction. That of PF$^{-}$ and PF$^{+}$ is
\resThrBasePFMinMinAboveRatio{} to \resThrBasePFMinMaxRatio{} times
$\mu^{*}$ on \resThrBasePFMinAbove{} instances, that of PF$^{c}$
\resThrBasePFCheapMaxRatio{} times on \resThrBasePFCheapAbove{}, and that
of SL at most \resThrBaseSLMaxRatio{} times (for SL, step~1 minimizes the
period too). A lower bound on the period of a pattern-first flow needs no solver: the
MCR of the graph that the fixed patterns imply, with every node at its least
WCET. Per actor, it is strengthened by enumerating the core types that
satisfy the actor's \code{SAME} and \code{DIVERSE} relations. This bound
certifies all \resCertUnsatRows{} pattern-first infeasibilities of the base
inputs and the variants below (\resCertCertified{} without the placement
step). In all \resCertThrAbove{} cases where a pattern-first flow's
least period exceeds $\mu^{*}$, the bound equals that period exactly
(\resCertThrAboveExact). So every pattern-first infeasibility holds for the
deployed system, independently of the solver.

\begin{figure}[!t]
  \centering
  \includegraphics{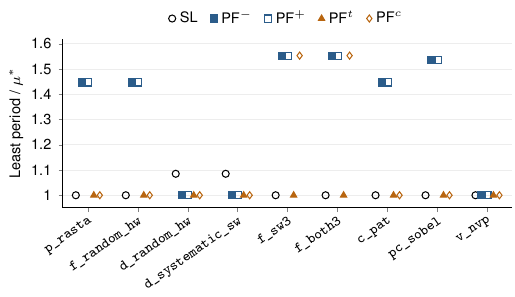}
  \caption{Least period of each decoupled flow relative to the joint
  minimum period $\mu^{*}$ (base inputs). The pattern-first rules that fix
  2-of-2 for a SIL-3 actor need \resThrBasePFMinMinAboveRatio{} to
  \resThrBasePFMinMaxRatio{} times $\mu^{*}$; SL loses
  period only on the two instances without a SIL-3 actor.}
  \label{fig:thr}
\end{figure}

\begin{table}[!t]
  \caption{Least Period per Flow: Instances Above $\mu^{*}$, Largest Ratio}
  \label{tab:rq2-thr}
  \centering\footnotesize
  \setlength{\tabcolsep}{3pt}
\begin{tabular*}{\columnwidth}{@{\extracolsep{\fill}}l c c c c c c c @{}}
\toprule
Inputs & $\mu^{*}$ & SL & PF$^{-}$ & PF$^{+}$ & PF$^{t}$ & PF$^{c}$ & NP \\
\midrule
base & \resThrBaseMuStarMin{}--\resThrBaseMuStarMax{} & \resThrBaseSLAbove{}, \resThrBaseSLMaxRatio{} & \resThrBasePFMinAbove{}, \resThrBasePFMinMaxRatio{} & \resThrBasePFMaxAbove{}, \resThrBasePFMaxMaxRatio{} & \resThrBasePFLightAbove{}, \resThrBasePFLightMaxRatio{} & \resThrBasePFCheapAbove{}, \resThrBasePFCheapMaxRatio{} & \resThrBaseNPAbove{}, \resThrBaseNPMaxRatio{} \\
\code{tok2} & \resThrTokTwoMuStarMin{}--\resThrTokTwoMuStarMax{} & \resThrTokTwoSLAbove{}, \resThrTokTwoSLMaxRatio{} & \resThrTokTwoPFMinAbove{}, \resThrTokTwoPFMinMaxRatio{} & \resThrTokTwoPFMaxAbove{}, \resThrTokTwoPFMaxMaxRatio{} & \resThrTokTwoPFLightAbove{}, \resThrTokTwoPFLightMaxRatio{} & \resThrTokTwoPFCheapAbove{}, \resThrTokTwoPFCheapMaxRatio{} & \resThrTokTwoNPAbove{}, \resThrTokTwoNPMaxRatio{} \\
\code{chk10} & \resThrChkTenMuStarMin{}--\resThrChkTenMuStarMax{} & \resThrChkTenSLAbove{}, \resThrChkTenSLMaxRatio{} & \resThrChkTenPFMinAbove{}, \resThrChkTenPFMinMaxRatio{} & \resThrChkTenPFMaxAbove{}, \resThrChkTenPFMaxMaxRatio{} & \resThrChkTenPFLightAbove{}, \resThrChkTenPFLightMaxRatio{} & \resThrChkTenPFCheapAbove{}, \resThrChkTenPFCheapMaxRatio{} & \resThrChkTenNPAbove{}, \resThrChkTenNPMaxRatio{} \\
\code{chk50} & \resThrChkFiftyMuStarMin{}--\resThrChkFiftyMuStarMax{} & \resThrChkFiftySLAbove{}, \resThrChkFiftySLMaxRatio{} & \resThrChkFiftyPFMinAbove{}, \resThrChkFiftyPFMinMaxRatio{} & \resThrChkFiftyPFMaxAbove{}, \resThrChkFiftyPFMaxMaxRatio{} & \resThrChkFiftyPFLightAbove{}, \resThrChkFiftyPFLightMaxRatio{} & \resThrChkFiftyPFCheapAbove{}, \resThrChkFiftyPFCheapMaxRatio{} & \resThrChkFiftyNPAbove{}, \resThrChkFiftyNPMaxRatio{} \\
\bottomrule
\end{tabular*}

\vspace{2pt}\parbox{\columnwidth}{\scriptsize Least period of each flow under its own restriction (THROUGHPUT), over the \resThrBaseInstances{} instances. Cell: instances whose least period exceeds the joint $\mu^{*}$, and the largest ratio to $\mu^{*}$. Range of $\mu^{*}$ over the instances in the second column.}

\end{table}

The result depends on the catalog's timing and pricing, so we varied both
(\cref{tab:rq2-all}). With two tokens on each verdict edge (\code{tok2},
fault reaction within two periods), no flow is infeasible at any bound, and
PF$^{c}$ matches joint exploration everywhere. The remaining gaps are costs,
up to \resVarTokTwoTwoHundredPFLightMaxLoss{}\,\% for PF$^{t}$ at
$2.00\,\mu^{*}$. With checkers at 10\,\% of their owner's WCET
(\code{chk10}), the picture is that of the base inputs. With checkers at
half their owner's WCET (\code{chk50}), $\mu^{*}$ of the RASTA instances
rises, PF$^{t}$ ties 2-of-2 on added WCET for one actor and picks it, and
it fails on \resVarChkFiftyHundredTwentyFivePFLightInfeasible{} instances
up to $1.25\,\mu^{*}$.

Consider a doer one SIL lower than its actor, as the sources
allow~\cite{koopman2016challenges}, and priced diverse development (a model
variant, \cref{sec:eval-ext}). Then PF$^{c}$ and PF$^{t}$ reach the joint
optimum everywhere, while PF$^{-}$ and PF$^{+}$ stay infeasible as above and
lose up to \resModExtBothPFMinMaxLossPct{} and
\resModExtBothPFMaxMaxLossPct{}\,\% where feasible (\cref{tab:modelext-flows}).
Which rule works thus depends on the cost model, while joint exploration
was feasible and optimal in every case.

\emph{Exactness.} Because of the gap of \cref{sec:model-throughput}, we
re-solved all \resExactRows{} RQ2 and RQ4 runs with exact variant~B, with
the same builds, bounds, objectives, fixings, and workers, and verified
every solution. \resExactConclusive{} runs were conclusive in both, and
\resExactAgree{} of them agree in status and value; all
\resExactUnsatRows{} infeasibilities, including the \resExactSLUnsatRows{}
SL ones that the cycle bound does not cover, are infeasible under variant~B
too. The one difference is an SL run in which variant~B finds a slightly
cheaper design on the same step-1 binding. \resExactInconclusive{} runs
stayed open because variant~B reached the time limit, all at loose bounds
where the model proves optimality in seconds; in all of them, variant~B's
incumbent equals the model's value. SL depends on which step-1 optimum the
solver returns, as step~1 has several optima with different bindings; the
canonical tie-break above makes it reproducible.

\emph{Three applications.} With Sobel (SIL~2), SUSAN (SIL~0), and RASTA on
one platform (\code{m\_3app}), the joint optimum costs \resMultiCost{} on
\resMultiNprocs{} cores (median \resMultiTime{}\,s, range
\resMultiTimeMin{} to \resMultiTimeMax{}\,s). Bounds are
$\lceil f\mu^{*}_z\rceil$ per application, with $\mu^{*}_z$ the least
period of the application alone (\cref{tab:multi}). Nothing is feasible for
$f\le1.10$: alone, each application reaches these periods; together, they
compete for the SIL-4 cores. PF$^{-}$ and PF$^{+}$ fail at $1.25$, and SL
fails at every bound. The variant-B cross-check of the bounded runs agrees
on all \resMultiExactAgree{} that are conclusive in both runs, with no
better incumbent among the \resMultiExactOpen{} open ones. A longer
variant-B run confirmed the joint infeasibility at $1.10$
(\resMultiHundredTenJointExactLong{} in
\resMultiHundredTenJointExactLongTime{}\,s), so every infeasibility holds
for the deployed system.

\begin{table}[!t]
  \caption{Three Applications (\code{m\_3app}) per Bound and Flow}
  \label{tab:multi}
  \centering\footnotesize
  \setlength{\tabcolsep}{3pt}
\begin{tabular*}{\columnwidth}{@{\extracolsep{\fill}}l c c c c c c c @{}}
\toprule
Bound & Joint & SL & PF$^{-}$ & PF$^{+}$ & PF$^{t}$ & PF$^{c}$ & NP \\
\midrule
none & \resMultiNoneJointCost{} & +\resMultiNoneSLLoss{}\,\% & +\resMultiNonePFMinLoss{}\,\% & +\resMultiNonePFMaxLoss{}\,\% & +\resMultiNonePFLightLoss{}\,\% & +\resMultiNonePFCheapLoss{}\,\% & +\resMultiNoneNPLoss{}\,\% \\
$1.00\,\mu^{*}_z$ & $\infty$ & $\infty$ & $\infty$ & $\infty$ & $\infty$ & $\infty$ & $\infty$ \\
$1.10\,\mu^{*}_z$ & $\infty$ & $\infty$ & $\infty$ & $\infty$ & $\infty$ & $\infty$ & $\infty$ \\
$1.25\,\mu^{*}_z$ & \resMultiHundredTwentyFiveJointCost{} & $\infty$ & $\infty$ & $\infty$ & +\resMultiHundredTwentyFivePFLightLoss{}\,\% & +\resMultiHundredTwentyFivePFCheapLoss{}\,\% & +\resMultiHundredTwentyFiveNPLoss{}\,\% \\
$1.50\,\mu^{*}_z$ & \resMultiHundredFiftyJointCost{} & $\infty$ & +\resMultiHundredFiftyPFMinLoss{}\,\% & +\resMultiHundredFiftyPFMaxLoss{}\,\% & +\resMultiHundredFiftyPFLightLoss{}\,\% & +\resMultiHundredFiftyPFCheapLoss{}\,\% & +\resMultiHundredFiftyNPLoss{}\,\% \\
$2.00\,\mu^{*}_z$ & \resMultiTwoHundredJointCost{} & $\infty$ & +\resMultiTwoHundredPFMinLoss{}\,\% & +\resMultiTwoHundredPFMaxLoss{}\,\% & +\resMultiTwoHundredPFLightLoss{}\,\% & +\resMultiTwoHundredPFCheapLoss{}\,\% & +\resMultiTwoHundredNPLoss{}\,\% \\
\bottomrule
\end{tabular*}

\vspace{2pt}\parbox{\columnwidth}{\scriptsize \code{m\_3app}. Joint: proven minimum of $C_{\mathrm{tot}}$; flows: extra cost over joint; $\infty$: proven infeasible. $\mu^{*}_z$: least period of application $z$ alone (Sobel \resMultiMuAloneSobel{}, SUSAN \resMultiMuAloneSusan{}, RASTA \resMultiMuAloneRasta{}).}

\end{table}

\emph{Promotion.} No optimum above promotes an actor, which is why NP equals
joint. With the core count bounded below that of the unbounded optimum
(\cref{tab:patcons}), every feasible joint optimum promotes
(\resPatConsPromoMin{} to \resPatConsPromoMax{} units), and NP is
infeasible in all \resPatConsNPInfeasiblePairs{} cases: promotion is what
allows consolidation with patterns. One run, \code{m\_3app} with at most four
cores, reached the time limit, and its incumbent fails the verifier by
period only (the gap of \cref{sec:model-throughput}); it is left out.

\begin{table}[!t]
  \caption{Pattern Instances With a Bounded Core Count}
  \label{tab:patcons}
  \centering\footnotesize
  \setlength{\tabcolsep}{3pt}
\begin{tabular*}{\columnwidth}{@{\extracolsep{\fill}}l c c r r c@{}}
\toprule
Instance & Cores & $k$ & Joint & Promotion & NP \\
\midrule
\code{p\_rasta} & 2 & 1 & $\infty$ & -- & $\infty$ \\
\code{f\_random\_hw} & 5 & 4 & 3634 & 293 & $\infty$ \\
 &  & 3 & 3919 & 586 & $\infty$ \\
 &  & 2 & 5506 & 2206 & $\infty$ \\
 &  & 1 & $\infty$ & -- & $\infty$ \\
\code{d\_random\_hw} & 3 & 2 & 2297 & 448 & $\infty$ \\
 &  & 1 & $\infty$ & -- & $\infty$ \\
\code{d\_systematic\_sw} & 2 & 1 & 2286 & 448 & $\infty$ \\
\code{f\_sw3} & 4 & 3 & 3805 & 448 & $\infty$ \\
 &  & 2 & 5530 & 2206 & $\infty$ \\
 &  & 1 & $\infty$ & -- & $\infty$ \\
\code{f\_both3} & 5 & 4 & 3659 & 293 & $\infty$ \\
 &  & 3 & 3944 & 586 & $\infty$ \\
 &  & 2 & 5531 & 2206 & $\infty$ \\
 &  & 1 & $\infty$ & -- & $\infty$ \\
\code{c\_pat} & 5 & 4 & 3634 & 293 & $\infty$ \\
 &  & 3 & 3919 & 586 & $\infty$ \\
 &  & 2 & 5506 & 2206 & $\infty$ \\
 &  & 1 & $\infty$ & -- & $\infty$ \\
\code{pc\_sobel} & 3 & 2 & 2608 & 1215 & $\infty$ \\
 &  & 1 & $\infty$ & -- & $\infty$ \\
\code{v\_nvp} & 6 & 5 & 5933 & 293 & $\infty$ \\
 &  & 4 & 6218 & 586 & $\infty$ \\
 &  & 3 & 7805 & 2206 & $\infty$ \\
 &  & 2 & $\infty$ & -- & $\infty$ \\
\code{m\_3app} & 6 & 5 & 6108 & 448 & $\infty$ \\
 &  & 4 & t/o & -- & $\infty$ \\
 &  & 3 & 10191 & 4550 & $\infty$ \\
 &  & 2 & $\infty$ & -- & $\infty$ \\
\bottomrule
\end{tabular*}

\vspace{2pt}\parbox{\columnwidth}{\scriptsize Cores: of the unbounded joint optimum. $k$: at most $k$ cores. Joint: proven minimum of $C_{\mathrm{tot}}$, Promotion: its $C_{\mathrm{pro}}$; NP: the no-promotion flow. $\infty$: proven infeasible; t/o: time limit, incumbent not verified (\cref{sec:discussion}).}

\end{table}

\subsection{Model Extensions: Doer SIL, Priced Diversity, Verdict Tokens}\label{sec:eval-ext}
The model develops every component and the doer at the actor's SIL and
prices no diverse development; verdict tokens are catalog data. To measure
what this conservative approximation costs, we evaluated three extensions
in a copy of the model, not in the tool. E1 develops the doer at the actor's
SIL minus the pattern's \code{sil\_offset}. E2 applies the catalog's
development multipliers per role of the selected pattern, with the diverse
multiplier where the pattern places the component \code{DIVERSE}. E3 makes
the verdict tokens a decision $r\in\{1,2\}$ per actor under
$r\mu\le\kappa B$, a stand-in for a process safety time $\kappa B$ that no
input states. Each solution was verified by the unmodified verifier on an
instance into which the decisions were projected. With neutral data, the
copy reproduces the stored optima, and E3 with $\kappa=2$ reproduces the
two-token catalog variant on all \resModExtTokTwoVariantCompared{} runs
with the same bound.

\begin{table}[!t]
  \caption{Model Extensions Against the Model (Joint Optima)}
  \label{tab:modelext}
  \centering\footnotesize
  \setlength{\tabcolsep}{3pt}
\begin{tabular*}{\columnwidth}{@{\extracolsep{\fill}}l c c c c c@{}}
\toprule
Extension & Rows & Cheaper & Dearer & Pat. & Time \\
\midrule
E1: doer one SIL lower & \resModExtDoerJointRows{} & \resModExtDoerCheaper{} (\resModExtDoerMinSavingPct{}--\resModExtDoerMaxSavingPct{}) & \resModExtDoerDearer{} & \resModExtDoerPatternsChanged{} & \resModExtDoerTimeRatioMed{} \\
E2: priced diversity & \resModExtDiverseJointRows{} & \resModExtDiverseCheaper{} (\resModExtDiverseMinSavingPct{}--\resModExtDiverseMaxSavingPct{}) & \resModExtDiverseDearer{} (\resModExtDiverseMinExtraPct{}--\resModExtDiverseMaxExtraPct{}) & \resModExtDiversePatternsChanged{} & \resModExtDiverseTimeRatioMed{} \\
E1 and E2 & \resModExtBothJointRows{} & \resModExtBothCheaper{} (\resModExtBothMinSavingPct{}--\resModExtBothMaxSavingPct{}) & \resModExtBothDearer{} (\resModExtBothMinExtraPct{}--\resModExtBothMaxExtraPct{}) & \resModExtBothPatternsChanged{} & \resModExtBothTimeRatioMed{} \\
E3, $\kappa=1$ & \resModExtTokOneJointRows{} & \resModExtTokOneCheaper{} & \resModExtTokOneDearer{} & \resModExtTokOnePatternsChanged{} & \resModExtTokOneTimeRatioMed{} \\
E3, $\kappa=1.5$ & \resModExtTokOneHalfJointRows{} & \resModExtTokOneHalfCheaper{} (\resModExtTokOneHalfMinSavingPct{}--\resModExtTokOneHalfMaxSavingPct{}) & \resModExtTokOneHalfDearer{} & \resModExtTokOneHalfPatternsChanged{} & \resModExtTokOneHalfTimeRatioMed{} \\
E3, $\kappa=2$ & \resModExtTokTwoJointRows{} & \resModExtTokTwoCheaper{} (\resModExtTokTwoMinSavingPct{}--\resModExtTokTwoMaxSavingPct{}) & \resModExtTokTwoDearer{} & \resModExtTokTwoPatternsChanged{} & \resModExtTokTwoTimeRatioMed{} \\
\bottomrule
\end{tabular*}

\vspace{2pt}\parbox{\columnwidth}{\scriptsize Joint optima in a model copy against the model, on the \resModExtDoerInstances{} RQ2 instances, without a bound and at the bounds of \cref{sec:eval-rq2}. Cheaper/Dearer: rows and the range of the change in percent. Pat.: rows whose selected patterns change. Time: median ratio of the solve time. E3: verdict tokens $r\in\{1,2\}$ per actor with $r\mu\le\kappa B$.}

\end{table}

\begin{table}[!t]
  \caption{Decoupled Flows Under the Extended Models}
  \label{tab:modelext-flows}
  \centering\footnotesize
  \setlength{\tabcolsep}{1.5pt}
\begin{tabular*}{\columnwidth}{@{\extracolsep{\fill}}l c c c c c @{}}
\toprule
Model & SL & PF$^{-}$ & PF$^{+}$ & PF$^{t}$ & PF$^{c}$ \\
\midrule
E1+E2, inf. & \resModExtBothSLHundredInfeasible{}/\resModExtBothSLHundredTenInfeasible{}/\resModExtBothSLHundredTwentyFiveInfeasible{}/\resModExtBothSLHundredFiftyInfeasible{}/\resModExtBothSLTwoHundredInfeasible{} & \resModExtBothPFMinHundredInfeasible{}/\resModExtBothPFMinHundredTenInfeasible{}/\resModExtBothPFMinHundredTwentyFiveInfeasible{}/\resModExtBothPFMinHundredFiftyInfeasible{}/\resModExtBothPFMinTwoHundredInfeasible{} & \resModExtBothPFMaxHundredInfeasible{}/\resModExtBothPFMaxHundredTenInfeasible{}/\resModExtBothPFMaxHundredTwentyFiveInfeasible{}/\resModExtBothPFMaxHundredFiftyInfeasible{}/\resModExtBothPFMaxTwoHundredInfeasible{} & \resModExtBothPFLightHundredInfeasible{}/\resModExtBothPFLightHundredTenInfeasible{}/\resModExtBothPFLightHundredTwentyFiveInfeasible{}/\resModExtBothPFLightHundredFiftyInfeasible{}/\resModExtBothPFLightTwoHundredInfeasible{} & \resModExtBothPFCheapHundredInfeasible{}/\resModExtBothPFCheapHundredTenInfeasible{}/\resModExtBothPFCheapHundredTwentyFiveInfeasible{}/\resModExtBothPFCheapHundredFiftyInfeasible{}/\resModExtBothPFCheapTwoHundredInfeasible{} \\
\phantom{x} loss (\%) & \resModExtBothSLMaxLossPct{} & \resModExtBothPFMinMaxLossPct{} & \resModExtBothPFMaxMaxLossPct{} & \resModExtBothPFLightMaxLossPct{} & \resModExtBothPFCheapMaxLossPct{} \\
E3, $\kappa=1$, inf. & \resModExtTokOneSLHundredInfeasible{}/\resModExtTokOneSLHundredTenInfeasible{}/\resModExtTokOneSLHundredTwentyFiveInfeasible{}/\resModExtTokOneSLHundredFiftyInfeasible{}/\resModExtTokOneSLTwoHundredInfeasible{} & \resModExtTokOnePFMinHundredInfeasible{}/\resModExtTokOnePFMinHundredTenInfeasible{}/\resModExtTokOnePFMinHundredTwentyFiveInfeasible{}/\resModExtTokOnePFMinHundredFiftyInfeasible{}/\resModExtTokOnePFMinTwoHundredInfeasible{} & \resModExtTokOnePFMaxHundredInfeasible{}/\resModExtTokOnePFMaxHundredTenInfeasible{}/\resModExtTokOnePFMaxHundredTwentyFiveInfeasible{}/\resModExtTokOnePFMaxHundredFiftyInfeasible{}/\resModExtTokOnePFMaxTwoHundredInfeasible{} & \resModExtTokOnePFLightHundredInfeasible{}/\resModExtTokOnePFLightHundredTenInfeasible{}/\resModExtTokOnePFLightHundredTwentyFiveInfeasible{}/\resModExtTokOnePFLightHundredFiftyInfeasible{}/\resModExtTokOnePFLightTwoHundredInfeasible{} & \resModExtTokOnePFCheapHundredInfeasible{}/\resModExtTokOnePFCheapHundredTenInfeasible{}/\resModExtTokOnePFCheapHundredTwentyFiveInfeasible{}/\resModExtTokOnePFCheapHundredFiftyInfeasible{}/\resModExtTokOnePFCheapTwoHundredInfeasible{} \\
\phantom{x} loss (\%) & \resModExtTokOneSLMaxLossPct{} & \resModExtTokOnePFMinMaxLossPct{} & \resModExtTokOnePFMaxMaxLossPct{} & \resModExtTokOnePFLightMaxLossPct{} & \resModExtTokOnePFCheapMaxLossPct{} \\
E3, $\kappa=1.5$, inf. & \resModExtTokOneHalfSLHundredInfeasible{}/\resModExtTokOneHalfSLHundredTenInfeasible{}/\resModExtTokOneHalfSLHundredTwentyFiveInfeasible{}/\resModExtTokOneHalfSLHundredFiftyInfeasible{}/\resModExtTokOneHalfSLTwoHundredInfeasible{} & \resModExtTokOneHalfPFMinHundredInfeasible{}/\resModExtTokOneHalfPFMinHundredTenInfeasible{}/\resModExtTokOneHalfPFMinHundredTwentyFiveInfeasible{}/\resModExtTokOneHalfPFMinHundredFiftyInfeasible{}/\resModExtTokOneHalfPFMinTwoHundredInfeasible{} & \resModExtTokOneHalfPFMaxHundredInfeasible{}/\resModExtTokOneHalfPFMaxHundredTenInfeasible{}/\resModExtTokOneHalfPFMaxHundredTwentyFiveInfeasible{}/\resModExtTokOneHalfPFMaxHundredFiftyInfeasible{}/\resModExtTokOneHalfPFMaxTwoHundredInfeasible{} & \resModExtTokOneHalfPFLightHundredInfeasible{}/\resModExtTokOneHalfPFLightHundredTenInfeasible{}/\resModExtTokOneHalfPFLightHundredTwentyFiveInfeasible{}/\resModExtTokOneHalfPFLightHundredFiftyInfeasible{}/\resModExtTokOneHalfPFLightTwoHundredInfeasible{} & \resModExtTokOneHalfPFCheapHundredInfeasible{}/\resModExtTokOneHalfPFCheapHundredTenInfeasible{}/\resModExtTokOneHalfPFCheapHundredTwentyFiveInfeasible{}/\resModExtTokOneHalfPFCheapHundredFiftyInfeasible{}/\resModExtTokOneHalfPFCheapTwoHundredInfeasible{} \\
\phantom{x} loss (\%) & \resModExtTokOneHalfSLMaxLossPct{} & \resModExtTokOneHalfPFMinMaxLossPct{} & \resModExtTokOneHalfPFMaxMaxLossPct{} & \resModExtTokOneHalfPFLightMaxLossPct{} & \resModExtTokOneHalfPFCheapMaxLossPct{} \\
\bottomrule
\end{tabular*}

\vspace{2pt}\parbox{\columnwidth}{\scriptsize inf.: infeasible instances at $1.00/1.10/1.25/1.50/2.00\,\mu^{*}$; max.\ loss: largest extra cost over the extended model's joint optimum.}

\end{table}

\Cref{tab:modelext} summarizes the joint optima. E1 makes \resModExtDoerCheaper{}
of \resModExtDoerJointRows{} optima cheaper, by up to
\resModExtDoerMaxSavingPct{}\,\%; E2 makes \resModExtDiverseDearer{} dearer,
by up to \resModExtDiverseMaxExtraPct{}\,\% (diverse 2-of-2 and checker),
and \resModExtDiverseCheaper{} cheaper (the NVP voter's multiplier is below
one). Together, the patterns of \resModExtBothPatternsChanged{} optima
change, at \resModExtBothTimeRatioMed{} times the median solve time. E3 with
$\kappa=1$ changes nothing, since two tokens need $\mu\le B/2$; with
$\kappa=2$, \resModExtTokTwoCheaper{} optima become cheaper, by up to
\resModExtTokTwoMaxSavingPct{}\,\%, at \resModExtTokTwoVariantTimeRatioMed{}
times the time of the fixed two-token catalog. Under E1 and E2 together,
the mixed-SIL doer/checker (on S and H+S) and the isolated checker (on H)
become the cheapest and the lightest SIL-3 choice. So PF$^{c}$ and
PF$^{t}$ reach the joint optimum on every instance and bound, while
PF$^{-}$ and PF$^{+}$ keep their infeasibilities (\cref{tab:modelext-flows}).
The cost beyond solve time is in the claims. E1 needs a goal that a lower
doer SIL is acceptable, E3 needs a process safety time per function, and
each needs the verifier's requirement and token semantics changed in step.

\subsection{RQ3: Scalability}\label{sec:eval-rq3}
We generated consistent, single-rate SDF graphs with the random generator of
SDF3~\cite{stuijk2006sdf3}, \resSetupScalReps{} per size from
\resSetupScalMinActors{} to \resSetupScalMaxActors{} actors, with WCETs for
\resSetupScalTypes{} core types. The WCETs are the generator's own
execution times per processor type; a type it leaves out for an actor
forbids that binding. The generator has no seed option, so the generated graphs are
stored in the artifact; reproducibility comes from them. Each instance has
\resSetupScalSlots{} core slots in FCRs of one or two cores (each type
spans at least two FCRs, so \code{DIFFERENT\_FCR} can hold), fault model H,
and a quarter of its actors each at SIL~3 and SIL~2. A throughput solve of
at most \resSetupScalMuTimeLimit{}\,s gives the best period $\mu^{*}$ it
finds as the period bound; the cost solve then has
\resSetupScalTimeLimit{}\,s. A second sweep varies the catalog size at
\resSetupScalCatActors{} actors, with prefixes of one fixed pattern order,
so the catalogs are nested.

\Cref{fig:scalability,tab:scal} show where exact search stops. All graphs
with 8 and 16 actors are proven optimal (medians \resScalEightTime{}\,s and
\resScalSixteenTime{}\,s). With 24 actors, \resScalTwentyFourProven{} of
\resSetupScalReps{} are proven (median \resScalTwentyFourTime{}\,s) and
\resScalTwentyFourTimeout{} reach the time limit. With 32 actors,
\resScalThirtyTwoProven{} is proven, \resScalThirtyTwoTimeout{} reach the
limit, and for \resScalThirtyTwoNoMu{} the throughput solve finds no period
in time, so there is no cost solve; with 48 and 64 actors, this holds for
all graphs. $\mu^{*}$ is proven for \resScalSixteenMuProven{},
\resScalTwentyFourMuProven{}, and \resScalThirtyTwoMuProven{} of the graphs
with 16, 24, and 32 actors. Only \resScalGapCount{} timed-out cost solve has
a verified incumbent (gap \resScalGapMax\,\%); the two 32-actor timeouts
fail the verifier by period (\cref{sec:model-throughput}), so their gaps
are not reported.

Model size limits as much as search (\cref{fig:flat}). Flattening the
throughput model takes \resFlatTwentyFourTime{}\,s at 24 actors,
\resFlatFortyEightTime{}\,s at 48, and \resFlatSixtyFourTime{}\,s at 64.
There, it writes \resFlatSixtyFourMB{}\,MB of FlatZinc with
\resFlatSixtyFourVars{}k variables, the compiler needs up to
\resFlatSixtyFourRssMB{}\,MB, and flattening alone uses almost half of the
throughput budget. The instances of \cref{tab:results} flatten
in at most \resFlatTableTimeMax{}\,s to \resFlatTableVarsMin{}k to
\resFlatTableVarsMax{}k variables.

The catalog size matters less. With only the pattern \code{none}, the
front-end rejects all \resScalCatOneBuildFail{} instances, as no SIL-3 actor
can be served. With 3, 5, and 9 patterns, \resScalCatThreeProven{},
\resScalCatFiveProven{}, and \resScalCatNineProven{} of
\resSetupScalReps{} instances are proven, with medians of
\resScalCatThreeTime{}, \resScalCatFiveTime{}, and
\resScalCatNineTime{}\,s, and the FlatZinc grows from
\resScalCatThreeFlatVars{} to \resScalCatNineFlatVars{} variables.

\begin{figure}[!t]
  \centering
  \includegraphics{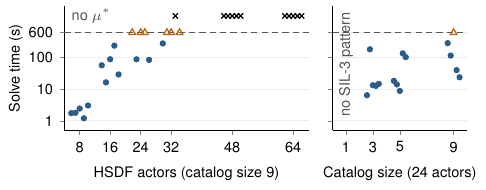}
  \caption{Solve time on synthetic instances, one mark per graph (counts in
  \cref{tab:scal}). Filled: proven optimal; open: time limit reached;
  cross: no period bound $\mu^{*}$ found, so no cost solve. Beyond 16
  actors, most searches reach a limit.}
  \label{fig:scalability}
\end{figure}

\begin{figure}[!t]
  \centering
  \includegraphics{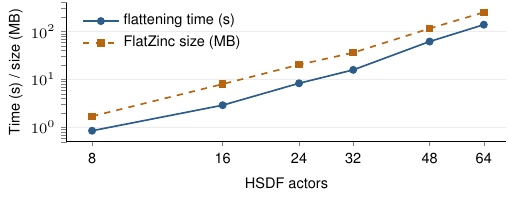}
  \caption{Cost of flattening the throughput model of the synthetic graphs
  (median per actor count, logarithmic axes): compile time and FlatZinc
  size both grow steeply with the number of HSDF actors.}
  \label{fig:flat}
\end{figure}

\begin{table*}[!t]
  \caption{RQ3 Actor Sweep: Outcome, Solve Time, and Model Size}
  \label{tab:scal}
  \centering\footnotesize
  \setlength{\tabcolsep}{2.5pt}
\begin{tabular*}{\textwidth}{@{\extracolsep{\fill}}r c c c c c c r r r r@{}}
\toprule
Actors & Graphs & $\mu^{*}$ found & $\mu^{*}$ proven & Proven & Time limit & No $\mu^{*}$ & Median (s) & Flatten (s) & FlatZinc (MB) & Memory (MB) \\
\midrule
8 & \resScalEightReps{} & \resScalEightMuFound{} & \resScalEightMuProven{} & \resScalEightProven{} & \resScalEightTimeout{} & \resScalEightNoMu{} & \resScalEightTime{} & \resFlatEightTime{} & \resFlatEightMB{} & \resFlatEightRssMB{} \\
16 & \resScalSixteenReps{} & \resScalSixteenMuFound{} & \resScalSixteenMuProven{} & \resScalSixteenProven{} & \resScalSixteenTimeout{} & \resScalSixteenNoMu{} & \resScalSixteenTime{} & \resFlatSixteenTime{} & \resFlatSixteenMB{} & \resFlatSixteenRssMB{} \\
24 & \resScalTwentyFourReps{} & \resScalTwentyFourMuFound{} & \resScalTwentyFourMuProven{} & \resScalTwentyFourProven{} & \resScalTwentyFourTimeout{} & \resScalTwentyFourNoMu{} & \resScalTwentyFourTime{} & \resFlatTwentyFourTime{} & \resFlatTwentyFourMB{} & \resFlatTwentyFourRssMB{} \\
32 & \resScalThirtyTwoReps{} & \resScalThirtyTwoMuFound{} & \resScalThirtyTwoMuProven{} & \resScalThirtyTwoProven{} & \resScalThirtyTwoTimeout{} & \resScalThirtyTwoNoMu{} & \resScalThirtyTwoTime{} & \resFlatThirtyTwoTime{} & \resFlatThirtyTwoMB{} & \resFlatThirtyTwoRssMB{} \\
48 & -- & \resScalFortyEightMuFound{} & \resScalFortyEightMuProven{} & -- & -- & \resScalFortyEightNoMu{} & -- & \resFlatFortyEightTime{} & \resFlatFortyEightMB{} & \resFlatFortyEightRssMB{} \\
64 & -- & \resScalSixtyFourMuFound{} & \resScalSixtyFourMuProven{} & -- & -- & \resScalSixtyFourNoMu{} & -- & \resFlatSixtyFourTime{} & \resFlatSixtyFourMB{} & \resFlatSixtyFourRssMB{} \\
\bottomrule
\end{tabular*}

\vspace{2pt}\parbox{\textwidth}{\scriptsize Actor sweep, full catalog. $\mu^{*}$ found/proven: the throughput solve (\resSetupScalMuTimeLimit{}\,s) found or proved a period. Proven / Time limit: cost solve (\resSetupScalTimeLimit{}\,s) proved optimal / reached the limit. No $\mu^{*}$: no period found, hence no cost solve. Median: solve time of the proven graphs. Flatten, FlatZinc: median compile time and FlatZinc size of the throughput model; Memory: largest peak memory of the compiler. Graphs: graphs that built (--: all).}

\end{table*}

\subsection{RQ4: Sensitivity and Generated Arguments}\label{sec:eval-rq4}
\emph{Cost profile and hardware price.} For \cref{fig:sensitivity}, we
scaled all hardware prices by \resPriceMinFactor{} to \resPriceMaxFactor{}
and solved the running example without patterns, unbounded and with at
most three, two, and one cores (\cref{tab:price}). With Klosterman's flat
profile, one core is optimal from factor
\resPriceKlostermanConsolidateFactor{}
(\resPriceKlostermanLowConsolidateFactor{} with his lower SIL-3 row). With
the steep profiles of Myklebust et al.\ and DO-178B, the optimum keeps
\resPriceMyklebustSixteenNprocs{} cores, while the extra cost of one core
falls from \resPriceMyklebustOneOneExtra{} to
\resPriceMyklebustSixteenOneExtra{}\,\% and from
\resPriceDoOneSeventyEightBOneOneExtra{} to
\resPriceDoOneSeventyEightBSixteenOneExtra{}\,\%. The promotion cost of the
one-core design depends on the profile, not on the factor
(\resPriceMyklebustOneOnePromotion, \resPriceKlostermanOneOnePromotion,
\resPriceKlostermanLowOneOnePromotion, and
\resPriceDoOneSeventyEightBOneOnePromotion{} units).
\Cref{fig:profcons} shows the same trade-off at factor four over the core
bound: the profile sets the price of consolidation, not only the cost
level. Before the Klosterman profile took the higher of the two SIL-3 rows,
its optimum at factor four used one core; with the higher row it uses
\resPriceKlostermanFourNprocs{}, and with the lower row
\resPriceKlostermanLowFourNprocs{}. A flip of the architecture between
profiles therefore rests on one row of a published table.
With the original prices and the default profile, the promotion cost of the
optimum with at most three, two, and one cores is
\resConsolidationThreePromotion, \resConsolidationTwoPromotion, and
\resConsolidationOnePromotion{} units; the price of each further step
grows steeply.

\begin{figure}[!t]
  \centering
  \includegraphics{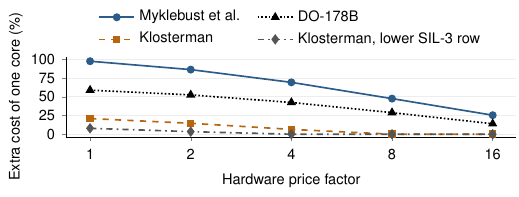}
  \caption{Extra cost of the cheapest one-core design of the running
  example without patterns over its optimum, per cost profile and hardware
  price factor. Only the flat profiles reach zero (one core optimal).}
  \label{fig:sensitivity}
\end{figure}

\begin{figure}[!t]
  \centering
  \includegraphics{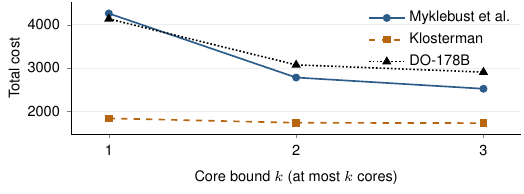}
  \caption{Total cost of the optimum with at most $k$ cores, running example
  without patterns at four times the hardware prices. The steep profiles
  pay heavily for the third consolidation step; Klosterman's profile hardly
  at all.}
  \label{fig:profcons}
\end{figure}

\begin{table}[!t]
  \caption{Hardware Price Sweep per Cost Profile (Total Cost)}
  \label{tab:price}
  \centering\footnotesize
  \setlength{\tabcolsep}{3pt}
\begin{tabular*}{\columnwidth}{@{\extracolsep{\fill}}l r r r r r r@{}}
\toprule
Profile & $\times$ & Opt.\ (cores) & $k\le3$ & $k\le2$ & $k\le1$ & Pro.\ \\
\midrule
Myklebust et al. & 1 & 1914 (3) & 1914 & 2199 & 3786 & 1913 \\
 & 2 & 2115 (3) & 2115 & 2392 & 3946 & 1913 \\
 & 4 & 2517 (3) & 2517 & 2778 & 4266 & 1913 \\
 & 8 & 3321 (3) & 3321 & 3550 & 4906 & 1913 \\
 & 16 & 4929 (3) & 4929 & 5094 & 6186 & 1913 \\
\addlinespace[2pt]
Klosterman & 1 & 1116 (3) & 1116 & 1148 & 1350 & 275 \\
 & 2 & 1317 (3) & 1317 & 1341 & 1510 & 275 \\
 & 4 & 1719 (3) & 1719 & 1727 & 1830 & 275 \\
 & 8 & 2470 (1) & 2470 & 2470 & 2470 & 275 \\
 & 16 & 3750 (1) & 3750 & 3750 & 3750 & 275 \\
\addlinespace[2pt]
Klosterman (lower row) & 1 & 1056 (3) & 1056 & 1065 & 1140 & 125 \\
 & 2 & 1257 (3) & 1257 & 1258 & 1300 & 125 \\
 & 4 & 1620 (1) & 1620 & 1620 & 1620 & 125 \\
 & 8 & 2260 (1) & 2260 & 2260 & 2260 & 125 \\
 & 16 & 3540 (1) & 3540 & 3540 & 3540 & 125 \\
\addlinespace[2pt]
DO-178B & 1 & 2301 (3) & 2301 & 2493 & 3660 & 1400 \\
 & 2 & 2502 (3) & 2502 & 2686 & 3820 & 1400 \\
 & 4 & 2904 (3) & 2904 & 3072 & 4140 & 1400 \\
 & 8 & 3708 (3) & 3708 & 3844 & 4780 & 1400 \\
 & 16 & 5316 (3) & 5316 & 5388 & 6060 & 1400 \\
\bottomrule
\end{tabular*}

\vspace{2pt}\parbox{\columnwidth}{\scriptsize RASTA without patterns on \code{mixed\_noiso.xml}, all hardware prices $\times$ the factor. Opt.: the unbounded optimum of $C_{\mathrm{tot}}$ and its cores; $k\le n$: the optimum with at most $n$ cores; Pro.: $C_{\mathrm{pro}}$ of the one-core optimum. All proven and verified.}

\end{table}

\emph{Fault model.} The fault model changes placement, not structure. With
random hardware faults, the optimum of the SIL-2 variant uses
\resPlacementRandomHwNprocs{} cores, since each checker must run on another
core; with systematic faults, a checker may share its doer's core, and the
optimum needs \resPlacementSystematicSwNprocs{}. The NVP case study is
optimal in \resNvpTime{}\,s on \resNvpNprocs{} cores.

\emph{Communication.} On the two instances with the bus, an ideal bus
changes neither cost nor $\mu^{*}$ (\cref{tab:comm}), so TDMA decides
nothing there. The communication assumptions matter more. Without the
black-channel exemption, the cost rises by \resCommCPatSilInheritCostPct{}
and \resCommPcSobelSilInheritCostPct{}\,\%, while subjecting transfers to
their core's rule changes nothing. With cross-checks on the bus, $\mu^{*}$
rises by \resCommCPatScopeAllMuPct{} and \resCommPcSobelScopeAllMuPct{}\,\%
and the superposed graph grows several times. So the exemption decides
cost, and the dedicated-link assumption decides throughput. The minimum
period of Sobel with a SIL-3 pattern and TDMA, \resCommSobelMu, stays
within the published bound of \resCommSobelBound.

\begin{table}[!t]
  \caption{Communication Variants}
  \label{tab:comm}
  \centering\footnotesize
  \setlength{\tabcolsep}{3pt}
\begin{tabular*}{\columnwidth}{@{\extracolsep{\fill}}l rrr rrr@{}}
\toprule
 & \multicolumn{3}{c}{\code{c\_pat}} & \multicolumn{3}{c}{\code{pc\_sobel}} \\
\cmidrule(lr){2-4}\cmidrule(l){5-7}
Variant & $|\Gplus|$ & $C_{\mathrm{tot}}$ & $\mu^{*}$ & $|\Gplus|$ & $C_{\mathrm{tot}}$ & $\mu^{*}$ \\
\midrule
TDMA bus (default) & \resCommCPatTdmaNodes{} & \resCommCPatTdmaCost{} & \resCommCPatTdmaMu{} & \resCommPcSobelTdmaNodes{} & \resCommPcSobelTdmaCost{} & \resCommPcSobelTdmaMu{} \\
ideal bus & \resCommCPatIdealNodes{} & \resCommCPatIdealCost{} & \resCommCPatIdealMu{} & \resCommPcSobelIdealNodes{} & \resCommPcSobelIdealCost{} & \resCommPcSobelIdealMu{} \\
\code{--comm-sil inherit} & \resCommCPatSilInheritNodes{} & \resCommCPatSilInheritCost{} & \resCommCPatSilInheritMu{} & \resCommPcSobelSilInheritNodes{} & \resCommPcSobelSilInheritCost{} & \resCommPcSobelSilInheritMu{} \\
\code{--comm-sil core} & \resCommCPatSilCoreNodes{} & \resCommCPatSilCoreCost{} & \resCommCPatSilCoreMu{} & \resCommPcSobelSilCoreNodes{} & \resCommPcSobelSilCoreCost{} & \resCommPcSobelSilCoreMu{} \\
\code{--comm-scope all} & \resCommCPatScopeAllNodes{} & \resCommCPatScopeAllCost{} & \resCommCPatScopeAllMu{} & \resCommPcSobelScopeAllNodes{} & \resCommPcSobelScopeAllCost{} & \resCommPcSobelScopeAllMu{} \\
\bottomrule
\end{tabular*}

\vspace{2pt}\parbox{\columnwidth}{\scriptsize $C_{\mathrm{tot}}$: proven minimum of the total cost; $\mu^{*}$: minimum period (separate THROUGHPUT solve). \code{inherit}: communication actors take the sender's SIL; \code{core}: they fall under the SIL-mixing rule of their core; \code{scope all}: pattern cross-checks are refined for the bus too.}

\end{table}

\emph{Generated arguments.} In the argument for random hardware faults,
\resGsnRandomHwSolutions{} of \resGsnRandomHwLeaves{} leaves
(\resGsnRandomHwSolutionShare{}\,\%) are Solutions that cite the check log;
under both fault classes, \resGsnBothSolutions{} of \resGsnBothLeaves{}
(\resGsnBothSolutionShare{}\,\%), and for NVP, \resGsnNvpSolutions{} of
\resGsnNvpLeaves{} (\resGsnNvpSolutionShare{}\,\%) (\cref{tab:gsn}). The
others are method, common-cause, fault-reaction, process, and partitioning
goals. The share measures what the deployment discharges, not adequacy. No
argument has a scenario out of scope: none of the three optima selects a
pattern with a scenario outside its fault model, which an earlier tie
between equal-cost patterns had hidden.

\begin{table}[!t]
  \caption{Statistics of the Generated Arguments}
  \label{tab:gsn}
  \centering\footnotesize
  \setlength{\tabcolsep}{3pt}
\begin{tabular*}{\columnwidth}{@{\extracolsep{\fill}}l ccc@{}}
\toprule
 & H & H+S & NVP (H+S) \\
\midrule
elements & \resGsnRandomHwElements{} & \resGsnBothElements{} & \resGsnNvpElements{} \\
goals & \resGsnRandomHwGoals{} & \resGsnBothGoals{} & \resGsnNvpGoals{} \\
undeveloped goals & \resGsnRandomHwUndeveloped{} & \resGsnBothUndeveloped{} & \resGsnNvpUndeveloped{} \\
Solutions & \resGsnRandomHwSolutions{} & \resGsnBothSolutions{} & \resGsnNvpSolutions{} \\
leaves & \resGsnRandomHwLeaves{} & \resGsnBothLeaves{} & \resGsnNvpLeaves{} \\
Solutions among leaves (\%) & \resGsnRandomHwSolutionShare{} & \resGsnBothSolutionShare{} & \resGsnNvpSolutionShare{} \\
records in the check log & \resGsnRandomHwChecks{} & \resGsnBothChecks{} & \resGsnNvpChecks{} \\
\code{DIVERSE} records & \resGsnRandomHwDiverse{} & \resGsnBothDiverse{} & \resGsnNvpDiverse{} \\
scenarios out of scope & \resGsnRandomHwOutOfScope{} & \resGsnBothOutOfScope{} & \resGsnNvpOutOfScope{} \\
\bottomrule
\end{tabular*}

\vspace{2pt}\parbox{\columnwidth}{\scriptsize Arguments generated for the verified TOTALCOST optima of \code{f\_random\_hw} (H), \code{f\_both3} (H+S), and \code{v\_nvp}.}

\end{table}

\emph{Other solvers.} We ran the Gecode, Chuffed, and HiGHS back ends
bundled with \mzn{}~\resSetupMiniZinc{}, with one thread and
\resSetupTimeLimit{}\,s (\cref{tab:xsolver}). Gecode and HiGHS proved only
\resXsolverGecodeProven{} and \resXsolverHighsProven{} of the
\resXsolverInstances{} instances, and HiGHS found verified incumbents on
\resXsolverHighsSolved{} in all. Chuffed stopped with an error on all of
them, as it cannot handle a product of variables whose sign is not fixed.
\cpsat{} with one worker proved \resXsolverCpSatOneProven{}, each equal to
the four-worker optimum. So three optima are confirmed by a second solver,
and the verifier is the main independent check.

\begin{table}[!t]
  \caption{Cross-Solver Runs on the Instances of \cref{tab:results}}
  \label{tab:xsolver}
  \centering\footnotesize
  \setlength{\tabcolsep}{3pt}
\begin{tabular*}{\columnwidth}{@{\extracolsep{\fill}}l r c c c @{}}
\toprule
Instance & \cpsat{} (\resSetupThreads{}) & Gecode & HiGHS & Chuffed \\
\midrule
\code{p\_rasta} & 3330 & -- & -- & err. \\
\code{f\_random\_hw} & 3349 & -- & -- & err. \\
\code{d\_random\_hw} & 1857 & -- & 2585 & err. \\
\code{d\_systematic\_sw} & 1846 & -- & \cmark & err. \\
\code{f\_sw3} & 3365 & -- & 5470 & err. \\
\code{f\_both3} & 3406 & -- & 4349 & err. \\
\code{c\_pat} & 3349 & -- & -- & err. \\
\code{pc\_sobel} & 1401 & -- & \cmark & err. \\
\code{v\_nvp} & 5680 & \cmark & 6330 & err. \\
\code{m\_3app} & 5665 & -- & -- & err. \\
\bottomrule
\end{tabular*}

\vspace{2pt}\parbox{\columnwidth}{\scriptsize Second column: the proven optimum of $C_{\mathrm{tot}}$ with \resSetupThreads{} workers. Other columns: one thread, \resSetupTimeLimit{}\,s. \cmark: proven optimal and equal; a number: best verified incumbent at the limit; --: no solution; err.: the solver stopped with an error.}

\end{table}

\emph{Mutation analysis.} We corrupted \resMutSolMutants{} verified optima
on \resMutSolInstances{} instances with \resMutSolOperators{} operators
(\cref{tab:mut-ops}). Each mutant breaks one requirement. The fields that
follow from the change (WCETs, core SILs, FCR use, costs) are repaired, and
every period is set to the ceiling of the mutated deployment's MCR, so the
period check cannot catch a mutant by accident. The exact variant~B of the
model serves as the oracle; it rejects all of them. We also solved the
same instances with \resMutModelMutants{} model mutants, each without one
constraint family (\resMutModelRuns{} runs, \cref{tab:mut-model}); \resMutModelManifested{}
solutions violate the original model, in
\resMutModelFamiliesManifested{} families; the other families never changed
a solution. The model alone is not a sound oracle: it rejects
\resMutModelModelOnly{} correct deployments through the gap of
\cref{sec:model-throughput}.

The current verifier rejected all \resMutSolRejectedCur{} corrupted
solutions and all \resMutModelRejectedCur{} violating mutant solutions,
without a false alarm, and its intended check fired for every corrupted
solution; it also accepts all \resMutRecheckSolutions{} verified solutions
that the experiments stored. The first verifier, which took WCETs, static
orders, activation, and costs from the solution, rejected only
\resMutSolRejectedArt{} and \resMutModelRejectedArt{}. It missed every
mutant that changes only a WCET, the order structure, activation, the
admissible pattern, partitioning capability, or a cost. The generator then
emitted an argument for each of these \resMutGsnEmittedFaultyArt{}
corrupted solutions. Where it rejected a mutant of these families, the
period or deadlock check fired, not a check of the broken constraint.

The generator refused every rejected solution, every catalog mismatch
(\resMutGenCatalogRefused{} of \resMutGenCatalogCases), and every
contradicting fault model (\resMutGenFaultModelRefused{} of
\resMutGenFaultModelCases). On the NVP instance, the inference is ambiguous,
so a given fault model has nothing to contradict
(\resMutGenFaultModelAmbiguous{} cases, not counted). It emitted an argument
for all \resMutGenControls{} uncorrupted solutions. Of \resMutArgMutants{}
mutations of generated arguments in \resMutArgOperators{} kinds, the first
audit caught \resMutArgCaughtAudit; it missed all \resMutArgBadIndex{}
Solutions that cite a record index past the end of the log. The current
audit, which requires every cited record to exist and to have passed,
catches \resMutArgCaughtCur.

\begin{table}[!t]
  \caption{Corrupted Solutions per Operator: Rejected by the Verifier}
  \label{tab:mut-ops}
  \centering\footnotesize
  \setlength{\tabcolsep}{3pt}
\begin{tabular*}{\columnwidth}{@{\extracolsep{\fill}}>{\raggedright\arraybackslash}p{1.75in} c c c@{}}
\toprule
Mutation (what it breaks) & Mutants & First & Current \\
\midrule
a placement relation of the selected pattern & \resMutOpPlaceManifestedArt{} & \resMutOpPlaceRejectedArt{} & \resMutOpPlaceRejectedCur{} \\
a node's SIL below its requirement & \resMutOpSilNodeManifestedArt{} & \resMutOpSilNodeRejectedArt{} & \resMutOpSilNodeRejectedCur{} \\
a core's provisioned SIL below a node's & \resMutOpSilCoreManifestedArt{} & \resMutOpSilCoreRejectedArt{} & \resMutOpSilCoreRejectedCur{} \\
a core SIL above its type's ceiling & \resMutOpCeilingManifestedArt{} & \resMutOpCeilingRejectedArt{} & \resMutOpCeilingRejectedCur{} \\
two SILs on an unpartitioned core & \resMutOpMixManifestedArt{} & \resMutOpMixRejectedArt{} & \resMutOpMixRejectedCur{} \\
a reported period below the MCR & \resMutOpPeriodManifestedArt{} & \resMutOpPeriodRejectedArt{} & \resMutOpPeriodRejectedCur{} \\
partitioning on a type without it & \resMutOpPartForgeManifestedArt{} & \resMutOpPartForgeRejectedArt{} & \resMutOpPartForgeRejectedCur{} \\
a WCET that differs from the table & \resMutOpWcetManifestedArt{} & \resMutOpWcetRejectedArt{} & \resMutOpWcetRejectedCur{} \\
a static order across cores & \resMutOpOrderCrossManifestedArt{} & \resMutOpOrderCrossRejectedArt{} & \resMutOpOrderCrossRejectedCur{} \\
two static orders on one core & \resMutOpOrderSplitManifestedArt{} & \resMutOpOrderSplitRejectedArt{} & \resMutOpOrderSplitRejectedCur{} \\
a cyclic static order & \resMutOpOrderCycleManifestedArt{} & \resMutOpOrderCycleRejectedArt{} & \resMutOpOrderCycleRejectedCur{} \\
a component of the selected pattern inactive & \resMutOpDeactivateManifestedArt{} & \resMutOpDeactivateRejectedArt{} & \resMutOpDeactivateRejectedCur{} \\
an inadmissible pattern & \resMutOpPatternNoneManifestedArt{} & \resMutOpPatternNoneRejectedArt{} & \resMutOpPatternNoneRejectedCur{} \\
a required field missing & \resMutOpDropFieldManifestedArt{} & \resMutOpDropFieldRejectedArt{} & \resMutOpDropFieldRejectedCur{} \\
a misreported cost & \resMutOpCostManifestedArt{} & \resMutOpCostRejectedArt{} & \resMutOpCostRejectedCur{} \\
\midrule
all & \resMutSolManifested{} & \resMutSolRejectedArt{} & \resMutSolRejectedCur{} \\
\bottomrule
\end{tabular*}

\vspace{2pt}\parbox{\columnwidth}{\scriptsize Corrupted copies of verified optima on \resMutSolInstances{} instances; every mutant violates the model (exact variant B as the oracle). First / Current: mutants rejected by the first verifier and by the current one.}

\end{table}

\begin{table}[!t]
  \caption{Model Mutants per Constraint Family}
  \label{tab:mut-model}
  \centering\footnotesize
  \setlength{\tabcolsep}{3pt}
\begin{tabular*}{\columnwidth}{@{\extracolsep{\fill}}l c c c c@{}}
\toprule
Constraint family removed & Runs & Viol. & First & Current \\
\midrule
placement relations, \cref{eq:placement} & 10 & 9 & 9 & 9 \\
node SIL $\ge$ requirement, \cref{eq:sil} & 10 & 10 & 10 & 10 \\
node SIL $\le$ core SIL, \cref{eq:sil} & 10 & 0 & 0 & 0 \\
core SIL $\le$ type ceiling, \cref{eq:sil} & 10 & 7 & 7 & 7 \\
SIL-mixing rule, \cref{eq:rule} & 10 & 10 & 10 & 10 \\
partitioning capability, \cref{eq:sil} & 10 & 9 & 0 & 9 \\
WCET follows binding, \cref{eq:wcet} & 10 & 10 & 8 & 10 \\
channel edges of \cref{eq:pot} & 10 & 7 & 7 & 7 \\
static-order edges of \cref{eq:pot} & 10 & 9 & 9 & 9 \\
wrap edges of \cref{eq:pot} & 10 & 6 & 6 & 6 \\
implied load bound, \cref{eq:load} & 10 & 0 & 0 & 0 \\
activation, \cref{eq:activation} & 10 & 10 & 7 & 10 \\
admissible pattern & 10 & 10 & 1 & 10 \\
one static order per core, \cref{eq:order} & 10 & 0 & 0 & 0 \\
order within one core, \cref{eq:order} & 10 & 9 & 3 & 9 \\
memory bound, \cref{eq:memory} & 10 & 0 & 0 & 0 \\
remote-channel path, \cref{eq:commpath} & 2 & 1 & 1 & 1 \\
period coupling, \cref{eq:coupling} & 1 & 1 & 1 & 1 \\
\midrule
all & \resMutModelRuns{} & \resMutModelManifested{} & \resMutModelRejectedArt{} & \resMutModelRejectedCur{} \\
\bottomrule
\end{tabular*}

\vspace{2pt}\parbox{\columnwidth}{\scriptsize One copy of the model per family, the family disabled; optima of the \resMutSolInstances{} instances (TOTALCOST, 120\,s). Viol.: solutions that violate the original model (exact variant B as the oracle). First / Current: violating solutions rejected by the first and the current verifier.}

\end{table}

\FloatBarrier
\section{Lessons and Negative Results}\label{sec:lessons}

\subsection{Failure Modes During Development}\label{sec:lessons-dev}
The recurring lesson of building \safedse{} is that constraint models and
their tooling fail silently. Almost every failure below produced a
plausible, well-formed output rather than an error, and was caught only by
an independent check or by reading the output instead of the exit code.
This is the reason for the verifier of \cref{sec:verification}.

\emph{Silent under-constraint.} In \mzn{}, \code{=} binds tighter than
\code{\textbackslash/}, so an assignment of a disjunction without
parentheses leaves the variable free whenever the second disjunct holds.
A disjunction placed outside a quantifier instead of inside it once
collapsed the safety design space of all actors to two points: either every
actor took one option or every actor took the other. The solutions looked
reasonable and came from a trivially small space; per-actor pattern
variables with per-actor guards, the root of guarded superposition, came
out of this failure. A static order encoded as an open chain, without the
wrap edge, silently dropped the bound ``period at least the core's load''
and reported a graph with many actors on two cores at full pipeline rate. A
direct edge of a remote channel kept next to its communication path was a
shortcut past the whole transfer delay, which made the bus free. The same
shape reappeared with explicit voters, where a surviving direct edge let
the output bypass the voter (\cref{sec:model-superposition,sec:model-comm}).

\emph{Wrong modeling intuitions.} Communication actors were first placed in
the static order of their core and counted in its load, which gave wrong
periods for the communication benchmarks. A transfer overlaps computation
and only lengthens the period when the bus is the bottleneck. Three
optimization intuitions did not survive measurement. Symmetry breaking was
not the bottleneck, a flattened encoding of the WCET lookup was slower than
the nested one, and tying the activity of communication actors to the
remote flag made solving slower. The standing rule became to measure first,
and to build the deliberately unsound variant first to bound what an
optimization can gain. Pre-unfolded HSDF benchmarks also lose the
provenance of each node's SDF actor, which the copy-order symmetry needs, so
multi-rate SDF input is preferred.

\emph{Wrong results from tooling.} A killed installation left a \mzn{}
binary that existed and crashed, and every solve returned no solution; it
looked like an unsatisfiable model. Instances renamed or dropped from the
build script survived in the output directory and were solved against a
newer model. A cross-solver sweep printed nothing for minutes because it
skipped timeouts silently, which hid that on the pattern instances no
second solver answered at all (\cref{sec:eval-rq4}).

\emph{Fluent but false output.} The worst failures were arguments that
passed every structural check and were false. One was an argument for an actor with
three-version programming that argued it ``by development process alone''
because the catalog lacked the pattern. The other was the possibility of a
wrong fault-model flag yielding claims of coverage the design never had.
Both are
now refusals (\cref{sec:ver-gsn}). Pre-checks were also weaker than they
looked. A diversity check that counted all core types passed on a platform
whose third type was capped below the required SIL. Its first fix fired
too eagerly and rejected an instance whose other patterns were placeable.
A satisfiability pre-check that does not fire is not evidence of
satisfiability, and one that fires too eagerly is worse than none.

\subsection{Defects Found by the Evaluation}\label{sec:lessons-eval}
Running the experiments of this report on the tool found further defects
(\cref{tab:defects}). Each was found by a disagreement between two
independent computations: model and verifier, two oracles, two runs, or a
result and its expected value. Each defect was fixed in the tool, and every
affected experiment was re-run; the results in this report are from the
fixed tool. Some of them changed results that a first
draft of the evaluation had reported. Before the slot-sharing fix, a
checker that shared a slot with a full replica was priced as a replica,
which raised the minimum period of the RASTA pattern instances. Before the
no-promotion fix, the NP flow activated every pattern slot and appeared to
lose heavily against joint exploration. Before the fix of an implied
constraint, tight period bounds made feasible instances infeasible. And an
argument statistic depended on a tie between equal-cost patterns. The last
case is the general lesson of the evaluation. A quantity of a non-unique
optimum (its period, its pattern, an argument derived from it) is not a
property of the instance, so the evaluation reports only proven objective
values, or fixes the tie explicitly.

\begin{table*}[!t]
  \caption{Defects Found While Running the Evaluation, and Their Fixes}
  \label{tab:defects}
  \centering\footnotesize
  \setlength{\tabcolsep}{3pt}
  \begin{tabular*}{\textwidth}{@{\extracolsep{\fill}}l
    >{\raggedright\arraybackslash}p{2.2in}
    >{\raggedright\arraybackslash}p{1.5in}
    >{\raggedright\arraybackslash}p{1.95in}@{}}
    \toprule
    Part & Defect & Found by & Resolution \\
    \midrule
    Model & Implied constraint on heavy node pairs ignored activity, so an
      owner could not share a core with its own inactive slot; the
      front-end's least core count also counted slots &
      tight bounds made instances with a verified design infeasible &
      guarded by the activity of both nodes; core count from always-active
      nodes \\
    Model & Recurring pattern cost counted once per slot of an actor &
      reading the model for the paper & summed over application actors only \\
    Model & No-promotion rule also bound inactive slots, which then had to
      be active & every NP solution activated all slots & rule guarded by
      activity; NP re-run, now equal to joint \\
    Model & A path through an inactive slot can raise a design's least
      period above its MCR & verifier rejected two time-limited incumbents
      by period & documented; exact variants kept as patches and used as a
      cross-check (\cref{sec:model-throughput}) \\
    Front-end & Slots shared by position between patterns, so a checker
      took a replica's WCET and the reverse & a slot whose role and WCET
      type differed & slots shared only by equal WCET type; all RQs re-run \\
    Front-end & Owner type of a replica not found for a multi-rate owner &
      building multi-rate synthetic graphs & lookup in the owner's
      application, ignoring the copy suffix; regression test \\
    Front-end & \code{sil\_offset} read as raising a component above its
      actor's SIL, against the sources & a review of the safety semantics &
      every component at the actor's SIL; the offset may only lower the doer
      (unused by the model) \\
    Verifier & A zero-weight token-free cycle through inactive slots
      reported as deadlock & solver and verifier disagreed on an optimum &
      checks the deployed MSAG, inactive nodes removed, orders spliced \\
    Verifier & Local refined channel lost its direct edge; wrong tokens on a
      remote path & reading the verifier against the model & edges as in
      the model \\
    Verifier & Required the integer period to equal a fractional MCR &
      a correct design failed verification & compares with
      $\lceil\mathrm{MCR}\rceil$ \\
    Verifier & Took WCETs, orders, activation, and costs from the solution &
      mutation analysis (\cref{sec:eval-rq4}) & re-derives every field; six
      new record kinds \\
    Generator & Named the first oracle Karp, though it is Lawler's search &
      reading the oracle & texts corrected \\
    Generator & Audit accepted an undeveloped goal without a reason, and a
      citation past the end of the log & reading the audit; mutation
      analysis & both rejected \\
    Generator & Claimed an attained SIL; cited \code{DIFFERENT} records for
      separate FCRs & a review of the safety semantics & one claim per
      relation, common-cause goals, catalog preconditions instead of SIL \\
    Catalog & Klosterman's SIL~3 took the lower of two table rows; the
      NVP multiplier key matches no role & reading the source; the
      extension study & higher row (lower one as a variant); key harmless,
      as its default equals the stated value \\
    Harness & SL depended on which step-1 optimum the solver returned &
      two runs of the same SL row differed & canonical step 1 (ties by
      period, then core index) \\
    Harness & More than one thread started eight workers; without
      intermediate solutions, a time-limited run reported no incumbent &
      process monitor; a run with no solution at the time limit & worker
      count passed explicitly; intermediate solutions always on \\
    \bottomrule
  \end{tabular*}

  \vspace{2pt}\parbox{\textwidth}{\scriptsize Each fix is a commit of the tool's
  history with a regression test where one applies; the affected
  experiments were re-run afterwards (\cref{app:artifact}).}
\end{table*}

\FloatBarrier
\section{Discussion and Limitations}\label{sec:discussion}
The results hold within the model and under its assumptions. Timing
assumes sound WCETs and self-timed execution in one static order per core,
without preemption or time-triggered tables, so applications that share a
core need a common period. Replicas and checkers always fire, so a detected
fault changes token values, not the schedule. The latency constraint is a
periodic-phase estimate, not a worst-case bound; the verifier simulates the
transient and flags a larger worst case. The fault model leaves out
transient, timing, omission, and interconnect faults, common causes across
FCRs, latent faults, and degraded modes. \safedse{} computes no failure
probabilities, and \code{DIVERSE} means distinct core types, not an
independent design.

The model develops the doer at the actor's SIL, although the sources allow
a lower one~\cite{koopman2016challenges}, and does not price diverse
development. In a model variant (\cref{sec:eval-ext}), the lower doer SIL
made joint optima up to \resModExtDoerMaxSavingPct{}\,\% cheaper, priced
diversity up to \resModExtDiverseMaxExtraPct{}\,\% dearer, and both
together took \resModExtBothTimeRatioMed{} times the median solve time. The
cost profiles are our reading of published estimates; that the profile can
flip the architecture rests on one row of Klosterman's table
(\cref{sec:eval-rq4}). Cross-checks between pattern components are assumed
to use a dedicated link, and transfers are treated as non-safety under the
black-channel principle; both assumptions are stated in every generated
argument, and RQ4 shows that each decides a result.

Because inactive slots keep their channels, the least period of a design
can exceed the MCR of its deployment (\cref{sec:model-throughput}), so the
model can reject a design that meets a bound. This occurred in time-limited
incumbents (synthetic graphs with 32 actors and \code{m\_3app} with at
most four cores) and in one proven safety-last design under the two-token
variant; no reported result is affected. The exact variant~B confirmed
every infeasibility and all but one conclusive RQ2 and RQ4 result, but
solves more slowly. With one thread, Gecode
and HiGHS proved only \resXsolverGecodeProven{} and \resXsolverHighsProven{}
of the \resXsolverInstances{} instances of \cref{tab:results}, and Chuffed
stopped with an error, so the verifier is the main independent check. The
pattern instances come from three small applications, and exact search
reaches its limit at a few tens of HSDF actors (\cref{sec:eval-rq3}).

On the tooling side, the fault-model inference is ambiguous on small
catalogs, such as the NVP case study, where all three fault models
reproduce the admissible patterns; the generator then names none and
cannot list out-of-scope scenarios. The generator emits one argument per
instance without GSN modules or away goals, which becomes large for several
applications. Platforms with multi-core cards solve more slowly than ones
with single-core cards, because value precedence is sound only within a
card. Finally, a per-actor ``not promotable'' attribute would let a designer
forbid promotion for selected actors only, between the global switch and
free promotion.

A faster exact formulation, priced pattern-dependent SILs, verdict tokens as
decisions under a process safety time, a declared interconnect with routed
cross-checks, GSN modules, and industrial case studies are future work.

\FloatBarrier
\section{Conclusion}\label{sec:conclusion}
Whether a safety architecture pattern protects a function and fits its
period depends on where its components run and how they are scheduled.
\safedse{} decides the pattern, its deployment, and the platform in one
constraint model over a fixed superposed graph and proves the optimum within
that model. On the benchmarks, the value of joint exploration lay in
feasibility under a throughput bound: which pattern-first rule works depends
on catalog timing and pricing, and joint exploration needs no such rule.
Verifier checks discharge the argument's claims on placement, SIL
provisioning, isolation, and timing; claims on detection, common causes,
fault reaction, and the development process remain with the engineer.

\FloatBarrier
\appendices
\section{Proof of the Potential Characterization}\label{app:proof}
Fix the binding, the static orders, the pattern selection, the TDMA
allocation, and the buffer sizes. The MSAG is then a fixed digraph
$M=(V,E)$ with integer node times $T_i\ge0$ and integer tokens
$\tok_{ij}\ge0$ on each edge. Self-edges and wrap edges hold one token,
order edges none, and buffer back-edges the buffer sizes. Within one connected
component, all periods are equal by \cref{eq:coupling}, so one $\mu$
applies to every edge of the component, and the argument is per component.

\begin{proposition}[\Cref{prop:potentials}, full form]
A vector $\pot$ with $\pot_j\ge\pot_i+T_i-\tok_{ij}\,\mu$ for all
$(i,j)\in E$ exists iff every directed cycle $C$ of $M$ satisfies
$\sum_{i\in C}T_i\le\mu\sum_{(i,j)\in C}\tok_{ij}$. If $T$, $\tok$, and
$\mu$ are integers, a feasible $\pot$ exists iff an integral one does.
\end{proposition}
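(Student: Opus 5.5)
The plan is to reduce both directions to the standard longest-path characterization of difference constraints. Define edge weights $w_{ij}=T_i-\tok_{ij}\,\mu$ on $E$. The system $\pot_j\ge\pot_i+w_{ij}$ is then a system of difference constraints over the digraph $M$, and the claim becomes: it is feasible iff $M$ has no cycle of positive total weight. This is because $\sum_{(i,j)\in C}w_{ij}=\sum_{i\in C}T_i-\mu\sum_{(i,j)\in C}\tok_{ij}$ for a directed cycle $C$, where each node of $C$ is the tail of exactly one edge of $C$. I would state this bookkeeping identity first, since the node-sum $\sum_{i\in C}T_i$ in the statement must be read as a sum over the tails of the cycle's edges.

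For necessity, take a feasible $\pot$ and any directed cycle $C=(i_0,i_1,\dots,i_k=i_0)$. Summing the inequalities along its edges telescopes the potentials to zero on both sides. This leaves $0\ge\sum_{(i,j)\in C}w_{ij}$, which is the cycle condition. Self-edges are included as cycles of length one and give $T_i\le\mu$ directly.

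For sufficiency, add an auxiliary source $s$ with zero-weight edges $s\to i$ for every $i\in V$. Assuming no positive cycle, every walk from $s$ can be shortened to a simple path without decreasing its weight, since removing a cycle removes non-positive weight. Hence $\ell_j=\max$ over walks from $s$ to $j$ is attained on one of the finitely many simple paths and is finite, with $\ell_j\ge0$. The triangle property $\ell_j\ge\ell_i+w_{ij}$ follows by extending an optimal path to $i$ with the edge $(i,j)$. If $j$ already lies on that path, the extension is a walk, and the walk-maximum formulation covers it. So $\pot=\ell$ is feasible.

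For integrality, when $T$, $\tok$, $\mu$ are integers every $w_{ij}$ is an integer, so each $\ell_j$, a maximum of finite sums of integers, is an integer. Thus feasibility implies an integral solution, and the converse is trivial. I would also note the bound $\ell_j\le\sum_i T_i$: on a simple path the token terms are non-negative and each node's $T_i$ appears at most once. This gives the bounded domain used by the model after shifting by the anchor.

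The main obstacle I expect is not the Bellman--Ford argument but making sure it applies per connected component with a single $\mu$. Edges across applications carry $\mu_{z(i)}$ in \cref{eq:pot}, and only \cref{eq:coupling} guarantees these coincide on a component. I would invoke the setup paragraph preceding the statement to fix one $\mu$ per component and treat components independently; potentials of disjoint components are unconstrained relative to each other.

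A secondary subtlety is the anchoring. Shifting $\pot$ by a constant within a component preserves feasibility. With one potential anchored at zero per the model, the shifted values lie in $[-\sum_iT_i,\sum_iT_i]$, and I would check this interval matches the model's potential domain.
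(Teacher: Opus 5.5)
Your proposal is correct and follows essentially the same route as the paper's proof: telescoping along a cycle for necessity, and longest walks from an added zero-weight source for sufficiency and integrality, which are finite because cycles have non-positive weight. Your remark that anchoring one potential at zero puts the shifted values in $[-\sum_i T_i,\sum_i T_i]$ rather than $[0,\sum_i T_i]$ is slightly more careful than the paper's domain claim. Like the paper, you implicitly need $\mu\ge0$ for the upper bound $\ell_j\le\sum_iT_i$.
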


\begin{IEEEproof}
(Only if.) Let $C=(i_0,i_1,\dots,i_{m-1},i_0)$. Summing the $m$ inequalities
of the edges of $C$ gives
$\sum_k\pot_{i_{k+1}}\ge\sum_k\pot_{i_k}+\sum_kT_{i_k}-\mu\sum_k
\tok_{i_ki_{k+1}}$, indices modulo $m$. Every node of $C$ appears once as
head and once as tail, so the two potential sums are equal, and
$0\ge\sum_{i\in C}T_i-\mu\sum_{(i,j)\in C}\tok_{ij}$.

(If.) Set $w_{ij}=T_i-\mu\,\tok_{ij}$. By assumption, every cycle has
weight $w(C)\le0$. Add a node $s$ with an edge $(s,j)$ of weight zero to
every $j\in V$; $s$ has no incoming edge, so no new cycle arises. For every
$j$, let $\ell_j$ be the largest weight of an $s$--$j$ walk. It is finite:
removing a cycle from a walk does not decrease its weight, as
$w(C)\le0$, so the maximum is attained by a simple path, and there are
finitely many. Further, $\ell_j\ge0$ because of the edge $(s,j)$. For each
edge $(i,j)\in E$, a best $s$--$i$ walk followed by $(i,j)$ is an $s$--$j$
walk of weight $\ell_i+w_{ij}$, hence $\ell_j\ge\ell_i+w_{ij}$, i.e.,
$\pot=\ell$ satisfies every inequality. If $T$, $\tok$, and $\mu$ are
integers, every $w_{ij}$ and every $\ell_j$ is an integer. Shifting all
potentials by a constant keeps them feasible, so the model's anchor of one
potential at zero loses nothing, and as $0\le\ell_j\le\sum_iT_i$, the
bounded domain of the potentials in the model holds $\ell$.
\end{IEEEproof}

\emph{Corollary 1 (period).} For fixed $T$ and $\tok$, the feasible $\mu$
are exactly those with
\[
  \mu\ge\max_{C:\,\tok(C)>0}T(C)/\tok(C)=\mathrm{MCR},
\]
provided every token-free cycle has $T(C)=0$. The least real $\mu$ is the
MCR, and the least integer $\mu$ is $\lceil\mathrm{MCR}\rceil$, which the
model reports.

\emph{Corollary 2 (deadlock).} A token-free cycle with $T(C)>0$ makes the
system infeasible for every $\mu$. A deadlocking design is thus excluded by
the same inequalities, without separate deadlock machinery.

\emph{Corollary 3 (zero-weight cycles).} Token-free cycles with $T(C)=0$
satisfy the condition with $0\le0$ and do not constrain $\mu$. In
\safedse{}, they arise only through neutralized nodes. A verifier that
checks the graph with all nodes would flag them as deadlock; the verifier
therefore checks the deployed graph (\cref{sec:ver-verifier}).

\emph{Corollary 4 (relation to the deployed graph).} Remove the inactive
nodes and splice each static order around them. Every cycle of this
deployed graph maps to an MSAG cycle with the same total time and tokens. A
spliced order edge $a\to b$ becomes $a\to s\to b$ with $T_s=0$ and no
tokens on either edge, and a spliced wrap edge keeps its token. Hence the
model's least $\mu$ for a design is never below the deployed graph's MCR,
and neutralization cannot make the period optimistic. The converse does not
hold in general. A channel of an unselected pattern into an inactive slot,
followed by the slot's order edge, is an MSAG path without a counterpart in
the deployed graph. A cycle through it can raise the least $\mu$ above
the deployed MCR. Removing the channel edges at inactive endpoints gives
exactly the deployed MCR on the designs where this was observed, which is
what exact variant~B achieves inside the model (\cref{sec:model-throughput}).

\emph{Corollary 5 (other objectives).} When $\mu$ is not the objective, the
inequalities give only $\mathrm{MCR}\le\mu\le$ the period bound. The
reported $\mu$ equals $\lceil\mathrm{MCR}\rceil$ only if the search assigns
it its least value; the evaluation enforces this by the period step
(\cref{sec:eval-setup}), and the verifier checks it.

\section{Instances and Parameters}\label{app:instances}
All pattern instances except Sobel and \code{m\_3app} use the RASTA speech
front end with the required SILs of \cref{fig:example-a}: \code{frontEnd}
and \code{compJah} at SIL~3, \code{backEnd} at SIL~2, and all other actors
at SIL~1. \Cref{tab:instances} lists their other inputs. Every time value is
an integer in the benchmark's unit. On the card platforms, the WCET of an
application actor is its nominal execution time in the benchmark file times
the cycle factor of the core mode (\cref{tab:platforms}), rounded. The WCET
of a checker or a voter is \resSetupCheckerScale\,\% of its owner's nominal
time times the cycle factor, and a full replica has its owner's WCET. These
WCET tables are placeholders generated by the tool, not measurements of a
real platform. Sobel on Rosvall's platform uses the published WCETs. The
synthetic graphs of RQ3 take the generator's execution times per processor
type. Every node has the same baseline development cost
(\cref{tab:profiles}); card and core prices are illustrative
(\cref{tab:platforms}); the recurring cost units of each pattern are in
\cref{tab:catalog-full}. All costs are in one unitless cost unit.

\begin{table*}[!t]
  \caption{Inputs of the Pattern Instances}
  \label{tab:instances}
  \centering\footnotesize
  \setlength{\tabcolsep}{3pt}
  \begin{tabular*}{\textwidth}{@{\extracolsep{\fill}}l l l l l
    >{\raggedright\arraybackslash}p{2.1in}@{}}
    \toprule
    Instance & Application(s) & Platform file & Fault model & Bus &
    Remarks \\
    \midrule
    \code{p\_rasta} & RASTA & \code{mixed.xml} & H & ideal & certified
      partitioning on the R-card \\
    \code{f\_random\_hw} & RASTA & \code{mixed\_noiso.xml} & H & ideal & -- \\
    \code{d\_random\_hw} & RASTA & \code{mixed\_noiso.xml} & H & ideal &
      SIL~3 actors lowered to SIL~2 \\
    \code{d\_systematic\_sw} & RASTA & \code{mixed\_noiso.xml} & S & ideal &
      SIL~3 actors lowered to SIL~2 \\
    \code{f\_sw3} & RASTA & \code{mixed\_3type.xml} & S & ideal & -- \\
    \code{f\_both3} & RASTA & \code{mixed\_3type.xml} & H+S & ideal & -- \\
    \code{c\_pat} & RASTA & \code{mixed\_noiso.xml} & H & TDMA & -- \\
    \code{pc\_sobel} & Sobel & Rosvall's & H & TDMA & \code{get\_pixel} at
      SIL~3, others SIL~1; period bound loosened for the cost runs \\
    \code{v\_nvp} & RASTA & \code{mixed\_4type.xml} & H+S & ideal &
      separate catalog with the NVP pattern \\
    \code{m\_3app} & Sobel, SUSAN, RASTA & \code{mixed\_noiso.xml} & H & ideal
      & Sobel at SIL~2, SUSAN at SIL~0; one period per application \\
    \bottomrule
  \end{tabular*}

  \vspace{2pt}\parbox{\textwidth}{\scriptsize All instances use the catalog
  \code{patterns.yaml} unless noted and the cost profile of Myklebust et al.
  RQ4 also uses the running example without patterns on
  \code{mixed\_noiso.xml} (cost profiles, price sweep) and on
  \code{mixed\_costly.xml} (\cref{fig:profcons}). The variants of RQ2 replace
  only the catalog (\code{tok2}) or the WCET tables (\code{chk10},
  \code{chk50}).}
\end{table*}

\section{Artifact Map}\label{app:artifact}
Every number of this report comes from CSV files that the experiment
scripts wrote on the machine of \cref{sec:eval-setup}. The script
\code{make\_macros.py} turns them into the macros used in the text and in
the data files of the plots; \code{make\_report.py} writes the
tables and plot data that only this report uses. A row enters a number only
if it is optimal (or proven infeasible) and passed the verifier.
\Cref{tab:artifact} lists, per experiment, the driver, its CSV files, and
where the results appear. The tool itself (model, front-end, verifier,
generator, regression tests, and data) is in the artifact repository
(\artifacturl). The experiment scripts, input variants, generated synthetic
graphs, CSV files, and the generated tables and plot data are in the
experiments repository (\experimentsurl).

\begin{table*}[!t]
  \caption{Experiments, Drivers, Data, and Where They Appear}
  \label{tab:artifact}
  \centering\footnotesize
  \setlength{\tabcolsep}{3pt}
  \begin{tabular*}{\textwidth}{@{\extracolsep{\fill}}l
    >{\raggedright\arraybackslash}p{1.5in}
    >{\raggedright\arraybackslash}p{1.75in}
    >{\raggedright\arraybackslash}p{1.6in}@{}}
    \toprule
    Experiment & Driver & CSV (\code{experiments/results/}) & Used in \\
    \midrule
    RQ1 benchmarks & \code{harness.py rq1} & \code{e1\_parity} &
      \cref{tab:rq1-hw} \\
    RQ1 \desyde{} & \code{desyde\_cmp.py}, \code{desyde/build.sh} &
      \code{e7\_desyde}, \code{e7\_safedse} & \cref{tab:desyde} \\
    RQ2 flows & \code{harness.py rq2}, \code{rq2tight}, \code{rq2thr},
      \code{rq2var}, \code{rq2reps}, \code{pftable} &
      \code{e2\_joint}, \code{e2\_tight}, \code{e2\_thr}, \code{e2\_var},
      \code{e2\_reps}, \code{e2\_pfchoice} &
      \cref{fig:joint,fig:worse,fig:thr},
      \cref{tab:rq2-nobound,tab:rq2-all,tab:rq2-thr,tab:pfchoice} \\
    RQ2 certificates & \code{certify\_pf.py} & \code{e2\_cert} &
      \cref{sec:eval-rq2} \\
    RQ2 exactness & \code{exact\_model.py}, \code{exact\_check.py} &
      \code{e2\_exact}, \code{e2\_exact\_sl} & \cref{sec:eval-rq2} \\
    Three applications, promotion & \code{breadth.py multi}, \code{patcons} &
      \code{e5\_multi}, \code{e5\_multi\_exact}, \code{e5\_multi\_exact\_long},
      \code{e4\_patcons} & \cref{tab:multi,tab:patcons} \\
    Model extensions & \code{modelext\_build.py}, \code{modelext.py} & \code{e8\_modelext} &
      \cref{tab:modelext,tab:modelext-flows} \\
    RQ3 & \code{gen\_synthetic.py}, \code{harness.py rq3},
      \code{breadth.py flat} & \code{e3\_scal}, \code{e3\_flat} &
      \cref{fig:scalability,fig:flat}, \cref{tab:scal} \\
    RQ4 sensitivity & \code{harness.py rq4}, \code{breadth.py price},
      \code{comm} & \code{e4\_sens}, \code{e4\_price}, \code{e4\_comm} &
      \cref{fig:sensitivity,fig:profcons}, \cref{tab:price,tab:comm} \\
    RQ4 arguments & \code{gsn\_stats.py} & \code{e4\_gsn},
      \code{gsn/*.json} & \cref{tab:gsn}, \cref{fig:gsn,fig:gsn-compjah} \\
    Pipeline, solvers & \code{breadth.py pipe}, \code{xsolver} &
      \code{e4\_pipe}, \code{e4\_xsolver} & \cref{tab:pipe,tab:xsolver} \\
    Mutation analysis & \code{mutation.py} & \code{e6\_mut},
      \code{e6\_oracle}, \code{e6\_recheck} &
      \cref{tab:mut-ops,tab:mut-model} \\
    \bottomrule
  \end{tabular*}

  \vspace{2pt}\parbox{\textwidth}{\scriptsize Drivers are in
  \code{experiments/}; each is resumable and writes one CSV row per run.
  The catalog, scenario, tactic, profile, and platform tables are generated
  from the tool's data files; the GSN excerpts by
  \code{make\_gsn\_figure.py} from the stored arguments.}
\end{table*}

\FloatBarrier
\section*{Acknowledgment}
The authors used the generative AI tools Claude (Anthropic), DeepSeek, and
Gemini (Google) to test and refactor the software and to improve the text of
the paper and of this report. The authors reviewed all results of
this use and take full responsibility for the content.

\bibliographystyle{IEEEtran}
\bibliography{IEEEabrv,refs,refs-report}

\end{document}